\numberwithin{equation}{section}
\DeclareMathAlphabet{\dutchcal}{U}{dutchcal}{m}{n}
\SetMathAlphabet{\dutchcal}{bold}{U}{dutchcal}{b}{n}
\DeclareMathAlphabet{\dutchbcal} {U}{dutchcal}{b}{n}
\newtheorem{theorem}{Theorem}
\newtheorem{lemma}[theorem]{Lemma}
\newtheorem{corollary}[theorem]{Corollary}
\newtheorem{proposition}[theorem]{Proposition}
\newtheorem{definition}[theorem]{Definition}
\begin{document}

\title{Locally-Measured R\'enyi Divergences}
\author{Tobias Rippchen\footnote{tobias.rippchen@rwth-aachen.de} , Sreejith Sreekumar, Mario Berta} 
\date{Institute for Quantum Information\\
                    RWTH Aachen University}
\maketitle

\begin{abstract}
    We propose an extension of the classical R\'enyi divergences to quantum states through an optimization over probability distributions induced by restricted sets of measurements. In particular, we define the notion of locally-measured R\'enyi divergences, where the set of allowed measurements originates from variants of locality constraints between (distant) parties $A$ and $B$. We then derive variational bounds on the locally-measured R\'enyi divergences and systematically discuss when these bounds become exact characterizations. As an application, we evaluate the locally-measured R\'enyi divergences on variants of highly symmetric data-hiding states, showcasing the reduced distinguishing power of locality-constrained measurements. For $n$-fold tensor powers, we further employ our variational formulae to derive corresponding additivity results, which gives the locally-measured R\'enyi divergences operational meaning as optimal rate exponents in asymptotic locally-measured hypothesis testing.
\end{abstract}




\section{Introduction}

In classical information theory, R\'enyi divergences \cite{Renyi} are fundamental quantities in one-shot information theory with direct operational meaning in hypothesis testing \cite{Csiszar}. They also serve as a parent quantity to many other relevant entropic measures. Hence, it is natural to investigate quantum extensions of these divergences. Due to the non-commutative nature of quantum physics, however, multiple plausible extensions exist (cf.\ e.g.\ \cite{Müller-Lennert, Wilde, Petz} and \cite[Chapter 4]{Tomamichel}). In this work, we follow a measurement-based approach to extend the classical R\'enyi divergence into the quantum domain. The particular method based on restricted measurement sets was previously applied to the relative entropy \cite{Piani} and the Schatten one-norm \cite{Matthews}. The process works as follows: the measurement turns a given quantum state \(\rho\) into a classical measure \(\mu_\rho^M\), to which we can apply the classical R\'enyi divergence. We then maximize the outcome over the set of allowed measurements \(\mathcal{M} \subseteq \text{ALL}\). This leads us to define the measured R\'enyi divergences for $\alpha>0$ as
\begin{equation}
    \text{$D_{\alpha}^{\mathcal{M}} \left(\rho\middle\|\sigma\right) := \sup_{M\in\mathcal{M}} D_\alpha \left( \mu_\rho^M \middle\| \mu_\sigma^M \right)$ with the classical $D_{\alpha}(\mu\|\nu):=\frac{1}{\alpha-1}\log\sum_{z\in \mathcal{Z}} \mu(z)^\alpha \nu(z)^{1-\alpha}$,}
\end{equation}
and the respective limits $\alpha\to1,+\infty$ (see Sec.\ \ref{Sec:Preliminaries} for precise definitions).

We put a particular focus on measurement sets defined by various levels of locality constraints, which give rise to the locally-measured R\'enyi divergences. This definition is motivated by an operational task that arises in quantum hypothesis testing. For this, consider two distant laboratories that want to differentiate between two hypotheses modeled by bipartite quantum states. To achieve this, the laboratories conduct measurements on the states. The implementation of global measurements, however, might not be feasible and, consequently, they are limited to local measurements, potentially supported by additional classical communication. Naturally, the question now arises as to how well this restricted set performs compared to all measurements. A particularly interesting example of states in this context are data-hiding states (cf.\ \cite{Terhal, Eggeling, DiVincenzo, DiVincenzo2}). These are perfectly distinguishable under global measurements but are locally almost indistinguishable. The study of this problem is an active area of research in the field of quantum information (see, e.g.,\ \cite{Matthews, Cheng, Brandao, Matthews2, Owari1, Owari2, Hayashi1, Calsamiglia1,Calsamiglia2,Nathanson,Yu1,Yu2,LiWang,Akibue}). Besides hypothesis testing, one further motivation for the study of these quantities is that correlation measures based on locally-measured quantities can be the crucial ingredient for lifting classical to quantum entropy inequalities (see, e.g., \cite{Piani, Matthews, Berta2, Li, Brandao2, Li2}).

In this work, we show that under certain additivity conditions the locally-measured R\'enyi divergences attain operational significance as optimal rate exponents in the strong converse regime of restricted hypothesis testing. We show that these conditions are satisfied for examples of highly symmetric data-hiding states and compute the Stein's as well as the strong converse exponent under locality-constrained measurements for these states. The main technical tool we employ for this are exact variational characterizations. In this context, the most useful result is the variational characterization of the PPT-measured max-divergence given by (see Sec.\ \ref{Sec:Variational} for precise definitions)
\begin{align}
     D_{\max}^\text{PPT} \left(\rho \middle\| \sigma \right) = \log \sup_{\stackrel{\omega>0}{\omega^\Gamma \geq 0}}  \bigg\{  \tr[\rho \omega] \, \bigg| \,  \tr[\sigma \omega] = 1 \bigg\} = \log \inf_{\stackrel{\lambda >0}{X,Y \in \mathcal{P}}} \bigg\{  \lambda \, \bigg| \, \lambda \sigma - \rho =  X + Y^\Gamma \bigg\} \, .
\end{align} 
The paper is structured as follows. In Sec.\ \ref{Sec:Preliminaries}, we formally define the measured R\'enyi divergences and make the connection to the existing literature. We then survey the mathematical properties of these divergences in Sec.\ \ref{Sec:Properties}. In particular, we show that suitable modifications of the original axioms, used by R\'enyi in \cite{Renyi} to derive the classical quantity, hold for the measured quantities. Sec.\ \ref{Sec:Variational} is devoted to the main technical result of this paper, which gives an upper bound on the divergences in terms of variational formulae. This generalizes a result for \(\alpha=1\) of \cite[Lemma 3]{Berta2}. We then discuss the conditions of when these bounds are tight and show that an exact characterization is available for two classes. We further prove  that in general there is a gap for the locality-constrained sets, which answers a question left open even for $\alpha=1$ in \cite[Section 2.2]{Berta2}. Moreover, we derive a dual program at the point \(\alpha=+\infty\). Lastly, in Sec.\ \ref{Sec:DataHiding}, we connect the locally-measured R\'enyi divergences to the problem of restricted hypothesis testing. For two families of states, we then employ the variational characterization to explicitly compute the measured R\'enyi divergences and use these results to give the Stein's and strong converse exponent (see Sec.\ \ref{Sec:Isotropic_States} and App.\ \ref{App:Werner_States}). As one example, we obtain for isotropic states with \(q \leq \frac{1}{d^2}\) that
\begin{align}
    &\zeta_\text{Stein}^\mathcal{M}\left(\Phi,\dutchcal{i}(q);\varepsilon\right) = \log(\frac{d+1}{qd+1}) && \text{and} & \zeta_\text{SC}^\mathcal{M}\left(\Phi,\dutchcal{i}(q);r\right) = r- \log(\frac{d+1}{qd+1}) 
\end{align}
for all \(\varepsilon\in(0,1)\) and all \(r \geq \log(\frac{d+1}{qd+1}) \) (see Sec.\ \ref{Sec:DataHiding} for precise definitions). 

This shows in particular that the single-letter characterization of testing $\Phi ^{\otimes n}$ against $(\Phi^{\perp})^{\otimes n}$ from \cite[Corollaries 11 \& 13]{Cheng} stays stable when going (not too far) away from the extremal states. We emphasize that the central insight is based on the additivity of the corresponding locally-measured R\'enyi divergences\,--\,which we derive via its variational characterizations. Note that similar additivity questions have also been asked in, e.g., \cite{Li, Cheng, Matthews2}.


\section{Definitions}\label{Sec:Preliminaries} 

\subsection{Notation}

Let us first introduce some notation that will be used throughout this manuscript. We consider quantum systems of finite\footnote{The sole exception is App.\ \ref{App:Infinite-Dimensional_Version}, where we discuss an infinite-dimensional extension in a self-contained manner.} dimension \(d \in \mathbb{N}\) to which we associate a Hilbert space \(\dutchcal{H} \simeq \mathbbm{C}^d\). If the quantum system consists of \(N\) subsystems, the Hilbert space factors as \( \dutchcal{H}:= \bigotimes_{i=1}^N \dutchcal{H}_i \), where \(\dutchcal{H}_i\) denotes the space of the \(i\)-th subsystem. We use capital letters \(A, B, C\), etc.\ to label these Hilbert spaces. Further, any indices \(x,y,z\) we will use are always meant to be taken from some implicitly defined finite alphabets \(\mathcal{X}\), \(\mathcal{Y}\) and \(\mathcal{Z}\). 

Linear operators \(X\) are labeled by subscripts to indicate which space they act on if its not clear from context. Moreover, when we introduce an operator \(X_{AB}\) acting on \(A \otimes B\) we implicitly also introduce its marginals \(X_A\) and \(X_B\) defined via the respective partial traces of \(X_{AB}\) over \(B\) and \(A\), respectively. Further, we use \(\mathcal{H} \) to denote the set of hermitian linear operators on \(\dutchcal{H}\) and extend real functions to these operators in the usual way by applying them on the spectrum of the operator coinciding with the domain of the function. Moreover, we use \(\geq\) and \(>\) to denote the Löwner order on operators, e.g.\ an operator \(X\) has full support if and only if \(X > 0\) and is positive semi-definite if and only if \(X \geq 0\). Let \(\mathcal{P}\) denote the set of positive semi-definite operators acting on \(\dutchcal{H}\) and \(\mathcal{S}\) the subset of states, i.e.\ positive semi-definite operators with unit trace. Lastly, \(\mathcal{U}\) denotes the set of unitary operators on \(\dutchcal{H}\).

Another fundamental concept which we need in the following is that of a measurement on the quantum system. These are described by a positive operator-valued measure (POVM) \(M\) over a finite alphabet \(\mathcal{Z}\).  Formally, \(M\) is a map from \(\mathcal{Z}\) to \(\mathcal{P}\) which satisfies \(M^z\geq0\) and \(\sum_{z \in \mathcal{Z}} M^z = 1_\dutchcal{H} \), where \( 1_\dutchcal{H} \) denotes the unit operator on \(\dutchcal{H}\). If, additionally, all elements of the POVM are projectors, we call it a projection-valued measure (PVM). We denote by ALL the set of all possible measurements that can be performed on the quantum system.  For a given \(\rho \in \mathcal{P}\), the POVM induces a positive measure \( \mu_\rho^M\) over \(\mathcal{Z}\) according to the Born rule: \(\mu_\rho^M(z) = \tr[\rho M^z] \). The measurement can also be viewed as a completely positive map from the set of positive operators to measures via
\begin{equation}\label{Eq:Measurement_Map}
    M : \rho \mapsto \sum_{z\in\mathcal{Z}} \ketbra{z}{z} \Tr[\rho M^z] \, ,
\end{equation}
where the measure is represented as an operator diagonal in the "label" basis \(\{\ket{z}\}\) of the space \(\mathbbm{C}^{|\mathcal{Z}|}\).


\subsection{Measured R\'enyi Divergence}

Let us start by recalling the definition of the classical R\'enyi divergence. Given two probability distributions \(\mu\) and \(\nu\) over a finite alphabet \(\mathcal{Z}\), we define the functional \(Q_\alpha\) in terms of an order parameter \(\alpha\in(0,1)\) as
\begin{equation}\label{Def:Q_Quantity}
    Q_\alpha\left(\mu\middle\|\nu\right) := \sum_{z\in \mathcal{Z}} \mu(z)^\alpha \nu(z)^{1-\alpha} \, .
\end{equation}
Observe that if \(\mu\perp\nu\), i.e.\ \(\mu\) is orthogonal to \(\nu\),\footnote{\(\mu\perp\nu\) iff there exists \(Z\subseteq\mathcal{Z}\) such that \(\mu(Z)=1\) and \(\nu(Z)=0\).} we have \(Q_\alpha(\mu\|\nu) = 0\). For orders \(\alpha > 1 \), Eq.\ \eqref{Def:Q_Quantity} diverges if \(\mu\not\ll\nu\), i.e.\ \(\mu\) is not absolutely continuous with respect to \(\nu\).\footnote{\(\mu\ll\nu\) iff for all \(z\in\mathcal{Z}\) \(\nu(z)=0\) implies \(\mu(z) = 0\).} Consequently, we employ Eq.\ \eqref{Def:Q_Quantity} to define \(Q_\alpha\) for orders \(\alpha\in(1,\infty)\) if \(\mu\ll\nu\) and else set \(Q_\alpha(\mu\|\nu)=+\infty\). Using this auxiliary quantity, we then define the R\'enyi divergence \cite{Renyi} of order \(\alpha\in(0,1)\cup(1,\infty)\) as
\begin{equation}\label{Def:Renyi_Divergence}
    D_\alpha\left(\mu\middle\|\nu\right) := \frac{1}{\alpha-1} \log Q_\alpha(\mu\|\nu) \, ,
\end{equation}
if either \(\alpha\in(0,1)\) and \(\mu\not\perp\nu\) or \(\alpha\in(1,\infty)\) and \(\mu\ll\nu\). Otherwise, we set \(D_\alpha(\mu\|\nu)=+\infty\). This definition is extended to the orders \(\alpha=1\) and \(\alpha=\infty\) based on a continuity argument. In the limit \(\alpha\to1\), Eq.\ \eqref{Def:Renyi_Divergence} converges to the well-known Kullback-Leibler divergence \cite{KullbackLeibler} defined as
\begin{align}
    D\left(\mu\middle\|\nu\right) := \left\{ 
    \begin{array}{ll}
        \sum_{z\in\mathcal{Z}} \mu(z) \log \left(\frac{\mu(z)}{\nu(z)}\right) & \text{if} \, \, \mu\ll\nu \\
        +\infty & \text{else}
    \end{array}
    \right. \, .
\end{align} 
Therefore, we identify \(D_1(\mu\|\nu) := D(\mu\|\nu)\). For \(\alpha \to \infty\), we obtain the max-divergence defined as
\begin{align}
    D_{\max}\left(\mu\middle\|\nu\right) := \sup_{z \in \mathcal{Z}} \log\left(\frac{\mu(z)}{\nu(z)}\right) 
\end{align}
and thus we set \(D_\infty(\mu\|\nu) := D_{\max}(\mu\|\nu)\). 

We are now set to define the measured R\'enyi divergences by lifting the classical quantity.
\begin{definition}\label{Def:Measured_Renyi_Divergence}
    For two quantum states \(\rho\in\mathcal{S}\) and  \(\sigma\in\mathcal{S}\), the measured R\'enyi divergence of order \(\alpha>0\) with respect to a subset of POVMs \(\mathcal{M}\subseteq\text{ALL}\) is defined as
    \begin{align}\label{Eq:Measured_Renyi_Divergence}
        D_{\alpha}^{\mathcal{M}} \left(\rho\middle\|\sigma\right) := \sup_{M\in\mathcal{M}} D_\alpha \left( \mu_\rho^M \middle\| \mu_\sigma^M \right) \, .
    \end{align}
\end{definition}

Mirroring the classical definition, we may also rewrite the measured R\'enyi divergence for orders \(\alpha\in(0,1)\cup(1,\infty)\) as
\begin{equation}\label{Def:Measured_Q_Connection}
    D_{\alpha}^\mathcal{M} \left(\rho \middle\| \sigma \right) = \frac{1}{\alpha-1} \log Q_{\alpha}^{\mathcal{M}} \left(\rho \middle\| \sigma \right) \, ,
\end{equation}
where we defined \( Q_{\alpha}^{\mathcal{M}}\) for \(\alpha \in(0,1)\) as
\begin{equation}
    Q_{\alpha}^{\mathcal{M}} \left(\rho\middle\|\sigma\right) := \inf_{M\in\mathcal{M}} Q_\alpha \left( \mu_\rho^M\middle\|\mu_\sigma^M\right)
\end{equation} 
and for \(\alpha\in(1,\infty)\) as the same expression but with the infimum replaced by a supremum. 


\subsection{Locally-Measured R\'enyi Divergences}\label{Sec:Local_Measurement_Sets} 

If the quantum system under consideration consists of multiple parties, we can identify some special classes of restricted measurements that are of particular interest in quantum information theory. These are the ones that are defined by locality constraints, i.e.\ by the local operations and classical communication (LOCC) paradigm. We focus our study on the special case of bipartite systems, where only two parties \(A\) and \(B\) are involved. The extension of these classes to multiple parties is mostly straightforward (see e.g.\ \cite{Matthews, Chitambar}). 

The most restrictive constraint that still has operational significance is given by the set LO\((A:B)\). Here, \(A\) and \(B\) are only allowed to perform a measurement on their part of the system and no classical communication is permitted during the measurement procedure. A general POVM element of this class is of product form, i.e.\ it may be written as
\begin{equation}
   M_{AB}^{(x,y)} = M_A^x \otimes M_B^y \, ,
\end{equation}
where \( \{ M_A^x\}_{x} \) and \( \{ M_B^y \}_{y} \) are POVMs on the \(A\) and \(B\) system, respectively. We additionally consider a subset of this set denoted as P-LO\((A:B)\), where we only allow for projective measurements. 

If we allow a single round of classical communication from \(A\) to \(B\) to aid the measurement process, we obtain the set LOCC\(_1(A\to B)\). Here, \(A\) measures her system, communicates the outcome to \(B\), and, conditional on this outcome, \(B\) performs his measurement. Without loss of generality, we can write a POVM element of this class in the following form 
\begin{equation}
   M_{AB}^{(x,y)} = M_A^x \otimes M_B^{y|x} \, ,
\end{equation}
where \( \{ M_A^x\}_{x } \) and \( \{ M_B^{y|x} \}_{y } \) are POVMs.\footnote{See also \cite[Lemma 4]{Brandao2} for an equivalent definition.} As above, we can also define a projective version of this set, denoted P-LOCC\(_1(A:B)\), where we allow only for PVMs to be performed.

Now, allowing for ever more rounds of classical communication between the parties would yield ever bigger sets of the LOCC type \cite{Chitambar}. Generally, the set LOCC\((A:B)\) is defined as all POVMs that can be implemented by local operations and arbitrary amounts of classical communication. This set, although operationally well-defined, is mathematically difficult to characterize (cf.\ also \cite{Chitambar}). Therefore, one often looks at relaxations that admit a simpler mathematical structure. 

In this manuscript, we consider two such relaxations, namely the sets SEP\((A:B)\) and PPT\((A:B)\). The set SEP\((A:B)\) consists of POVMs whose elements are separable operators. Without loss of generality, such an element is given by 
\begin{equation}
    M_{AB}^z = M_A^{z} \otimes M_B^{z} \, ,
\end{equation}
where \( M_A^{z} \) and \( M_B^{z} \) are positive semi-definite operators. Furthermore, the set PPT\((A:B)\) consists of POVMs whose elements have positive partial transpose (PPT), i.e.\ they satisfy 
\begin{equation}
    \left( M_{AB}^z \right)^\Gamma \geq 0 \, ,
\end{equation}
where \((\cdot)^\Gamma\) denotes the transposition w.r.t.\ the \(B\)-system.\footnote{Note that since the partial transpositions w.r.t.\ the \(A\) and \(B\) system are related to each other via the full transposition, this also implies positivity of the partial transpose on \(A\).} 

The following inclusions are trivial
\begin{equation}
    \text{LO} \subseteq \text{LOCC}_1 \subseteq \text{LOCC} \subseteq \text{SEP} \subseteq \text{PPT} \subseteq \text{ALL} \, ,
\end{equation}
and all of them are known to be strict if the dimension of the system is large enough \cite{Chitambar}. In the following, we refer to the induced measured R\'enyi divergences of the LOCC type as locally-measured. By the above inclusions, we also immediately get the chain of inequalities
\begin{equation}
    D_\alpha^\text{LO} \leq D_\alpha^\text{LOCC\(_1\)} \leq D_\alpha^\text{LOCC} \leq  D_\alpha^\text{SEP} \leq D_\alpha^\text{PPT} \leq D_\alpha^\text{ALL} \, .
\end{equation}


\subsection{Connection to Quantum R\'enyi Divergences}

Before we continue, let us point out known connections of the measured R\'enyi divergence to other quantum generalizations studied in the literature. The two most common extensions are the sandwiched R\'enyi divergence \cite{Müller-Lennert, Wilde} defined as
\begin{equation}
    \Tilde{D}_\alpha(\rho\|\sigma) := \left\{ \begin{array}{ll}
        \frac{1}{\alpha-1} \log\Tr[\left(\sigma^\frac{1-\alpha}{2\alpha}\rho\sigma^\frac{1-\alpha}{2\alpha}\right)^\alpha] & \text{if} \; (\alpha \in (0,1) \wedge \rho\not\perp\sigma)\vee \rho\ll\sigma \\
        +\infty & \text{else}
    \end{array} 
    \right.
\end{equation}
and the one due to Petz \cite{Petz} defined as 
\begin{equation}
    \bar{D}_\alpha(\rho\|\sigma) := \left\{ \begin{array}{ll}
        \frac{1}{\alpha-1} \log\Tr[\rho^\alpha\sigma^{1-\alpha}] & \text{if} \; (\alpha \in (0,1) \wedge \rho\not\perp\sigma)\vee\rho\ll\sigma  \\
        +\infty & \text{else}
    \end{array} \right. \, .
\end{equation}
In the limit \(\alpha\to1\), both of these converge to the Umegaki relative entropy \cite{Umegaki} given by
\begin{equation}
    D(\rho\|\sigma) := \left\{ \begin{array}{ll}
        \Tr[ \rho (\log\rho-\log\sigma) ] & \text{if} \, \rho\ll\sigma  \\
        +\infty & \text{else}
    \end{array} \right.
\end{equation}
and for \(\alpha\to\infty\) we obtain the quantum max-divergence \cite{Datta, Renner}
\begin{equation}\label{Eq:Quantum_Max_Divergence}
    D_{\max} \left(\rho \middle\| \sigma \right) := \log \inf_{\lambda >0} \bigg\{ \lambda \; \bigg| \; \lambda\sigma-\rho\geq0 \bigg\} \, .
\end{equation}

We can then write the measured R\'enyi divergence equivalently as 
\begin{align}
    D_\alpha^\mathcal{M}(\rho\|\sigma) &= \sup_{M\in\mathcal{M}}\tilde{D}_\alpha \left(M(\rho)\middle\|M(\sigma)\right) & \text{or} && D_\alpha^\mathcal{M}(\rho\|\sigma) &= \sup_{M\in\mathcal{M}}\bar{D}_\alpha \left(M(\rho)\middle\|M(\sigma)\right) \, ,
\end{align}
where the maximization is in terms of the measurement maps as defined in Eq.\ \eqref{Eq:Measurement_Map}. To verify this, note that the operators commute after application of the measurement map and both extensions reduce to the classical quantity on commuting states. This gives a top-down approach to define the measured R\'enyi divergence that complements our bottom-up approach of lifting the classical quantity.

The sandwiched R\'enyi divergence is known to satisfy a data-processing inequality under measurements for \(\alpha\geq1/2\) \cite[Appendix A]{Mosonyi},\footnote{It is not known whether this holds for \(\alpha\in(0,1/2)\).} i.e.\ we have
\begin{equation}
    \tilde{D}_\alpha(M(\rho)\|M(\sigma)) \leq \tilde{D}_\alpha(\rho\|\sigma) \, ,
\end{equation}
which immediately gives us
\begin{equation}\label{Eq:Comparison_Quantum}
    D_\alpha^\mathcal{M}(\rho\|\sigma) \leq \tilde{D}_\alpha(\rho\|\sigma) \; \; \text{for} \; \; \alpha \geq \frac{1}{2} \, .
\end{equation}
Moreover, the Petz divergence is known to be monotone under measurements for all \(\alpha>0\) \cite[Appendix A]{Mosonyi}, i.e.\ we get the analogous inequality \(D_\alpha^\mathcal{M} \leq \bar{D}_\alpha\) for all orders \(\alpha>0\). Let us emphasize that the inequality in Eq.\ \eqref{Eq:Comparison_Quantum} is in general strict even if no restrictions on the measurements are made. In fact, we know from \cite[Theorem 6]{Berta1} that for \(\alpha \in (1/2,\infty) \), the inequality is strict if \([\rho,\sigma] \not = 0\) and \(\tilde{D}_\alpha(\rho\|\sigma)<+\infty\). This differs from the measured norms of \cite{Matthews}, defined via the trace norm \(\norm{X}_1 := \tr|X|\) as 
\begin{equation}
    \norm{\rho}_\mathcal{M} := \sup_{M\in\mathcal{M}} \norm{M(\rho)}_1 \, ,
\end{equation}
where it is known that \( \norm{\cdot}_\text{ALL}=\norm{\cdot}_1\) holds. There are, however, two noteworthy exceptions to this rule. First, the measured R\'enyi divergence at order \(\alpha = 1/2\) is connected to the measured fidelity via the relationship
\begin{equation}\label{fidelity_connect}
    D_{1/2}^{\mathcal{M}} \left(\rho \middle\| \sigma \right)=- 2 \log F^\mathcal{M}(\rho,\sigma)\, ,
\end{equation}
where the measured fidelity is defined as 
\begin{equation}
    F^\mathcal{M}(\rho,\sigma) := \inf_{M \in \mathcal{M}} Q_{1/2} \left( \mu_\rho^M \middle\| \mu_\sigma^M \right) \,.
\end{equation}
It is known that \(F=F^\text{ALL}\) holds for the quantum fidelity \(F(\rho,\sigma) := \tilde{Q}_{1/2}(\rho\|\sigma)\) \cite[Appendix A]{Mosonyi}, which in turn implies \(D_{1/2}^\text{ALL} = \Tilde{D}_{1/2} \). Second, the quantum max-divergence is known to be achievable by a measurement as well \cite[Appendix A]{Mosonyi}, i.e.\ we also have the equality \(D_\infty^\text{ALL}=D_{\max}\).


\subsection{Related Concepts}

Lastly, let us make some remarks about other connections to previously-studied quantities in the literature. First, the definition of measured norms by Matthews et.\ al.\ in \cite{Matthews} motivated our definition of the measured R\'enyi divergence. At \(\alpha=1\), our definition recovers the measured relative entropy \(D^{\mathcal{M}}\) introduced by Piani in \cite{Piani} and further studied by Berta and Tomamichel in \cite{Berta2}. Here, the case of unrestricted measurements, that is the quantity \(D_1^{\text{ALL}}\), corresponds to the most-well-known notion of measured relative entropy, originally studied by Donald \cite{Donald} as well as Hiai and Petz \cite{Hiai}. The quantity \(D_\alpha^{\text{ALL}}\) for \(\alpha>0\) was investigated by Berta et.\ al.\ in \cite{Berta1}. Mosonyi and Hiai studied its restriction to binary measurements in \cite{Mosonyi2}. If the measurement set \(\mathcal{M}\) consists only of a single POVM, we can recover the generalized \(\alpha\)-observational entropies of Sinha and Aravinda \cite{Sinha}. Moreover, note that \(\alpha=\infty\) gives the measured max-divergence \( D_{\max}^{\mathcal{M}}\). We show in Sec.\ \ref{Sec:Measured_Max_Divergence} that these are connected to the cone-restricted max-divergence of George and Chitambar \cite{George}.


\section{Mathematical Properties}\label{Sec:Properties}

In this section, we investigate the mathematical properties of the measured R\'enyi divergences. Although the measured R\'enyi divergence is not a quantum divergence in the axiomatic sense defined by R\'enyi \cite{Renyi} (cf.\ also \cite[Chapter 4]{Tomamichel}), we show in the following that adapted versions of his axioms do hold. Our proofs are based on properties of the classical R\'enyi divergence and for an overview of them we refer to \cite[Chapter 4]{Tomamichel} (see also \cite{Csiszar}).


\subsection{Positivity, Monotonicity, Convexity and Continuous Extension}

Let us start with some basic properties that the measured R\'enyi divergences share with their classical counterpart and that hold independent of the underlying measurement set \(\mathcal{M}\). 
\begin{lemma}\label{Lem:General_Properties}
Let \(\rho, \sigma \in \mathcal{S}\), \( \mathcal{M} \subseteq \text{ALL}\) and \(\alpha > 0\). The following hold:
\begin{enumerate}
    \item \(D_{\alpha}^{\mathcal{M}} \left(\rho \middle\| \sigma \right) \geq 0 \),
    \item \(D_{\alpha}^{\mathcal{M}} \left(\rho \middle\| \sigma \right)\) is a non-decreasing function in \(\alpha\),
    \item \(D_1^\mathcal{M}\left(\rho \middle\| \sigma \right) = \sup_{\alpha\in(0,1)} D_\alpha^\mathcal{M}\left(\rho \middle\| \sigma \right)\) and \(D_\infty^\mathcal{M}\left(\rho \middle\| \sigma \right) = \sup_{\alpha\in(1,\infty)} D_\alpha^\mathcal{M}\left(\rho \middle\| \sigma \right)\),
    \item \(Q_{\alpha}^{\mathcal{M}} \left(\rho \middle\| \sigma \right)\) is jointly concave in \((\rho,\sigma)\) for \(\alpha < 1\) and jointly convex for \(\alpha>1\) and therefore \(D_{\alpha}^{\mathcal{M}} \left(\rho \middle\| \sigma \right)\) is jointly convex for \(\alpha \leq 1\) and jointly quasi-convex for \(\alpha>1\).
\end{enumerate}
\end{lemma}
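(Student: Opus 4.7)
The strategy for all four parts is uniform: for every fixed POVM $M$, the map $(\rho,\sigma)\mapsto(\mu_\rho^M,\mu_\sigma^M)$ is affine-linear since $\mu_\rho^M(z)=\tr[\rho M^z]$ is linear in $\rho$. Consequently, the pointwise-in-$M$ quantity $D_\alpha(\mu_\rho^M\|\mu_\sigma^M)$ inherits each claimed property directly from the corresponding classical fact (nonnegativity, monotonicity in $\alpha$, convergence to $D_1$ and $D_\infty$ in the appropriate limits, and joint concavity/convexity of $Q_\alpha$; all collected in \cite[Chapter 4]{Tomamichel}). One then aggregates over $M$ by a supremum, and the respective stability property of suprema does the rest.

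Parts (1) and (2) are immediate on this reading: a pointwise supremum of nonnegative functions is nonnegative, and a pointwise supremum of non-decreasing functions of $\alpha$ is non-decreasing in $\alpha$. For part (3), the classical identities
\[
D_1(\mu\|\nu)=\sup_{\alpha\in(0,1)}D_\alpha(\mu\|\nu) \qquad\text{and}\qquad D_\infty(\mu\|\nu)=\sup_{\alpha\in(1,\infty)}D_\alpha(\mu\|\nu)
\]
plus the trivial commutation of two suprema $\sup_M\sup_\alpha=\sup_\alpha\sup_M$ give the desired formulae at $\alpha=1$ and $\alpha=\infty$.

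For part (4), the affine-linearity of the measurement map means joint concavity/convexity of the classical $Q_\alpha$ transfers to joint concavity/convexity of $(\rho,\sigma)\mapsto Q_\alpha(\mu_\rho^M\|\mu_\sigma^M)$ for every $M$. For $\alpha\in(0,1)$ this makes $Q_\alpha^\mathcal{M}=\inf_M Q_\alpha(\mu_\rho^M\|\mu_\sigma^M)$ a pointwise infimum of jointly concave functions, hence jointly concave; for $\alpha>1$, $Q_\alpha^\mathcal{M}=\sup_M Q_\alpha(\mu_\rho^M\|\mu_\sigma^M)$ is a pointwise supremum of jointly convex functions, hence jointly convex. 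Passing to $D_\alpha^\mathcal{M}=\tfrac{1}{\alpha-1}\log Q_\alpha^\mathcal{M}$: for $\alpha<1$, $\log$ is concave and non-decreasing, so $\log Q_\alpha^\mathcal{M}$ is jointly concave, and multiplication by $\tfrac{1}{\alpha-1}<0$ flips this to joint convexity of $D_\alpha^\mathcal{M}$; the borderline value $\alpha=1$ is then absorbed as the monotone supremum via part (3), so joint convexity extends to $\alpha\le 1$. For $\alpha>1$, $\log$ is strictly increasing, hence the sublevel sets of $\log Q_\alpha^\mathcal{M}$ coincide with those of $Q_\alpha^\mathcal{M}$, which are convex by joint convexity of $Q_\alpha^\mathcal{M}$; this yields joint quasi-convexity of $D_\alpha^\mathcal{M}$.

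The step requiring the most care is this last one: for $\alpha>1$ one cannot upgrade to full joint convexity of $D_\alpha^\mathcal{M}$, since even classically $D_\alpha$ is only quasi-convex in this regime, so the argument must be routed through sublevel sets rather than via any attempt to commute $\log$ with convexity. Apart from that point, the proof is essentially an exercise in bookkeeping on top of the classical R\'enyi properties.
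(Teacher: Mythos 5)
Your proposal is correct and follows essentially the same route as the paper: each property is lifted from the classical R\'enyi divergence through the affine measurement map $(\rho,\sigma)\mapsto(\mu_\rho^M,\mu_\sigma^M)$ and then aggregated using the relevant stability of suprema/infima (interchange of suprema for part (3), super-/sub-additivity of $\inf$/$\sup$ for part (4), and the monotone composition with $\tfrac{1}{\alpha-1}\log$, handled via sublevel sets in the quasi-convex regime $\alpha>1$). The only cosmetic difference is that the paper phrases the $\alpha<1$ step as composing the jointly concave $Q_\alpha^{\mathcal{M}}$ with the non-increasing convex map $t\mapsto\tfrac{1}{\alpha-1}\log t$, whereas you split it into concavity of $\log$ followed by a sign flip; these are equivalent.
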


\begin{proof} These results follow by lifting the corresponding classical properties (see  \cite[Chapter 4]{Tomamichel}).

(1) This follows immediately from the positivity of the classical quantity. 

(2) 
By monotonicity of the classical quantity, we have for all \(\alpha, \beta\) with \(0 < \alpha \leq \beta\) that
\begin{align}\label{mon_full}
    D_\alpha \left(\mu_\rho^M\middle\| \mu_\sigma^M\right) & \leq D_\beta \left(\mu_\rho^M\middle\|\mu_\sigma^M\right)
\end{align}
for an arbitrary measurement \(M\in\mathcal{M}\) and it then directly follows that \( D_\alpha^{\mathcal{M}} \left( \rho \middle\| \sigma \right) \leq D_\beta^{\mathcal{M}} \left( \rho \middle\| \sigma \right)\). 


(3) 
Note that we can write
\begin{equation}\label{Eq:D1_Supremum_Def}
    D_1^\mathcal{M} (\rho\|\sigma) = \sup_{M \in \mathcal{M}} D_1 \left(\mu_\rho^M\middle\|\mu_\sigma^M \right) = \sup_{M \in \mathcal{M}} \sup_{\alpha \in (0,1)} D_\alpha \left(\mu_\rho^M\middle\|\mu_\sigma^M \right) = \sup_{\alpha \in (0,1)} D_\alpha^\mathcal{M}(\rho\|\sigma) \, ,
\end{equation}
where the second equality follows by the classical property\footnote{Due to monotonicity and continuity in \(\alpha\), we may write the limit \(\alpha\to1\) as a supremum.} and the last is an interchange of suprema. 

The statement for \(\alpha = \infty\) follows by an analogous argument.

(4) Let us first consider the case \(\alpha\in(0,1)\). 
Note that by linearity of the trace, we have
\begin{equation}
    \mu_{\lambda \rho_1 + (1-\lambda) \rho_2}^M = \lambda \mu_{\rho_1}^M + (1-\lambda) \mu_{\rho_2}^M \, ,
\end{equation}
where the addition is understood element-wise. Joint concavity of \( Q_\alpha^\mathcal{M} \) is then a result of
\begin{align}
     Q_\alpha^\mathcal{M} \big( \lambda\rho_1+(1-\lambda)\rho_2 \big\| \lambda\sigma_1+(1-\lambda)\sigma_2 \big)
     &= \inf_{M \in \mathcal{M}} Q_\alpha \left( \lambda \mu_{\rho_1}^M+(1-\lambda)\mu_{\rho_2}^M\middle\|\lambda \mu_{\sigma_1}^M+(1-\lambda)\mu_{\sigma_2}^M\right) \\
     &\geq \inf_{M \in \mathcal{M}} \lambda Q_\alpha \left( \mu_{\rho_1}^M\middle\|\mu_{\sigma_1}^M\right) + (1- \lambda) Q_\alpha \left( \mu_{\rho_2}^M\middle\|\mu_{\sigma_2}^M\right) \\
     &\geq \lambda Q_\alpha^\mathcal{M} \left(\rho_1\middle\|\sigma_1\right) + (1- \lambda)  Q_\alpha^\mathcal{M} \left(\rho_2\middle\|\sigma_2\right) \, ,
\end{align}
where the first inequality stems from joint concavity of the classical quantity and the second one from the superadditivity of the infimum.


Since \(f(t):= \frac{1}{\alpha-1}\log(t)\) is a non-increasing and convex function, it follows that \(D_\alpha^\mathcal{M}\) is jointly convex in \((\rho,\sigma)\) for \(\alpha\in(0,1)\). Finally, \(D_1^\mathcal{M}\) is jointly convex as the supremum of a family of jointly convex functionals (cf.\ Property 3).

For \(\alpha\in(1,\infty)\), we can similarly use joint convexity of the classical \(Q_\alpha\) and the subadditivity of the supremum to show joint convexity of \( Q_\alpha^\mathcal{M} \). Since \(f(t):= \frac{1}{\alpha-1}\log(t) \) is non-decreasing and quasi-convex in this parameter range, it follows that \(D_\alpha^\mathcal{M}\) is jointly quasi-convex. Further, \(D_\infty^\mathcal{M}\) as the supremum of a family of jointly quasi-convex functions is jointly quasi-convex.
\end{proof}

Property 1 of Lemma \ref{Lem:General_Properties} shows that the measured R\'enyi divergence is indeed an information divergence in the information-theoretic sense. In the rest of the manuscript, we will focus for the most part on sets \(\mathcal{M}\) that are informationally complete \cite{Prugovecki}. These sets contain at least one measurement \(M\) such that for \( \rho\not=\sigma \), we have that \( \mu_{\rho}^M \not= \mu_{\sigma}^M\). Note, for instance, that since the set LO\((A:B)\) is informationally complete its supersets described in Sec.\ \ref{Sec:Local_Measurement_Sets} are as well. This allows us to strengthen Property 1 to \(D_{\alpha}^{\mathcal{M}}(\rho\|\sigma) > 0\) if \(\rho \not = \sigma\) for \(\rho,\sigma \in \mathcal{S}\), i.e.\ the measured R\'enyi divergences are faithful. We are then justified to interpret them as a measure of distinguishability for quantum states.

Moreover, the convexity properties of the measured R\'enyi divergence enable us to derive an equivalent characterization in terms of a continuous extension based on states that have full support. 
\begin{lemma}\label{Lem:Continous_Extension}
    Let \(\rho,\sigma\in\mathcal{S}\), \(\mathcal{M} \subseteq \text{ALL}\) and \(\alpha>0\). Then, we can express the measured R\'enyi divergence equivalently as
    \begin{equation}\label{Eq:Continous_Extension}
        D_\alpha^\mathcal{M} \left(\rho\|\sigma\right) = \sup_{\epsilon\in(0,1]} D_\alpha^\mathcal{M} \big((1-\epsilon)\rho+\epsilon\pi\big\|(1-\epsilon)\sigma+\epsilon\pi\big) \, ,
    \end{equation}
    where \(\pi:= 1_\dutchcal{H}/d\) is the completely mixed state.
\end{lemma}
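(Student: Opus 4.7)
The plan is to establish the equality as two inequalities, writing $\rho_\epsilon := (1-\epsilon)\rho + \epsilon\pi$ and $\sigma_\epsilon := (1-\epsilon)\sigma + \epsilon\pi$ throughout.

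The easier direction, $\sup_{\epsilon \in (0,1]} D_\alpha^\mathcal{M}(\rho_\epsilon\|\sigma_\epsilon) \leq D_\alpha^\mathcal{M}(\rho\|\sigma)$, is an immediate consequence of Property 4 of Lemma~\ref{Lem:General_Properties} together with the elementary observation that $D_\alpha^\mathcal{M}(\pi\|\pi) = 0$. For $\alpha \leq 1$, joint convexity gives $D_\alpha^\mathcal{M}(\rho_\epsilon\|\sigma_\epsilon) \leq (1-\epsilon)D_\alpha^\mathcal{M}(\rho\|\sigma) + \epsilon D_\alpha^\mathcal{M}(\pi\|\pi) \leq D_\alpha^\mathcal{M}(\rho\|\sigma)$. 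For $\alpha > 1$, joint quasi-convexity combined with positivity (Property 1) analogously yields $D_\alpha^\mathcal{M}(\rho_\epsilon\|\sigma_\epsilon) \leq \max\{D_\alpha^\mathcal{M}(\rho\|\sigma),\, 0\} = D_\alpha^\mathcal{M}(\rho\|\sigma)$.

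For the reverse direction, $D_\alpha^\mathcal{M}(\rho\|\sigma) \leq \sup_\epsilon D_\alpha^\mathcal{M}(\rho_\epsilon\|\sigma_\epsilon)$, I would fix an arbitrary $M \in \mathcal{M}$ and track the classical divergence $D_\alpha(\mu_{\rho_\epsilon}^M\|\mu_{\sigma_\epsilon}^M)$ as $\epsilon \to 0^+$. Since $\mu_{\rho_\epsilon}^M = (1-\epsilon)\mu_\rho^M + \epsilon\mu_\pi^M$ converges pointwise to $\mu_\rho^M$ on the finite alphabet $\mathcal{Z}$ (and analogously for $\sigma$), and since the classical Rényi divergence is lower semi-continuous in its arguments (with the $+\infty$ convention of Sec.~\ref{Sec:Preliminaries}), one has $\liminf_{\epsilon \to 0^+} D_\alpha(\mu_{\rho_\epsilon}^M\|\mu_{\sigma_\epsilon}^M) \geq D_\alpha(\mu_\rho^M\|\mu_\sigma^M)$. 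Therefore $\sup_{\epsilon} D_\alpha^\mathcal{M}(\rho_\epsilon\|\sigma_\epsilon) \geq \sup_\epsilon D_\alpha(\mu_{\rho_\epsilon}^M\|\mu_{\sigma_\epsilon}^M) \geq D_\alpha(\mu_\rho^M\|\mu_\sigma^M)$, and taking the supremum over $M \in \mathcal{M}$ completes the argument.

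The main obstacle lies in the reverse direction when the divergence is infinite, because then the lsc step must be verified by hand. Concretely, for $\alpha > 1$ with $\mu_\rho^M(z)>0$ at some outcome $z$ where $\mu_\sigma^M(z)=0$, the perturbation produces $\mu_{\sigma_\epsilon}^M(z) = \epsilon\Tr[M^z]/d$, which is strictly positive because $M^z \neq 0$ (forced by $\mu_\rho^M(z)>0$), so the contribution $\mu_{\rho_\epsilon}^M(z)^\alpha \mu_{\sigma_\epsilon}^M(z)^{1-\alpha}$ scales as $\epsilon^{1-\alpha}$ and diverges as $\epsilon \to 0^+$; an analogous check for the orthogonality case handles $\alpha \in (0,1)$. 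The boundary orders $\alpha = 1$ and $\alpha = \infty$ can be either included directly in the lsc step (the classical quantities remain lsc) or alternatively reduced to the generic $\alpha \in (0,1)\cup(1,\infty)$ cases via Property 3 of Lemma~\ref{Lem:General_Properties} by exchanging the suprema over $\alpha$ and $\epsilon$.
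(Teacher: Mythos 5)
Your proof is correct, but it takes a genuinely different route from the paper's on the substantive half. The paper works directly with the function \(f_\alpha(\epsilon) := D_\alpha^\mathcal{M}(\rho_\epsilon\|\sigma_\epsilon)\): joint convexity of \(D_\alpha^\mathcal{M}\) for \(\alpha\le1\) (resp.\ convexity of \(Q_\alpha^\mathcal{M}\), hence of \(g_\alpha(\epsilon)=\exp((\alpha-1)f_\alpha(\epsilon))\), for \(\alpha>1\)) shows \(f_\alpha\) is convex and therefore continuous on \((0,1)\), the bound \(f_\alpha(\epsilon)\ge f_\alpha(1)=0\) upgrades this to monotone decrease, and the supremum is then identified with the monotone limit \(\epsilon\to0^+\); the order \(\alpha=\infty\) is handled by exchanging suprema via Property 3 of Lemma \ref{Lem:General_Properties}. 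Your upper bound \(\sup_\epsilon f_\alpha(\epsilon)\le D_\alpha^\mathcal{M}(\rho\|\sigma)\) uses exactly the same convexity/quasi-convexity input, but for the lower bound you instead fix a single measurement \(M\) and invoke lower semicontinuity of the classical R\'enyi divergence under the pointwise convergence \(\mu_{\rho_\epsilon}^M\to\mu_\rho^M\) on a finite alphabet, checking the divergent cases by hand. This is a more explicit treatment of the step the paper compresses into ``the supremum thus constitutes a limit \(\epsilon\to0\) in case of convergence'': convexity on \([0,1]\) together with monotone decrease only yields \(\lim_{\epsilon\to0^+}f_\alpha(\epsilon)\le f_\alpha(0)\), and your semicontinuity argument is precisely what supplies the remaining inequality \(\lim_{\epsilon\to0^+}f_\alpha(\epsilon)\ge f_\alpha(0)\), including in the case \(D_\alpha^\mathcal{M}(\rho\|\sigma)=+\infty\). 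What the paper's approach buys in exchange is the additional structural information that \(\epsilon\mapsto D_\alpha^\mathcal{M}(\rho_\epsilon\|\sigma_\epsilon)\) is convex, continuous and monotone decreasing on \((0,1)\), a pattern that is reused almost verbatim in the extension step of the proof of Theorem \ref{Thm:Variational_Q}; your argument is leaner but yields only the equality of suprema.
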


\begin{proof}

In the following, we denote the objective function on the right-hand side of Eq.\ \eqref{Eq:Continous_Extension} as \(f_\alpha(\epsilon) := D_\alpha^\mathcal{M} \left(\rho_\epsilon\middle\|\sigma_\epsilon\right)\)
using the shorthand \(\rho_\epsilon := (1-\epsilon) \rho + \epsilon \pi \). Note that \(\rho_\epsilon\) has full support on \(\dutchcal{H}\) for \(\epsilon\in(0,1]\) independent of \(\rho\). It is also straightforward to verify that 
\begin{equation}
    \rho_{\lambda \epsilon_1 + (1-\lambda) \epsilon_2} = \lambda \rho_{\epsilon_1} + (1-\lambda) \rho_{\epsilon_2} \, .
\end{equation}

Let \(\alpha\in(0,1]\) and \(\epsilon_1,\epsilon_2 \in (0,1)\). Then, joint convexity of \(D_\alpha^\mathcal{M}\) implies
\begin{align}
    f_\alpha\left(\lambda\epsilon_1+(1-\lambda)\epsilon_2\right) &=  D_\alpha^\mathcal{M} \left(\lambda \rho_{\epsilon_1} + (1-\lambda) \rho_{\epsilon_2}\middle\|\lambda \sigma_{\epsilon_1} + (1-\lambda) \sigma_{\epsilon_2}\right) \\
    &\leq \lambda D_\alpha^\mathcal{M} \left( \rho_{\epsilon_1}\middle\|\sigma_{\epsilon_1}\right) + (1-\lambda) D_\alpha^\mathcal{M} \left( \rho_{\epsilon_2}\middle\|\sigma_{\epsilon_2}\right) = \lambda f_\alpha(\epsilon_1) + (1-\lambda) f_\alpha(\epsilon_2) \, .
\end{align}
Therefore, \(f_\alpha(\epsilon)\) is convex which implies it is continuous on the interval \((0,1)\). For \(\epsilon\in(0,1)\), we have \(f_\alpha(\epsilon)\geq f_\alpha(1)=0\), which further implies that \(f_\alpha(\epsilon)\) is monotone decreasing in \(\epsilon\). The supremum thus constitutes a limit \(\epsilon\to0\) in case of convergence. 

For \(\alpha\in(1,\infty)\), we have by Lemma \ref{Lem:General_Properties} that 
\begin{equation}
    Q_\alpha^\mathcal{M} \left(\lambda \rho_{\epsilon_1} + (1-\lambda) \rho_{\epsilon_2}\middle\|\lambda \sigma_{\epsilon_1} + (1-\lambda) \sigma_{\epsilon_2}\right) \leq \lambda Q_\alpha^\mathcal{M} \left( \rho_{\epsilon_1}\middle\|\sigma_{\epsilon_1}\right) + (1-\lambda) Q_\alpha^\mathcal{M} \left( \rho_{\epsilon_2}\middle\|\sigma_{\epsilon_2}\right)
\end{equation}
and thus \( g_\alpha(\epsilon):=\exp((\alpha-1)f_\alpha(\epsilon))\) is convex on \(\epsilon\in(0,1)\). Additionally, we have \(g_\alpha(\epsilon) \geq g_\alpha(1)=1\), which implies \(g_\alpha(\epsilon)\) is monotone decreasing on \((0,1)\). The composition with a continuous non-decreasing function preservers monotonicity and continuity. Thus, we can conclude that \(f_\alpha(\epsilon)\) is continuous and monotone decreasing  in \(\epsilon\) on \((0,1)\).

Finally, we have by Lemma \ref{Lem:General_Properties} that 
\begin{equation}
    D_\infty^\mathcal{M} \left(\rho\middle\|\sigma\right) = \sup_{\alpha\in(1,\infty)} D_\alpha^\mathcal{M} \left(\rho\middle\|\sigma\right) = \sup_{\alpha\in(1,\infty)} \sup_{\epsilon\in(0,1]} D_\alpha^\mathcal{M} \left(\rho_\epsilon\middle\|\sigma_\epsilon\right) = \sup_{\epsilon\in(0,1]} D_\infty^\mathcal{M} \left(\rho_\epsilon\middle\|\sigma_\epsilon\right) \, .
\end{equation}
\end{proof}

We remark that Lemma \ref{Lem:Continous_Extension} allows us to make the assumption that the states have full support and then lift any result to the general case via Eq.\ \eqref{Eq:Continous_Extension}.


\subsection{Superadditivity and Regularization}

Next, we discuss the additivity properties of the measured R\'enyi divergence on independent states. For this, we denote the family of i.i.d.\ states \(\left(\{\rho\}, \{\rho^{\otimes 2}\}, ...\right)\) as \(\boldsymbol{\rho}\) and point out that \(\rho^{\otimes n}\) is a state on the space \(\dutchcal{H}^{\otimes n}\). In order to discuss additivity questions then, we have to introduce a family of measurements \(\boldsymbol{\mathcal{M}} := (\mathcal{M}_1, \mathcal{M}_2, ...)\), where \(\mathcal{M}_n\) is a set of measurements on \(\dutchcal{H}^{\otimes n}\). Further, we assume that this family of measurement satisfies some mild regularity condition. Namely, for any \(k,l\) and any \(M_k \in \mathcal{M}_k, M_l \in \mathcal{M}_l\), we have \(M_k \otimes M_l \in \mathcal{M}_{k+l}\).
Note that this is satisfied by the locality-constrained measurement sets defined in Sec.\ \ref{Sec:Local_Measurement_Sets}, where we define e.g.\ \( \text{LO}_n(A:B) = \text{LO}(A_1 ... A_n : B_1 ... B_n)\).

Given this assumption, we can show a general superadditivity result which in turn allows us to define the regularized measured R\'enyi divergence of order \(\alpha\).
\begin{lemma}\label{Lem:Regularization}
    Let \(\boldsymbol{\mathcal{M}}\) be a family of measurements and \(\boldsymbol{\rho},\boldsymbol{\sigma}\) collections of i.i.d.\ states as defined above. The measured R\'enyi divergence of order \(\alpha>0\) is super-additive on i.i.d.\ states, i.e.\
    \begin{equation}
        D_\alpha^{\mathcal{M}_{k+l}}\left(\rho^{\otimes k+l}\middle\|\sigma^{\otimes k+l}\right) \geq D_\alpha^{\mathcal{M}_{k}}\left(\rho^{\otimes k}\middle\|\sigma^{\otimes k}\right) + D_\alpha^{\mathcal{M}_{l}}\left(\rho^{\otimes l}\middle\|\sigma^{\otimes l}\right) \, .
    \end{equation}
    Moreover, the regularized measured R\'enyi divergence is well-defined and given by 
    \begin{equation}\label{Eq:Regularization}
        D_\alpha^{\boldsymbol{\mathcal{M}}}(\boldsymbol{\rho}\|\boldsymbol{\sigma}) := \lim_{n\to\infty} \frac{1}{n} D_\alpha^{\mathcal{M}_n}\left(\rho^{\otimes n}\middle\|\sigma^{\otimes n}\right) = \sup_{n\in\mathbbm{N}} \frac{1}{n} D_\alpha^{\mathcal{M}_n}\left(\rho^{\otimes n}\middle\|\sigma^{\otimes n}\right) \, .
    \end{equation}
\end{lemma}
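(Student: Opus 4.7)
The plan is to derive superadditivity from the product structure of the classical distributions induced by product measurements on i.i.d.\ inputs, and then deduce the regularization identity from Fekete's lemma for superadditive sequences.

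For superadditivity, I would fix arbitrary $M_k \in \mathcal{M}_k$ and $M_l \in \mathcal{M}_l$; by the regularity assumption on $\boldsymbol{\mathcal{M}}$, the product POVM $M_k \otimes M_l$ lies in $\mathcal{M}_{k+l}$. A direct application of $\tr[(A\otimes B)(X\otimes Y)] = \tr[AX]\tr[BY]$ to any state $\tau^{\otimes(k+l)}$ shows that the induced measure factorises, $\mu^{M_k \otimes M_l}_{\tau^{\otimes(k+l)}} = \mu^{M_k}_{\tau^{\otimes k}} \otimes \mu^{M_l}_{\tau^{\otimes l}}$. The classical R\'enyi divergence is additive on product distributions for every $\alpha \in (0,\infty]$: for $\alpha \in (0,1) \cup (1,\infty)$ this is immediate from $Q_\alpha$ factorising over products in Eq.\ \eqref{Def:Q_Quantity}, while the limiting cases $\alpha = 1$ (KL) and $\alpha = \infty$ (max) follow by direct computation. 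Combining these two facts yields $D_\alpha(\mu^{M_k\otimes M_l}_{\rho^{\otimes(k+l)}} \| \mu^{M_k\otimes M_l}_{\sigma^{\otimes(k+l)}}) = D_\alpha(\mu^{M_k}_{\rho^{\otimes k}} \| \mu^{M_k}_{\sigma^{\otimes k}}) + D_\alpha(\mu^{M_l}_{\rho^{\otimes l}} \| \mu^{M_l}_{\sigma^{\otimes l}})$. The left-hand side is bounded above by $D_\alpha^{\mathcal{M}_{k+l}}(\rho^{\otimes(k+l)}\|\sigma^{\otimes(k+l)})$; taking suprema independently over $M_k$ and $M_l$ on the right-hand side gives the desired inequality.

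For the regularization, I set $a_n := D_\alpha^{\mathcal{M}_n}(\rho^{\otimes n}\|\sigma^{\otimes n}) \in [0,+\infty]$, which is a superadditive sequence by the first part. Fekete's lemma in the extended reals (with the convention $x + (+\infty) = +\infty$) then yields $\lim_{n\to\infty} a_n/n = \sup_{n\in\mathbbm{N}} a_n/n$, with the limit existing in $[0,+\infty]$, which is exactly Eq.\ \eqref{Eq:Regularization}. In the case where all $a_n$ are finite, this is the standard division-with-remainder argument; if $a_{n_0} = +\infty$ for some $n_0$, superadditivity forces $a_n = +\infty$ for every $n \geq n_0$, so both the supremum and the limit of $a_n/n$ are $+\infty$ and the identity still holds.

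I anticipate no substantive obstacle: the content reduces to the structural observation that a product POVM applied to a product state produces a product classical distribution, combined with classical additivity of $D_\alpha$ on products and Fekete's lemma. The only care point is the consistent handling of the $+\infty$ conventions, so that the statement covers all $\alpha > 0$ including those regimes in which $D_\alpha^{\mathcal{M}_n}$ may a priori be infinite.
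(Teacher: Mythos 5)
Your proposal is correct and follows essentially the same route as the paper: restrict to product measurements $M_k \otimes M_l \in \mathcal{M}_{k+l}$, use that product POVMs on product states induce product distributions, invoke additivity of the classical $D_\alpha$, and conclude the regularization identity via Fekete's lemma. Your explicit handling of the $+\infty$ conventions in the Fekete step is a small refinement the paper leaves implicit, but it does not change the argument.
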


\begin{proof} 

First, note that a product measurement on product states induces a measure that satisfies
\begin{equation}
    \mu_{\rho_1 \otimes \rho_2}^{M_1 \otimes M_2} (x,y) = \tr[ (\rho_1 \otimes \rho_2)(M_1^x \otimes M_2^y)] = \tr[ \rho_1 M_1^z] \tr[ \rho_2 M_2^z] = \mu_{\rho_1}^{M_1}(x) \mu_{\rho_2}^{M_2}(y) \, .
\end{equation}

With this, we can show that
\begin{align}
    \sup_{M \in \mathcal{M}_{k+l}} D_\alpha \left(\mu_{\rho^{\otimes k+l}}^{M}\middle\|\mu_{\sigma^{\otimes k+l}}^M\right) &\geq \sup_{M_k \otimes M_l \in \mathcal{M}_{k+l}} D_\alpha \left(\mu_{\rho^{\otimes k+l}}^{M_k \otimes M_l}\middle\|\mu_{\sigma^{\otimes k+l}}^{M_k \otimes M_l} \right) \\
    &= \sup_{M_k \in \mathcal{M}_{k}} D_\alpha \left(\mu_{\rho^{\otimes k}}^{M_k}\middle\|\mu_{\sigma^{\otimes k}}^{M_k} \right) + \sup_{M_l \in \mathcal{M}_{l}} D_\alpha \left(\mu_{\rho^{\otimes l}}^{M_l}\middle\|\mu_{\sigma^{\otimes l}}^{M_l} \right) \, ,
\end{align}
where the inequality step follows by restricting to product measurements w.r.t.\ to a cut of the system in two blocks of \(k\) and \(l\) systems, respectively. The result then follows by the additivity of the classical quantity on independent measures \cite[Chapter 4]{Tomamichel}. Eq.\ \eqref{Eq:Regularization} then follows by applying Fekete's lemma.    
\end{proof}

Notice that the above proof can be adapted straightforwardly to show that 
\begin{equation}%
    D_\alpha^{\mathcal{M}_2} \left(\rho\otimes \tau\middle\|\sigma\otimes\omega\right) \geq D_\alpha^{\mathcal{M}_1} \left(\rho\middle\|\sigma\right) + D_\alpha^{\mathcal{M}_1} \left(\tau\middle\|\omega\right) \, ,
\end{equation}
i.e.\ superadditivity holds also on general product states.

In the case \(\mathcal{M}=\text{ALL}\), the regularized measured R\'enyi divergence becomes single-letter and is equal to the sandwiched R\'enyi divergence, i.e.\
\begin{equation}\label{Eq:Connection_Regularized_Sandwiched}
    D_\alpha^{\mathbf{ALL}}(\boldsymbol{\rho}\|\boldsymbol{\sigma}) = \tilde{D}_\alpha(\rho\|\sigma) \; \; \text{for} \; \; \alpha \geq \frac{1}{2} \, ,
\end{equation}
which is known in the literature as asymptotic achievability under measurements (cf.\ \cite[Appendix A]{Mosonyi}).\footnote{It is not known if this property holds for \(\alpha\in(0,1/2)\).} Note that this result implies that the unrestricted measured R\'enyi divergence is strictly superadditive on states for which the inequality in Eq.\ \eqref{Eq:Comparison_Quantum} is strict, i.e.\ for states with \([\rho,\sigma]\not=0\) and \(\tilde{D}_\alpha(\rho\|\sigma)<+\infty\). In the general case, however, it is unclear how to evaluate the regularized quantity. We show in Sec.\ \ref{Sec:DataHiding}, that for some important examples of states they become single-letter and can be computed explicitly. 

Lastly, note that the regularized quantity for the order \(\alpha=1\) already has found an operational interpretation in the context of quantum hypothesis testing under restricted measurements as the optimal Stein's exponent (see \cite[Theorem 16]{Brandao} and Sec.\ \ref{Sec:Hypothesis_Testing} for more details). Moreover, it was shown in \cite{Mosonyi} that the sandwiched Rényi divergence has an operational interpretation for \(\alpha > 1\) in the strong converse problem of unrestricted quantum hypothesis testing.


\subsection{Data-Processing Inequality}

An important property of the classical divergence is its data-processing inequality (DPI), i.e.\ the classical R\'enyi divergence is contractive under the application of classical channels. Similarly, the measured R\'enyi divergence satisfies a DPI under channels compatible with the respective measurement set. For this, we call a quantum channel \(\mathcal{G}\) compatible with a measurement set \(\mathcal{M}\) if it satisfies \(\mathcal{G}^\dagger(M) \in \mathcal{M} \) for all \(M \in \mathcal{M}\), where \(\mathcal{G}^\dagger\) is applied to the individual POVM elements.
\begin{lemma}\label{Lem:Modified_DPI}
    Let \(\rho,\sigma \in \mathcal{S}\), \(\mathcal{M} \subseteq \text{ALL}\) and \(\alpha>0\). Then, \(D_{\alpha}^{\mathcal{M}} \) is monotone decreasing under channels compatible with \(\mathcal{M}\) as defined above, i.e.\ we have 
    \begin{equation}
        D_{\alpha}^{\mathcal{M}} \left( \rho\middle\|\sigma \right) \geq D_{\alpha}^{\mathcal{M}} \left( \mathcal{G}(\rho)\middle\|\mathcal{G}(\sigma) \right) \, . 
    \end{equation} 
\end{lemma}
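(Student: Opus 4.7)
The plan is to reduce the claim to a single application of the Heisenberg-picture identity that lets us move the channel from the state to the POVM, combined with the compatibility hypothesis; no appeal to the classical DPI is actually needed. Concretely, I would start from
\begin{equation}
    D_\alpha^\mathcal{M}\bigl(\mathcal{G}(\rho)\big\|\mathcal{G}(\sigma)\bigr) = \sup_{M \in \mathcal{M}} D_\alpha\bigl(\mu_{\mathcal{G}(\rho)}^M \big\| \mu_{\mathcal{G}(\sigma)}^M\bigr),
\end{equation}
and observe that for each outcome $z$ the cyclicity of the trace gives
\begin{equation}
    \mu_{\mathcal{G}(\rho)}^M(z) = \tr\bigl[\mathcal{G}(\rho) M^z\bigr] = \tr\bigl[\rho\, \mathcal{G}^\dagger(M^z)\bigr] = \mu_\rho^{\mathcal{G}^\dagger(M)}(z),
\end{equation}
and analogously for $\sigma$. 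Since $\mathcal{G}$ is CPTP, $\mathcal{G}^\dagger$ is completely positive and unital, so $\{\mathcal{G}^\dagger(M^z)\}_z$ is again a valid POVM on the input space and the above is a legitimate probability measure.

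Next, I would invoke compatibility: by definition $\mathcal{G}^\dagger(M) \in \mathcal{M}$ for every $M \in \mathcal{M}$, so $\mathcal{G}^\dagger(M)$ is a feasible point of the supremum defining $D_\alpha^\mathcal{M}(\rho\|\sigma)$. Combining this with the identity above yields, for every $M \in \mathcal{M}$,
\begin{equation}
    D_\alpha\bigl(\mu_{\mathcal{G}(\rho)}^M \big\| \mu_{\mathcal{G}(\sigma)}^M\bigr) = D_\alpha\bigl(\mu_\rho^{\mathcal{G}^\dagger(M)} \big\| \mu_\sigma^{\mathcal{G}^\dagger(M)}\bigr) \leq \sup_{M' \in \mathcal{M}} D_\alpha\bigl(\mu_\rho^{M'} \big\| \mu_\sigma^{M'}\bigr) = D_\alpha^\mathcal{M}(\rho\|\sigma).
\end{equation}
Taking the supremum over $M \in \mathcal{M}$ on the left-hand side then gives the desired inequality.

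There is no real obstacle here; the only thing one has to be careful about is to distinguish between monotonicity under a general quantum channel (which would require the classical DPI plus pullback of arbitrary measurements into $\mathcal{M}$, and is false for generic restricted $\mathcal{M}$) and the compatibility-restricted version actually claimed, where $\mathcal{G}^\dagger$ stabilizes $\mathcal{M}$ by assumption. Once this distinction is made, the argument is essentially the Heisenberg-picture identity together with a supremum enlargement. The proof treats all $\alpha > 0$ uniformly since it never touches the value of $\alpha$, including the limiting cases $\alpha \in \{1,\infty\}$, which are covered by Property~3 of Lemma~\ref{Lem:General_Properties} (or directly by the same argument applied to $D_1$ and $D_\infty$ of the classical measures).
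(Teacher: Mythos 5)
Your argument is correct and is essentially identical to the paper's proof: both rely on the adjoint identity $\mu_{\mathcal{G}(\rho)}^M = \mu_{\rho}^{\mathcal{G}^\dagger(M)}$ followed by the observation that compatibility makes $\mathcal{G}^\dagger(M)$ a feasible point of the outer supremum. Your added remark that the classical DPI is not needed here is accurate and consistent with the paper's argument.
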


\begin{proof} 

The proof adapts the argument of \cite[Lemma 2]{Berta2}. Observe that by the definition of the adjoint of a quantum channel, we have
\begin{align}
    \mu_{\mathcal{G}(\rho)}^M (z) = \tr[\mathcal{G}(\rho) M^z ] = \tr[\rho \mathcal{G}^\dagger(M^z) ] = \mu_{\rho}^{\mathcal{G}^\dagger(M)} (z) \, .
\end{align}

The monotonicity then directly follows from
\begin{align}
    D_{\alpha}^{\mathcal{M}} \left( \mathcal{G}(\rho) \middle\| \mathcal{G}(\sigma) \right) = \sup_{M \in \mathcal{M}} D_\alpha \left( \mu_{\mathcal{G}(\rho)}^M\middle\|\mu_{\mathcal{G}(\sigma)}^M\right) = \sup_{M \in \mathcal{M}} D_\alpha \left( \mu_{\rho}^{\mathcal{G}^\dagger(M)}\middle\|\mu_{\sigma}^{\mathcal{G}^\dagger(M)}\right) \leq D_{\alpha}^{\mathcal{M}} \left( \rho\middle\|\sigma \right) \, ,
\end{align}
where the inequality step results from a superset argument using the compatibility assumption.
\end{proof}

Note that all quantum channels are compatible with the set \(\mathcal{M}=\text{ALL}\) and thus \(D_\alpha^\text{ALL}\) is contractive under general channels. Moreover, Lemma \ref{Lem:Modified_DPI} implies for the sets LOCC, SEP and PPT a DPI for the respective measured R\'enyi divergence under applications of LOCC channels. For the sets LOCC\(_1(A\to B)\), we can use the observation from \cite[Lemma 2]{Berta2} to conclude that \( D_\alpha^{\text{LOCC}_1} \) is monotone under LOCC\(_1(A\to B)\) channels, i.e.\ local operations supported by one-way classical communication from \(A\) to \(B\). Similarly, \(D_\alpha^{\text{LO}}\) is contractive under local channels.


\subsection{Unitary and Isometric Invariance}\label{Sec:Isometric_Invariance}

An immediate consequence of Lemma \ref{Lem:Modified_DPI}, is that \(D_\alpha^\text{ALL}\) is unitarily invariant, i.e.\
\begin{equation}
    D_\alpha^\text{ALL} \left(U\rho U^\dagger\middle\|U\sigma U^\dagger\right) = D_\alpha^\text{ALL} \left(\rho\middle\|\sigma\right) \; \; \text{for} \; \;  U \in \mathcal{U} \, .
\end{equation}
To see this, note that the set of all measurements is invariant under unitary conjugation. A general set \(\mathcal{M}\), however, does not satisfy this and as a result \(D_\alpha^\mathcal{M}\) is in general not unitarily invariant. 

Nevertheless, note that the locality-constrained measurement sets of Sec.\ \ref{Sec:Local_Measurement_Sets} are closed under local unitary conjugation, i.e.\ operators of the form \(U_A \otimes U_B\) with \(U_A \in \mathcal{U}_A\) and \(U_B \in \mathcal{U}_B\). Therefore, the locally-measured R\'enyi divergences are invariant under local unitaries. Moreover, we have by the same argument the stronger invariance under local isometries for the locally-measured R\'enyi divergences.


\subsection{Boundedness}\label{Sec:Boundedness}

It is straightforward to verify that the measured R\'enyi divergence is bounded whenever the quantum divergence is bounded. This follows immediately from the observation that for all \(\alpha>0\), we have
\begin{equation}
    D^{\mathcal{M}}_\alpha \left(\rho \middle\|\sigma \right) \leq \bar{D}_\alpha \left(\rho \middle\|\sigma \right)
\end{equation}
by the monotonicity under measurements of the latter. In the case \(\mathcal{M} = \text{ALL}\), the reverse statement also holds, i.e.\ \(\rho \perp \sigma \) or \(\rho\not\ll\sigma\) for \(\alpha\in(1,\infty)\) implies that \(D^\text{ALL}_\alpha(\rho\|\sigma) = +\infty\). To verify this, consider first the case \(\rho\perp\sigma\). We can define a projective 2-outcome measurement using \(P_{\text{supp}(\rho)}\) and its orthogonal complement \(P_{\text{supp}(\rho)}^\perp := 1_\dutchcal{H} - P_{\text{supp}(\rho)}\), where \(P_{\text{supp}(\rho)}\) denotes the projector on the support of \(\rho\). Observe that the resulting measures 
\begin{align}
    \mu_\rho^P &= \{ 1, 0 \} && \text{and} & \mu_\sigma^P &= \{ 0, 1 \}
\end{align}
are orthogonal and thus \(D^\text{ALL}_\alpha(\rho\|\sigma) = +\infty\). Similarly, if \(\alpha\in(1,\infty)\) and \(\rho\not\ll\sigma\) we have by definition \(\text{supp}(\rho) \cap \ker(\sigma)\not=\emptyset\) and by considering a 2-outcome PVM based on a projector on that subspace, we can reach the same conclusion as before.

Let us emphasize, however, that these conclusions do not hold for a general set \(\mathcal{M}\), as the PVMs we defined above are not necessarily part of the set \(\mathcal{M}\) under consideration. This highlights that the distinguishing power of a restricted measurement set may be significantly diminished compared to the unrestricted case. The prototypical example is the phenomenon of quantum data hiding (cf.\ \cite{Terhal, Eggeling, DiVincenzo, DiVincenzo2}), which asserts that there exist quantum states that can be perfectly distinguished if all measurements are available but are almost indistinguishable using only local measurements and classical communication. In Sec.\ \ref{Sec:DataHiding}, we show that for such states the locally-measured R\'enyi divergences are bounded.


\subsection{Remarks on Optimal Measurement}\label{Sec:Optimal_Measurements}

We can also make some general observations on the optimal measurement in Eq.\ \eqref{Eq:Measured_Renyi_Divergence}. First, note that for a given measurement coarse-graining can only reduce the divergence due to the classical data-processing inequality. This implies for the PVM classes P-LO and P-LOCC\(_1\) that we may restrict the optimization w.l.o.g.\ to rank-1 projectors since any PVM can be fine-grained into a rank-1 PVM. 

We are able to say more if we assume additional mathematical structure in the measurement set. For this, we define the convex combination of two POVMs \(M_1\) and \(M_2\) over \(\mathcal{Z}\) element-wise as 
\begin{equation}
    M^z = \lambda M_1^z + (1-\lambda) M_2^z \, . \footnote{This definition can be extended to two POVMs over different alphabets \(\mathcal{Z}_1\) and \(\mathcal{Z}_2\) by mapping them on POVMs over a common alphabet \(\mathcal{Z}\) with \(|\mathcal{Z}| = \max \{|\mathcal{Z}_1|,|\mathcal{Z}_2|\}\) extending the smaller set by zero elements.}
\end{equation}
By linearity of the trace, we then have for the resulting measures
\begin{equation}
    \mu_{\rho}^{\lambda M_1 + (1-\lambda)M_2} = \lambda \mu_\rho^{M_1} + (1-\lambda) \mu_\rho^{M_2} \, ,
\end{equation}
where the addition is understood element-wise. Now, if the set \(\mathcal{M}\) is convex then \( D_\alpha \left(\mu_\rho^M\middle\|\mu_\sigma^M\right) \) as a function of the measurement is convex for \(\alpha \in (0,1]\) and quasi-convex for \(\alpha \in (1,\infty]\). This is a direct result of the joint (quasi-)convexity of the classical quantity. Thus, given a convex measurement set, the optimization can be restricted to extremal elements w.r.t.\ to that set.


\subsection{Pinsker Inequality}

Let us close out this section, by mentioning that the measured R\'enyi divergences for orders \(\alpha \in (0,1]\) are connected to the measured Schatten norms of \cite{Matthews} via a Pinsker-type inequality
\begin{equation}\label{Eq:Pinsker_Inequality}
    D_\alpha^\mathcal{M} \left( \rho \middle\| \sigma \right) \geq \frac{\alpha}{2} \norm{\rho - \sigma}_\mathcal{M}^2 \, .
\end{equation}
This follows directly by lifting the corresponding classical result from \cite[Corollary 6]{Gilardoni},
\begin{equation}
    D_\alpha(\mu\|\nu) \geq \frac{\alpha}{2} TV^2(\mu,\nu) \, ,
\end{equation}
where \(TV(\mu,\nu) := \sum_{z\in\mathcal{Z}} |\mu(z) - \nu(z)|\) is the total variational distance. 


\section{Variational Characterizations}\label{Sec:Variational}

Variational characterizations of divergences offer a versatile tool in various applications of quantum information theory. Building on the work of Berta and Tomamichel in \cite[Lemma 3]{Berta2}, who established a variational bound on the measured relative entropy in order to prove entanglement monogamy inequalities, we extend their results to the measured R\'enyi divergences. This may also be seen as a generalization of \cite[Lemma 3]{Berta1}, which derived an exact variational characterization in the case \(\mathcal{M}=\text{ALL}\). We then treat the locality-constrained measurement sets in detail and derive tight variational characterizations for the classes LO and LOCC\(_1\). We further take a closer look at the measured max-divergence and derive a dual variational characterization. Lastly, we additionally discuss an infinite-dimensional extension of our main result on variational characterizations in App.\ \ref{App:Infinite-Dimensional_Version}.


\subsection{Generic Upper Bound}\label{Sec:Variational_Upper_Bound}

In the following, we will prove the main technical result, which is a generic upper bound on \(D_\alpha^\mathcal{M}\) in terms of a variational formula. For this, let us first define the cone associated to the set \(\mathcal{M}\) as
\begin{align}\label{Def:Measurement_Cone}
    C_\mathcal{M} := \bigcup_{M \in \mathcal{M}} \text{cone}\left( M \right) \, ,
\end{align}
where \(\text{cone}\left( M \right)\) denotes the conical hull of the elements of the POVM \(M\), i.e., 
\begin{equation}
    \text{cone}(M) := \left\{ \sum_{z \in \mathcal{Z}} \lambda_z M^z : \lambda_z \geq 0 \right\} \, .
\end{equation}

We can then summarize our result compactly as 
\begin{equation}\label{Eq:Compact_Variational}
    D_\alpha^\mathcal{M} \left( \rho \middle\| \sigma \right) \leq V_\alpha^\mathcal{M} (\rho,\sigma) \, ,
\end{equation}
where \(V_\alpha^\mathcal{M} (\rho,\sigma)\) denotes the variational bound given by
\begin{equation}
    V_\alpha^\mathcal{M} (\rho,\sigma) := \sup_{\omega > 0} \bigg\{ \nu_\alpha \left( (\rho,\sigma); \omega \right) \; \bigg| \; \omega \in C_\mathcal{M} \bigg\} = \sup_{\omega > 0} \bigg\{ \eta_\alpha \left( (\rho,\sigma); \omega \right) \; \bigg| \; \omega \in C_\mathcal{M} \bigg\} \, .
\end{equation}
Here, we introduced some compact notation for the two types of objective functions we get. The first class of objective function is given by 
\begin{equation}\label{Eq:Additive_Expression}
    \nu_\alpha \big( (\rho,\sigma); \omega \big) := \left\{ \begin{array}{ll}
        \frac{1}{\alpha-1} \log\left(\alpha \tr[\rho \omega] + (1-\alpha) \tr[\sigma \omega^{\frac{\alpha}{\alpha-1}}] \right) & \text{for} \; \alpha \in (0,1/2) \vspace{0.3em} \\
        \frac{1}{\alpha-1} \log\left( \alpha \tr[\rho \omega^{\frac{\alpha-1}{\alpha}}] + (1-\alpha) \tr[ \sigma \omega] \right) & \text{for} \; \alpha \in [1/2,1) \cup (1,\infty) \vspace{0.3em} \\
        \tr[ \rho \log \omega] + 1 - \tr[ \sigma \omega] & \text{for} \; \alpha = 1 \vspace{0.3em} \\ 
        \log \tr[\rho \omega]  + 1 - \tr[ \sigma \omega] & \text{for} \; \alpha = \infty \\ 
    \end{array} \right. 
\end{equation}
and the second is defined as
\begin{equation}\label{Eq:Multiplicative_Expression}
    \eta_\alpha \big( (\rho,\sigma); \omega \big) := \left\{ \begin{array}{ll}
        \frac{1}{\alpha-1} \log\left( \tr[\rho \omega]^\alpha \tr[\sigma \omega^{\frac{\alpha}{\alpha-1}}]^{1-\alpha} \right) & \text{for} \; \alpha \in (0,1/2)  \vspace{0.3em} \\
        \frac{1}{\alpha-1} \log\left( \tr[\rho \omega^{\frac{\alpha-1}{\alpha}}]^\alpha \tr[ \sigma \omega]^{1-\alpha} \right) & \text{for} \; \alpha \in [1/2,1) \cup (1,\infty) \vspace{0.3em} \\
        \tr[ \rho \log \omega] - \log \tr[ \sigma \omega] & \text{for} \; \alpha = 1 \vspace{0.3em} \\
        \log \tr[\rho \omega] - \log \tr[ \sigma \omega] & \text{for} \; \alpha = \infty
    \end{array} \right. \, .
\end{equation}

We remark that the objective function \( \nu_\alpha\) is concave in \(\omega\). Furthermore, the second class of objective function \( \eta_\alpha\) exhibits a scaling invariance in \(\omega\). The variational characterizations thus have a particularly appealing form. Moreover, the variational program corresponding to \(\eta_\alpha\) can be interpreted as a generalization of Alberti's theorem for the fidelity \cite{Alberti} to general orders \(\alpha\) and sets \(\mathcal{M}\). Utilizing the connection of the measured R\'enyi divergence at \(\alpha = 1/2\) to the measured fidelity from Eq.\ \eqref{fidelity_connect}, we additionally get a direct generalization of this theorem as
\begin{equation}
    F^\mathcal{M}(\rho,\sigma) \geq \inf_{\stackrel{\omega > 0}{\omega \in C_\mathcal{M}}} \sqrt{\tr[\rho \omega^{-1}] \tr[\sigma \omega] } \, .
\end{equation}
The following subsections are devoted to proving this result (see also App.\ \ref{App:Infinite-Dimensional_Version} for a treatment of the infinite-dimensional case). In Sec.\ \ref{Sec:Discussion_Proof}, we additionally give some insights on sufficient conditions for when the variational bound is tight.


\subsubsection{Proof for General Orders}

Our main theorem gives variational bounds on \(Q_\alpha^\mathcal{M}\). Recalling Eq.\ \eqref{Def:Measured_Q_Connection}, this then immediately gives the claimed upper bound on \(D_{\alpha}^{\mathcal{M}}\) for \(\alpha \in (0,1) \cup (1,\infty)\). 
\begin{theorem}\label{Thm:Variational_Q}
For \(\rho, \sigma \in \mathcal{S} \) and nonempty \(\mathcal{M} \subseteq \text{ALL}\), we have
\begin{enumerate}
    \item for \(\alpha \in \left(0,1/2\right)\)
    \begin{align}
        Q_{\alpha}^{\mathcal{M}} \left(\rho \middle\| \sigma \right) \geq& \inf_{\stackrel{\omega > 0}{\omega \in C_\mathcal{M}}} \alpha \tr[ \rho \omega ] + (1-\alpha) \tr[\sigma \omega^{\frac{\alpha}{\alpha-1}}] = \inf_{\stackrel{\omega > 0}{\omega \in C_\mathcal{M}}} \tr[ \rho \omega ]^{\alpha} \tr[\sigma \omega^{\frac{\alpha}{\alpha-1}}]^{1-\alpha} \, ,
    \end{align} 

    \item for \(\alpha \in \left[1/2,1\right)\)
    \begin{align}
        Q_{\alpha}^{\mathcal{M}}\left(\rho \middle\| \sigma \right) \geq& \inf_{\stackrel{\omega > 0}{\omega \in C_\mathcal{M}}} \alpha \tr[ \rho \omega^{\frac{\alpha-1}{\alpha}}] + (1-\alpha) \tr[\sigma \omega] = \inf_{\stackrel{\omega > 0}{\omega \in C_\mathcal{M}}} \tr[ \rho \omega^{\frac{\alpha-1}{\alpha}}]^{\alpha} \tr[\sigma \omega]^{1-\alpha} \, ,
    \end{align}

    \item for \(\alpha \in (1,\infty)\)
    \begin{align}
        Q_{\alpha}^{\mathcal{M}} \left( \rho \middle\| \sigma \right) \leq& \sup_{\stackrel{\omega > 0}{\omega \in C_\mathcal{M}}} \alpha \tr[ \rho \omega^{\frac{\alpha-1}{\alpha}}] + (1-\alpha) \tr[\sigma \omega] = \sup_{\stackrel{\omega > 0}{\omega \in C_\mathcal{M}}} \tr[ \rho \omega^{\frac{\alpha-1}{\alpha}}]^{\alpha} \tr[\sigma \omega]^{1-\alpha} \, .
    \end{align}
\end{enumerate}

\end{theorem}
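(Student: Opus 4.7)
The plan is to treat all three cases by a common strategy: for each POVM $M \in \mathcal{M}$, construct an explicit operator $\omega_M \in C_\mathcal{M}$ with $\omega_M > 0$ such that the variational objective evaluated at $\omega_M$ already compares correctly to $Q_\alpha(\mu_\rho^M \| \mu_\sigma^M)$. Taking the appropriate $\inf$ (for $\alpha < 1$) or $\sup$ (for $\alpha > 1$) over $M$ then transfers the bound to $Q_\alpha^\mathcal{M}(\rho\|\sigma)$.

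The scalar ingredient is the classical variational identity: for $a,b > 0$ and $\alpha \in (0,1)$,
\begin{equation}
    a^\alpha b^{1-\alpha} = \inf_{t>0}\bigl\{\alpha\, t\, a + (1-\alpha)\, t^{\alpha/(\alpha-1)}\, b\bigr\} = \inf_{s>0}\bigl\{\alpha\, s^{(\alpha-1)/\alpha}\, a + (1-\alpha)\, s\, b\bigr\},
\end{equation}
with the infimum attained at a unique interior minimizer (a weighted AM--GM argument tuned by the balance condition $\alpha + c(1-\alpha) = 0$ with $c$ the chosen exponent); for $\alpha > 1$ the same identities hold with $\sup$ in place of $\inf$. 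Substituting $a_z = \tr[\rho M^z]$, $b_z = \tr[\sigma M^z]$, summing over $z$, and evaluating at the classical optimizers $\{t_z^*\}$ or $\{s_z^*\}$ reproduces $Q_\alpha(\mu_\rho^M\|\mu_\sigma^M)$ exactly.

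The quantum step promotes the scalar weights to a single operator $\omega_M := \sum_z t_z^* M^z$ (or $\sum_z s_z^* M^z$), which lies in $\mathrm{cone}(M) \subseteq C_\mathcal{M}$ and is strictly positive because $\sum_z M^z = 1_\dutchcal{H}$ and the classical optimizers are strictly positive once one reduces to $\rho,\sigma$ with full support via Lemma~\ref{Lem:Continous_Extension}. A Naimark dilation $M^z = V^* P^z V$ writes $\omega_M = V^* D V$ with $D = \sum_z t_z^* P^z$ and $V$ an isometry; equivalently, $\Phi(X) := V^*XV$ is a unital completely positive map with $\Phi(D^c) = \sum_z (t_z^*)^c M^z$. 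The operator Jensen inequality for UCP maps then gives $\omega_M^c = \Phi(D)^c \leq \Phi(D^c)$ whenever $x \mapsto x^c$ is operator convex on $(0,\infty)$, and the reverse inequality when it is operator concave. The split at $\alpha = 1/2$ in the theorem is dictated precisely by the operator convexity range of $x^c$ ($c \in [-1,0]$ convex, $c \in [0,1]$ concave): the ``Form 1'' exponent $c = \alpha/(\alpha-1)$ falls into $(-1,0)$ only for $\alpha \in (0,1/2)$, while the ``Form 2'' exponent $c = (\alpha-1)/\alpha$ lies in $(-1,0]$ for $\alpha \in [1/2,1)$ and in $(0,1)$ for $\alpha > 1$. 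Taking traces against $\rho$ or $\sigma$, and tracking the sign of $(1-\alpha)$, converts the operator Jensen inequality into the claimed bound on the additive objective in each regime.

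The equivalence of the additive form with the multiplicative form follows from the positive scale-invariance of the cone ($\omega \in C_\mathcal{M} \Rightarrow \lambda\omega \in C_\mathcal{M}$ for $\lambda > 0$) together with the scalar identity $\mathrm{opt}_{\lambda>0}\{\alpha\lambda A + (1-\alpha)\lambda^c B\} = A^\alpha B^{1-\alpha}$, which is again the weighted AM--GM applied to $A\lambda$ and $B\lambda^c$ and is ``balanced'' exactly when $\alpha + c(1-\alpha) = 0$. I expect the main obstacle to be the operator Jensen step: one must invoke the correct operator convexity range for $x^c$ on the open interval $(0,\infty)$ (Loewner/Hansen--Pedersen) and then handle the degenerate boundary cases (e.g.\ $t_z^* \in \{0,+\infty\}$ when some $\mu_\rho^M(z)$ or $\mu_\sigma^M(z)$ vanishes) through a truncation or $\epsilon$-regularization argument legitimized by Lemma~\ref{Lem:Continous_Extension}.
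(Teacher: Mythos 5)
Your proposal is correct and follows essentially the same route as the paper: reduce to full-support states via Lemma~\ref{Lem:Continous_Extension}, form $\omega_M = \sum_z (\mu_\rho^M(z)/\mu_\sigma^M(z))^{\alpha-1} M^z$ (resp.\ exponent $\alpha$) from the classical optimizers, apply the operator Jensen inequality with the operator convexity/concavity ranges of $t^{\alpha/(\alpha-1)}$ and $t^{(\alpha-1)/\alpha}$ that dictate the split at $\alpha=1/2$, and obtain the multiplicative form by optimizing the scale $\lambda$ over the cone. Your Naimark/UCP packaging of the Jensen step is just an equivalent formulation of the Hansen--Pedersen inequality $\sum_z \sqrt{M^z} f(\lambda_z) \sqrt{M^z} \geq f\bigl(\sum_z \lambda_z M^z\bigr)$ that the paper invokes directly.
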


\begin{proof}

The first part of the proof adapts the proof idea of \cite[Lemma 3]{Berta2}.

\underline{1) Case of \(\rho,\sigma\) with full support}

Let us assume first that both \(\rho\) and \(\sigma\) have full support. We then have for an arbitrary \(M\in\mathcal{M}\) that \( \mu_\rho^M(z), \mu_\sigma^M(z) > 0 \) for all \(z\in\mathcal{Z}\) and thus \(\mu_\rho^M(z) / \mu_\sigma^M(z) \in (0,+\infty)\). Further, let us define the shorthand symbol \( Q_\alpha^M \left(\rho \middle\| \sigma \right) := Q_{\alpha} \left(\mu_\rho^M\middle\|\mu_\sigma^M\right) \) for a fixed \(M \in \mathcal{M}\). 

Let \( \alpha \in (0,1/2) \). Here, we can bound \(Q_\alpha^M \) for any \(M \in \mathcal{M}\) as 
\begin{align}
    Q_\alpha^M \left(\rho \middle\| \sigma \right) = \sum_z \mu_\sigma^M(z) \left( \frac{\mu_\rho^M(z)}{\mu_\sigma^M(z)} \right)^{\alpha} &= \tr \left[ \sigma \sum_z \sqrt{M^z} \left( \frac{\mu_\rho^M(z)}{\mu_\sigma^M(z)} \right)^{\alpha} \sqrt{M^z}\right] \\
    &\geq \tr \left[ \sigma \left(  \sum_z \left( \frac{\mu_\rho^M(z)}{\mu_\sigma^M(z)} \right)^{\alpha-1} M^z  \right)^{\frac{\alpha}{\alpha-1} } \right] \, ,
\end{align}
where the inequality follows from an application of the operator Jensen inequality \cite[Theorem 2.1]{Hansen} using the operator convexity of \(f(t) := t^\frac{\alpha}{\alpha-1}\) for \(\alpha \in (0,1/2)\) \cite[Section V]{Bhatia}. For any \(M \in \mathcal{M}\), we further have by definition of \(C_\mathcal{M}\) that
\begin{equation}
    \omega_M := \sum_z \left( \frac{\mu_\rho^M(z)}{\mu_\sigma^M(z)} \right)^{\alpha-1} M^z \in C_\mathcal{M} \, .
\end{equation}
Observe that \(\omega_M > 0\)\footnote{This follows from the completeness relation of POVMs and the strict positivity of all coefficients.} and \( \tr[ \rho \omega_M] = Q_{\alpha}^{M} \left(\rho \middle\| \sigma \right) \). With this, we obtain the desired lower bound on \(Q_\alpha^\mathcal{M}\) as follows
\begin{align}
    Q_{\alpha}^{\mathcal{M}} \left(\rho \middle\| \sigma \right) = \inf_{M \in \mathcal{M}} \alpha Q_{\alpha}^{M}\left(\rho \middle\| \sigma \right) + (1-\alpha) Q_{\alpha}^{M}\left(\rho \middle\| \sigma \right) &\geq \inf_{M \in \mathcal{M}} \alpha \tr[ \rho \omega_M] + (1-\alpha) \tr[ \sigma \omega_M^\frac{\alpha}{\alpha-1} ] \\
    &\geq \inf_{\stackrel{\omega > 0}{\omega \in C_\mathcal{M}}} \alpha \tr[ \rho \omega] + (1-\alpha) \tr[ \sigma \omega^\frac{\alpha}{\alpha-1} ] \, ,
\end{align}  
where the final inequality step follows by a superset argument. 

We can apply the same proof strategy to show the claim for the remaining cases of \(\alpha\). For \(\alpha \in [1/2,1)\), we can bound \(Q_\alpha^M\) for any \(M \in \mathcal{M}\) as 
\begin{align}
    Q_\alpha^M \left(\rho \middle\| \sigma \right) = \sum_z \mu_\rho^M(z) \left( \frac{\mu_\rho^M(z)}{\mu_\sigma^M(z)} \right)^{\alpha-1} &= \tr \left[ \rho \sum_z \sqrt{M^z} \left( \frac{\mu_\rho^M(z)}{\mu_\sigma^M(z)} \right)^{\alpha-1} \sqrt{M^z}\right] \\
    &\geq \tr \left[ \rho \left(  \sum_z \left( \frac{\mu_\rho^M(z)}{\mu_\sigma^M(z)} \right)^{\alpha} M^z  \right)^{\frac{\alpha-1}{\alpha} } \right] \label{Eq:Alpha_Bound}\, ,
\end{align}
where the last line follows from the operator convexity of \(f(t) := t^\frac{\alpha-1}{\alpha}\) for \(\alpha \in [1/2,1)\) \cite[Section V]{Bhatia}. For any \(M \in \mathcal{M}\), we further have that
\begin{equation}
    \omega_M := \sum_z \left( \frac{\mu_\rho^M(z)}{\mu_\sigma^M(z)} \right)^{\alpha} M^z \in C_\mathcal{M}
\end{equation}
with \(\omega_M>0\) and \( \tr[ \sigma \omega_M] = Q_{\alpha}^{M}\left(\rho \middle\| \sigma \right)\). This then allows us to lower bound \(Q_\alpha^\mathcal{M}\) as
\begin{align}
    Q_{\alpha}^{\mathcal{M}} \left(\rho \middle\| \sigma \right) &\geq \inf_{M \in \mathcal{M}} \alpha \tr[ \rho \omega_M^\frac{\alpha-1}{\alpha}] + (1-\alpha) \tr[ \sigma \omega_M ] \geq \inf_{\stackrel{\omega > 0}{\omega \in C_\mathcal{M}}} \alpha \tr[ \rho \omega^\frac{\alpha-1}{\alpha}] + (1-\alpha) \tr[ \sigma \omega ] \, . 
\end{align}  

Finally, for \(\alpha \in (1,\infty) \) the function \( f(t) := t^\frac{\alpha-1}{\alpha} \) is operator concave \cite[Section V]{Bhatia} and Eq.\ \eqref{Eq:Alpha_Bound} becomes an upper bound on \(Q_\alpha^M\) in this parameter range. Defining \(\omega_M\) as for \(\alpha \in [1/2,1) \) then leads to the desired upper bound
\begin{align}
    Q_{\alpha}^{\mathcal{M}} \left(\rho \middle\| \sigma \right) \leq \sup_{\stackrel{\omega > 0}{\omega \in C_\mathcal{M}}}  \alpha \tr[ \rho \omega^\frac{\alpha-1}{\alpha}] + (1-\alpha) \tr[ \sigma \omega ] \, .
\end{align}  

\underline{2) Extension to general case}

Next, we employ Lemma \ref{Lem:Continous_Extension} to lift the obtained bounds to states that do not have full support. Accordingly, we consider full-support extensions \( \rho_\epsilon := (1-\epsilon) \rho + \epsilon \pi \) for \(\rho\in\mathcal{S}\). Additionally, let us introduce a shorthand symbol for the obtained objective function
\begin{equation}\label{Eq:Objective_Function}
    f_\alpha \big( (\rho,\sigma),\omega \big) :=  \left\{ \begin{array}{ll}
         \alpha \tr[\rho \omega] + (1-\alpha) \tr[\sigma \omega^{\frac{\alpha}{\alpha-1}}] & \text{for} \; \alpha \in (0,1/2) \\
         \alpha \tr[\rho \omega^{\frac{\alpha-1}{\alpha}}] + (1-\alpha) \tr[ \sigma \omega] & \text{for} \; \alpha \in [1/2,1) \cup (1,\infty)
    \end{array} \right. \, .
\end{equation}
Note that \(f_\alpha\) is a linear function in the pair \((\rho,\sigma)\). For \(\alpha \in (0,1)\), we define
\begin{equation}
    g_\alpha(\epsilon) := \inf_{\stackrel{\omega > 0}{\omega \in C_\mathcal{M}}} f_\alpha \big( (\rho_\epsilon, \sigma_\epsilon), \omega \big)
\end{equation}
and observe that for \(\epsilon_1,\epsilon_2 \in (0,1)\) we have
\begin{align}
    g_\alpha(\lambda \epsilon_1 + (1-\lambda) \epsilon_2) &= \inf_{\stackrel{\omega > 0}{\omega \in C_\mathcal{M}}} \lambda f_\alpha \big( (\rho_{\epsilon_1}, \sigma_{\epsilon_1}), \omega \big) + (1-\lambda) f_\alpha \big( (\rho_{\epsilon_2}, \sigma_{\epsilon_2}), \omega \big)  \\
    &\geq  \lambda \inf_{\stackrel{\omega > 0}{\omega \in C_\mathcal{M}}} f_\alpha \big( (\rho_{\epsilon_1}, \sigma_{\epsilon_1}), \omega \big) + (1-\lambda) \inf_{\stackrel{\omega > 0}{\omega \in C_\mathcal{M}}} f_\alpha \big( (\rho_{\epsilon_2}, \sigma_{\epsilon_2}), \omega \big) \label{Eq:Extension_Bpund} \, ,
\end{align}
where we used the superadditivity of the infimum in the inequality step. Therefore, \(g_\alpha(\epsilon)\) is a concave function in \(\epsilon\) on the interval \((0,1)\).

Note that at \(\epsilon=1\), we have \(\rho_1 = \sigma_1 = \pi\). Further, since \(\omega > 0\) we can parameterize the optimization variable by the spectral theorem as \(\omega = \sum_{i=1}^{d} \lambda_i P_i\), where \(\lambda_i > 0\) are the eigenvalues of \(\omega\) corresponding to the eigenvectors given by rank-1 projectors \(P_i\). With this, we have for an arbitrary \(\rho\in\mathcal{S}\) that
\begin{align}
    f_\alpha \big( (\rho,\rho), \omega \big) = \left\{ \begin{array}{ll}
         \sum_{i=1}^d \alpha \lambda_i \tr[\rho P_i] + (1-\alpha) \lambda_i^\frac{\alpha}{\alpha-1} \tr[\rho P_i] & \text{for} \; \alpha \in (0,1/2) \\
         \sum_{i=1}^d \alpha \lambda_i^\frac{\alpha-1}{\alpha} \tr[\rho P_i] + (1-\alpha) \lambda_i \tr[\rho P_i] & \text{for} \; \alpha \in [1/2,1)
    \end{array} \right. \, .
\end{align}
Minimizing both expressions w.r.t.\ \(\lambda_i\) yields the optimizer \(\omega^\star = 1_\dutchcal{H}\) with the corresponding optimal value \(f \big( (\rho,\rho), 1_\dutchcal{H} \big) = \tr[\rho] = 1 \). Note that \(1_\dutchcal{H} \in C_\mathcal{M}\) holds for all nonempty sets \(\mathcal{M}\) and thus \(g_\alpha(1)=1\). Since \(g_\alpha(\epsilon)\) is upper-bounded by \(1\),\footnote{For any \(\rho,\sigma\in\mathcal{S}\), we can pick \(\omega^\star = 1_\dutchcal{H}\) to obtain the upper bound \(f_\alpha ((\rho,\sigma),1_\dutchcal{H}) = \alpha \tr[\rho] + (1-\alpha)\tr[\sigma] = 1\).} we conclude that \(g_\alpha(\epsilon)\) is a monotone increasing function in \(\epsilon\). 

With this, we obtain
\begin{equation}
    Q_{\alpha}^{\mathcal{M}}(\rho \| \sigma) = \inf_{\epsilon \in (0,1]} Q_{\alpha}^{\mathcal{M}}(\rho_\epsilon \| \sigma_\epsilon ) \geq \inf_{\epsilon \in (0,1]} \inf_{\stackrel{\omega > 0}{\omega \in C_\mathcal{M}}} f_\alpha \big( ( \rho_\epsilon, \sigma_\epsilon ), \omega \big) = \inf_{\stackrel{\omega > 0}{\omega \in C_\mathcal{M}}} f_\alpha \big( (\rho, \sigma), \omega \big) \, ,
\end{equation}
where the inequality step applies the bound for states with full support. 

For \(\alpha \in (1,\infty)\), we similarly define 
\begin{equation}
    g_\alpha(\epsilon) := \sup_{\stackrel{\omega > 0}{\omega \in C_\mathcal{M}}} f_\alpha \big( (\rho_\epsilon, \sigma_\epsilon), \omega \big) \, .
\end{equation}
By the subadditivity of the supremum, we can conclude as above that \(g_\alpha(\epsilon)\) is a convex function on \((0,1)\). An analogous argument to the one above then yields that \(g_\alpha(1) = 1 \) provides a lower bound on \(g_\alpha(\epsilon)\). Thus, \(g_\alpha(\epsilon)\) is monotone decreasing in \(\epsilon\). The desired claim then follows by
\begin{equation}
    Q_{\alpha}^{\mathcal{M}}(\rho \| \sigma) = \sup_{\epsilon \in (0,1]} Q_{\alpha}^{\mathcal{M}}(\rho_\epsilon \| \sigma_\epsilon ) \leq \sup_{\epsilon \in (0,1]} \sup_{\stackrel{\omega > 0}{\omega \in C_\mathcal{M}}} f_\alpha \big( ( \rho_\epsilon, \sigma_\epsilon ), \omega \big) = \sup_{\stackrel{\omega > 0}{\omega \in C_\mathcal{M}}} f_\alpha \big( (\rho, \sigma), \omega \big) \, .
\end{equation}

\underline{3) Equivalence of expressions}

It remains to show the equivalence of the given variational expressions. For this, let us note that for each positive \(\omega \in C_\mathcal{M}\) there exists a \(\lambda > 0 \) such that \(\lambda \omega \in C_\mathcal{M} \cap \mathcal{S}\) by the cone property. We can thus reformulate the optimization problem for \(\alpha \in (0,1)\) as follows 
\begin{equation}
     \inf_{\stackrel{\omega > 0}{\omega \in C_\mathcal{M}}} f_\alpha \big( (\rho,\sigma), \omega) = \inf_{\lambda > 0} \inf_{\stackrel{\omega > 0}{\omega \in C_\mathcal{M} \, \cap \, \mathcal{S}} } f_\alpha \big( (\rho,\sigma), \lambda \omega)
\end{equation}
and for \(\alpha \in (1,\infty)\) we have the same expression with suprema instead of infima. 

We now carry out the optimization over the scaling parameter \(\lambda\) analytically. For \(\alpha \in (0,1/2)\), we obtain the global optimizer \( \lambda^\star := \tr[\rho \omega]^{\alpha-1} \tr[\sigma \omega^{\frac{\alpha}{\alpha-1}}]^{1-\alpha}  >0 \). The remaining variational expression is given by 
\begin{equation}
    \inf_{\stackrel{\omega > 0}{\omega \in C_\mathcal{M} \cap \mathcal{S}} } \tr[\rho \omega]^\alpha \tr[\sigma \omega^{\frac{\alpha}{\alpha-1}}]^{1-\alpha} \, .
\end{equation}
For \(\alpha \in [1/2,1) \cup (1,\infty)\), the optimization over \(\lambda\) yields the optimizer \( \lambda^\star :=  \tr[\rho \omega^\frac{\alpha-1}{\alpha}]^\alpha \tr[\sigma \omega]^{-\alpha} >0 \). The remaining variational expression is then given for \(\alpha \in [1/2,1)\) by 
\begin{equation}
    \inf_{\stackrel{\omega > 0}{\omega \in C_\mathcal{M} \cap \mathcal{S}}} \tr[\rho \omega^{\frac{\alpha-1}{\alpha}}]^\alpha \tr[\sigma \omega]^{1-\alpha}
\end{equation}
and for \(\alpha \in (1,\infty)\) by the same expression but with the infimum replaced by a supremum.

The resulting objective functions are invariant under scaling of \(\omega\), so we may relax the constraint and optimize over the whole cone \(C_\mathcal{M}\) without changing the optimal value. This completes the proof.
\end{proof}


\subsubsection{Proofs for Extended Orders}

We complete our proof of the claim in Eq.\ \eqref{Eq:Compact_Variational} via two corollaries of Theorem \ref{Thm:Variational_Q} that extend the result to the orders \(\alpha=1\) and \(\alpha=\infty\). At \(\alpha=1\), we reproduce the bound obtained in \cite[Lemma 3]{Berta2} for the measured relative entropy. We also give an equivalent characterization that was not shown there. 
\begin{corollary}\label{Thm:Variational_Umegaki}
For \(\rho,\sigma \in \mathcal{S}\) and nonempty \(\mathcal{M} \subseteq \text{ALL}\), we have
\begin{align}
    D^{\mathcal{M}} \left(\rho \middle\| \sigma \right) \leq  \sup_{\stackrel{\omega > 0}{\omega \in C_\mathcal{M}}} \tr[\rho \log \omega] + 1 - \tr[\sigma \omega] = \sup_{\stackrel{\omega > 0}{\omega \in C_\mathcal{M}}} \tr[\rho \log \omega] - \log \tr[\sigma \omega] \, .
\end{align}
\end{corollary}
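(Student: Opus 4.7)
The plan is to mirror the proof of Theorem \ref{Thm:Variational_Q} for $\alpha\in[1/2,1)$ directly at $\alpha=1$, replacing the operator convexity of $t\mapsto t^{(\alpha-1)/\alpha}$ with the operator concavity of $\log$. This is cleaner than obtaining the bound as the limit $\alpha\to 1^-$ of Theorem \ref{Thm:Variational_Q} via Property 3 of Lemma \ref{Lem:General_Properties}, which would require justifying an interchange of supremum and limit.

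First I reduce, via Lemma \ref{Lem:Continous_Extension}, to the case where $\rho,\sigma$ both have full support; the right-hand side is continuous in $(\rho,\sigma)$ for each fixed $\omega>0$, so the reduction goes through. Assuming full support, I fix $M\in\mathcal{M}$, set $r_z := \mu_\rho^M(z)/\mu_\sigma^M(z) > 0$, and define
\begin{equation}
\omega_M := \sum_{z} r_z\, M^z \in C_\mathcal{M},
\end{equation}
which is strictly positive because $\sum_z M^z = 1_\dutchcal{H}$ and all $r_z>0$. Applying the operator Jensen inequality \cite[Theorem 2.1]{Hansen} with the operator concave function $\log$ and scalar inputs $r_z$ gives
\begin{equation}
\log \omega_M \;=\; \log\Bigl(\sum_z \sqrt{M^z}\,(r_z 1_\dutchcal{H})\,\sqrt{M^z}\Bigr) \;\geq\; \sum_z \sqrt{M^z}\,(\log r_z)\,\sqrt{M^z} \;=\; \sum_z (\log r_z)\,M^z.
\end{equation}
Taking $\tr[\rho\,\cdot\,]$ yields $\tr[\rho \log \omega_M] \geq \sum_z \mu_\rho^M(z)\log r_z = D(\mu_\rho^M\|\mu_\sigma^M)$. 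Moreover $\tr[\sigma \omega_M] = \sum_z r_z\mu_\sigma^M(z) = \sum_z \mu_\rho^M(z) = 1$, so the bound rearranges to $D(\mu_\rho^M\|\mu_\sigma^M) \leq \tr[\rho\log\omega_M] + 1 - \tr[\sigma\omega_M]$. Taking the supremum over $M\in\mathcal{M}$ on the left and relaxing to a supremum over all $\omega>0$ in $C_\mathcal{M}$ on the right delivers the first inequality.

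For the equivalence of the two variational expressions I exploit the cone structure of $C_\mathcal{M}$. The second expression is invariant under $\omega\mapsto\lambda\omega$ for $\lambda>0$, while the first, along the scaled family, becomes $\tr[\rho\log\omega] + \log\lambda + 1 - \lambda\tr[\sigma\omega]$, which is concave in $\lambda$ with maximizer $\lambda^\star = 1/\tr[\sigma\omega]$ and maximum value $\tr[\rho\log\omega] - \log\tr[\sigma\omega]$. Since $\lambda\omega\in C_\mathcal{M}$ for every $\lambda>0$, supremizing over $\omega$ in either form produces the same value.

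The expected obstacle is a purely bookkeeping one: the application of the operator Jensen inequality with $\log$ requires strict positivity of $\omega_M$ inside the logarithm and strict positivity of all $r_z$, which is why the full-support reduction via Lemma \ref{Lem:Continous_Extension} is placed at the very beginning; one must also check that the full-support case of the bound passes to the limit $\epsilon\to 0$, but since the objective is linear in $(\rho,\sigma)$ for each fixed $\omega>0$ (and $\log\omega$ is bounded there), this is immediate.
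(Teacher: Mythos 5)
Your proof is correct, but it takes a different route from the paper. The paper obtains the corollary \emph{from} Theorem \ref{Thm:Variational_Q}: it writes $D^{\mathcal{M}} = \sup_{\alpha\in[1/2,1)} D_\alpha^{\mathcal{M}}$ via Property 3 of Lemma \ref{Lem:General_Properties}, invokes the already-proven bound for $\alpha\in[1/2,1)$, and then upgrades $\frac{\alpha}{\alpha-1}\log\tr[\rho\,\omega^{(\alpha-1)/\alpha}]$ to $\tr[\rho\log\omega]$ by a \emph{scalar} Jensen step on the spectral decomposition of $\omega$ (convexity of $t\mapsto\frac{\alpha}{\alpha-1}\log t$). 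You instead rerun the argument of Theorem \ref{Thm:Variational_Q} directly at $\alpha=1$, using the operator Jensen inequality with the operator concave function $\log$ and the explicit candidate $\omega_M=\sum_z r_z M^z$; this is essentially the original argument of \cite[Lemma 3]{Berta2} that the corollary reproduces. Both are valid. The paper's route buys economy: Theorem \ref{Thm:Variational_Q} has already absorbed the full-support reduction, so the corollary's proof needs no further limiting argument. Your route is more self-contained and exhibits the optimizer $\omega_M$ (with $\tr[\sigma\omega_M]=1$) explicitly, but it forces you to redo the extension to states without full support. Your justification of that step is the one slightly thin point: continuity of the objective in $(\rho,\sigma)$ for fixed $\omega$ is not by itself enough, since a supremum over $\omega$ of continuous functions is only lower semicontinuous. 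The clean fix, which your remark about linearity already gestures at, is the paper's own device from part 2 of the proof of Theorem \ref{Thm:Variational_Q}: $g(\epsilon):=\sup_{\omega}\bigl(\tr[\rho_\epsilon\log\omega]+1-\tr[\sigma_\epsilon\omega]\bigr)$ is a supremum of affine functions of $\epsilon$ on $[0,1]$, hence convex there, so $g(\epsilon)\leq(1-\epsilon)g(0)+\epsilon g(1)$ with $g(1)=0$, and therefore $\sup_{\epsilon\in(0,1]}g(\epsilon)\leq g(0)$. With that detail spelled out, your argument is complete. The scaling argument for the equivalence of the two variational expressions matches the paper's.
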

\begin{proof}
    Slightly modifying the proof for Property 3 in Lemma \ref{Lem:General_Properties} allows us to write
    \begin{align}
        D^\mathcal{M} \left( \rho \middle\| \sigma \right) = \sup_{\alpha \in [1/2,1)} D_\alpha^\mathcal{M} \left( \rho \middle\| \sigma \right) &\leq \sup_{\alpha \in [1/2,1) } \sup_{\stackrel{\omega > 0}{\omega \in C_\mathcal{M}}} \frac{1}{\alpha -1 } \log \left( \tr[\rho \omega^{\frac{\alpha - 1}{\alpha}}]^\alpha \tr[\sigma \omega]^{1-\alpha} \right) \\
        &= \sup_{\alpha \in [1/2,1) } \sup_{\stackrel{\omega > 0}{\omega \in C_\mathcal{M}}} \frac{\alpha}{\alpha -1} \log \tr[ \rho \omega^{\frac{\alpha-1}{\alpha}}] - \log \tr[\sigma \omega ] \, ,
    \end{align}
    where we used Theorem \ref{Thm:Variational_Q} for the inequality step. By the spectral theorem, we can decompose the optimization variable as \(\omega = \sum_{i=1}^d \lambda_i P_i\). Then, for \(\alpha \in [1/2,1)\) we have
    \begin{align}
        \frac{\alpha}{\alpha -1} \log \tr[ \rho \omega^{\frac{\alpha-1}{\alpha}}] &= \frac{\alpha}{\alpha -1} \log( \sum_{i=1}^d \lambda_i^{\frac{\alpha-1}{\alpha}} \tr[\rho P_i] ) \\
        &\leq \frac{\alpha}{\alpha -1} \sum_{i=1}^d \log( \lambda_i^{\frac{\alpha-1}{\alpha}}) \tr[\rho P_i] = \tr[ \rho \log\omega] \, ,
    \end{align}
where the inequality step follows from the convexity of \(f(t):= \frac{\alpha}{\alpha-1}\log(t)\) using the Jensen inequality and \(\sum_i \tr[\rho P_i] = \tr[\rho] = 1\). With this, we conclude 
\begin{equation}
    D^{\mathcal{M}} \left(\rho \middle\| \sigma \right) \leq 
    \sup_{\stackrel{\omega > 0}{\omega \in C_\mathcal{M}}} \tr[ \rho \log \omega ] - \log \tr[\sigma \omega ] \, .
\end{equation}

The equivalence of the two variational expressions can be shown similar as it was done in Theorem \ref{Thm:Variational_Q}. We use the same parameterization for the optimization variable to obtain
\begin{equation}
    \sup_{\stackrel{\omega > 0}{\omega \in C_\mathcal{M}}} \tr[\rho \log \omega ] +1 - \tr[\sigma \omega] = \sup_{\lambda > 0} \sup_{\stackrel{\omega > 0}{\omega \in C_\mathcal{M} \, \cap \, \mathcal{S}} }\tr[\rho \log( \lambda \omega) ] +1 - \tr[\sigma (\lambda \omega)]
\end{equation} 
A straightforward calculation shows that the expression is maximized for \(\lambda^\star = \frac{1}{\tr[\sigma \omega]} > 0\). This results in the remaining variational expression
\begin{equation}
    \sup_{\stackrel{\omega > 0}{\omega \in C_\mathcal{M} \cap \mathcal{S}} } \tr[\rho \log \omega ] - \log \tr[\sigma \omega] \, .
\end{equation}
Since the new objective function is scaling invariant, we can optimize over the whole cone \(C_\mathcal{M}\) without changing the optimal value.
\end{proof}

At \(\alpha = \infty\), we similarly obtain the following characterization of the measured max-divergence.
\begin{corollary}\label{Thm:Variational_Max}
    For \(\rho, \sigma \in \mathcal{S} \) and nonempty \(\mathcal{M} \subseteq \text{ALL}\), we have
    \begin{equation}
    D_{\max}^{\mathcal{M}} \left(\rho \middle\| \sigma \right) \leq \sup_{\stackrel{\omega > 0}{\omega \in C_\mathcal{M}}} \log \tr[\rho \omega] + 1 - \tr[\sigma \omega] = \sup_{\stackrel{\omega > 0}{\omega \in C_\mathcal{M}}} \log(\frac{\tr[\rho \omega]}{\tr[\sigma \omega]})\, .
\end{equation}

\end{corollary}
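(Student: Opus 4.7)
The plan is to mirror the strategy of Corollary \ref{Thm:Variational_Umegaki}, using Property 3 of Lemma \ref{Lem:General_Properties}, which writes \(D_\infty^\mathcal{M}(\rho\|\sigma) = \sup_{\alpha \in (1,\infty)} D_\alpha^\mathcal{M}(\rho\|\sigma)\), and then feeding in the already-established upper bound from Theorem \ref{Thm:Variational_Q} part (3). The whole argument reduces to a single scalar Jensen step followed by the same rescaling trick used in Corollary \ref{Thm:Variational_Umegaki}.

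First, combining Theorem \ref{Thm:Variational_Q} part (3) with Eq.\ \eqref{Def:Measured_Q_Connection}, I get that for each \(\alpha \in (1,\infty)\),
\[
D_\alpha^\mathcal{M}(\rho\|\sigma) \leq \sup_{\stackrel{\omega > 0}{\omega \in C_\mathcal{M}}} \frac{\alpha}{\alpha-1} \log \tr[\rho\, \omega^{\frac{\alpha-1}{\alpha}}] - \log \tr[\sigma\, \omega].
\]
The key step is then to bound the first term \emph{uniformly in} \(\alpha\). Writing the spectral decomposition \(\omega = \sum_{i=1}^d \lambda_i P_i\) with \(\lambda_i > 0\) (since \(\omega > 0\)) and noting that \(w_i := \tr[\rho P_i]\) is a probability distribution, Jensen's inequality applied to the convex function \(x \mapsto x^{\alpha/(\alpha-1)}\) (whose exponent exceeds \(1\) for \(\alpha > 1\)) gives
\[
\left(\sum_i w_i\, \lambda_i^{\frac{\alpha-1}{\alpha}}\right)^{\!\frac{\alpha}{\alpha-1}} \leq \sum_i w_i\, \lambda_i,
\]
which after taking logarithms becomes \(\frac{\alpha}{\alpha-1} \log \tr[\rho\, \omega^{\frac{\alpha-1}{\alpha}}] \leq \log \tr[\rho\, \omega]\). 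Substituting this back yields an upper bound independent of \(\alpha\), and taking the supremum over \(\alpha \in (1,\infty)\) on the left-hand side produces the desired inequality \(D_{\max}^\mathcal{M}(\rho\|\sigma) \leq \sup_{\omega > 0,\, \omega \in C_\mathcal{M}} \log \tr[\rho \omega] - \log \tr[\sigma \omega]\).

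For the equivalence with the additive form \(\log \tr[\rho \omega] + 1 - \tr[\sigma \omega]\), I would mimic the normalization step of Corollary \ref{Thm:Variational_Umegaki}: using the cone property, parameterize \(\omega = \lambda \omega'\) with \(\omega' \in C_\mathcal{M} \cap \mathcal{S}\) and \(\lambda > 0\), and optimize the additive objective \(\log(\lambda \tr[\rho\omega']) + 1 - \lambda \tr[\sigma \omega']\) over \(\lambda\) analytically. The stationarity condition \(1/\lambda = \tr[\sigma \omega']\) gives \(\lambda^\star = 1/\tr[\sigma \omega'] > 0\), and substituting this value leaves exactly \(\log\tr[\rho \omega'] - \log \tr[\sigma \omega']\). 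Scaling invariance of this ratio then allows one to relax the normalization and optimize over the whole cone \(C_\mathcal{M}\), matching the second form.

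I do not expect any genuine obstacle: the argument is entirely scalar (no operator Jensen is needed beyond what Theorem \ref{Thm:Variational_Q} already provides), and the only point requiring mild care is to work on \(\omega > 0\) so that the fractional powers \(\lambda_i^{(\alpha-1)/\alpha}\) and the logarithms are well-defined — which is already built into the domain of the variational problem.
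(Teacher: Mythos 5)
Your proposal is correct and follows essentially the same route as the paper's proof: Property 3 of Lemma \ref{Lem:General_Properties} combined with Theorem \ref{Thm:Variational_Q}, a scalar Jensen step on the spectral decomposition of \(\omega\) to remove the \(\alpha\)-dependence (the paper invokes concavity of \(t \mapsto t^{(\alpha-1)/\alpha}\) where you invoke convexity of its inverse \(x \mapsto x^{\alpha/(\alpha-1)}\) — the same inequality), and the identical rescaling argument with \(\lambda^\star = 1/\tr[\sigma\omega]\) for the equivalence of the two variational forms.
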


\begin{proof}

Using Property 3 of Lemma \ref{Lem:General_Properties}, we obtain 
\begin{align}
    D_{\max}^\mathcal{M} \left( \rho \middle\| \sigma \right) = \sup_{\alpha\in(1,\infty)} D_\alpha^\mathcal{M} \left( \rho \middle\| \sigma \right) &\leq \sup_{\alpha\in(1,\infty)} \sup_{\stackrel{\omega > 0}{\omega \in C_\mathcal{M}}} \log \left( \tr[\rho \omega^{\frac{\alpha - 1}{\alpha}}]^\alpha \tr[\sigma \omega]^{1-\alpha} \right) \\
    &= \sup_{\alpha\in(1,\infty)} \sup_{\stackrel{\omega > 0}{\omega \in C_\mathcal{M}}} \frac{\alpha}{\alpha -1} \log \tr[ \rho \omega^{\frac{\alpha-1}{\alpha}}] - \log \tr[\sigma \omega ] \, ,
\end{align}
where we used Theorem \ref{Thm:Variational_Q} for the inequality step. Using the decomposition \(\omega = \sum_{i=1}^d \lambda_i P_i\), we have for \(\alpha\in(1,\infty)\) that
    \begin{align}
        \frac{\alpha}{\alpha -1} \log \tr[ \rho \omega^{\frac{\alpha-1}{\alpha}}] &= \frac{\alpha}{\alpha -1} \log( \sum_{i=1}^d \lambda_i^{\frac{\alpha-1}{\alpha}} \tr[\rho P_i] ) \leq \log( \sum_{i=1}^d \lambda_i \tr[\rho P_i] ) = \log \tr[ \rho \omega] \, ,
    \end{align}
where the inequality step follows from the concavity of \(f(t):= t^\frac{\alpha-1}{\alpha}\) with the Jensen inequality. This allows us to conclude that 
\begin{align}
    D_{\max}^\mathcal{M} \left( \rho \middle\| \sigma \right) \leq \sup_{\stackrel{\omega > 0}{\omega \in C_\mathcal{M}}} \log( \frac{\tr[ \rho \omega]}{\tr[ \sigma \omega]} ) \, .
\end{align}

The equivalence of the two variational expressions can again be shown similar as it was done in Theorem \ref{Thm:Variational_Q}. We use the same parameterization of the optimization variable to obtain
\begin{equation}
    \sup_{\stackrel{\omega > 0}{\omega \in C_\mathcal{M}}} \log \tr[\rho \omega ] + 1 - \tr[\sigma \omega] = \sup_{\lambda > 0} \sup_{\stackrel{\omega > 0}{\omega \in C_\mathcal{M} \, \cap \, \mathcal{S}} } \log \tr[\rho (\lambda \omega) ] +1 - \tr[\sigma (\lambda \omega)]
\end{equation} 
A straightforward calculation shows that the expression is maximized for \(\lambda^\star = \frac{1}{\tr[\sigma \omega]} > 0\). This results in the remaining variational expression
\begin{equation}
    \sup_{\stackrel{\omega > 0}{\omega \in C_\mathcal{M} \, \cap \, \mathcal{S}} } \log \tr[\rho \omega ] - \log \tr[\sigma \omega] = \sup_{\stackrel{\omega > 0}{\omega \in C_\mathcal{M} \, \cap \, \mathcal{S}} } \log( \frac{\tr[ \rho \omega]}{\tr[ \sigma \omega]} ) \, .
\end{equation}
Since the new objective function is scaling invariant, we can optimize over the whole cone \(C_\mathcal{M}\) without changing the optimal value.
\end{proof}


\subsubsection{Discussion of Proof}\label{Sec:Discussion_Proof}

Let us conclude this section with some observations about sufficient conditions for when the given variational characterization is exact. Note that the proof of Theorem \ref{Thm:Variational_Q} essentially boils down to two inequality steps. The first is an application of the operator Jensen inequality. A sufficient condition for this step to become tight is when the cone \(C_\mathcal{M}\) is closed under power functions. The second step relates the cone \(C_\mathcal{M}\) to the measurement set \(\mathcal{M}\). A sufficient condition for equality is that the cone has the same mathematical structure as the measurement set. Note that both of these conditions are satisfied in the unrestricted case. Here, \(C_\text{ALL}\) is equal to the cone of positive semi-definite operators \(\mathcal{P}\). It was shown in \cite[Lemma 3]{Berta1} that in this case we do have equality in Eq.\ \eqref{Eq:Compact_Variational}.


\subsection{Locally-Measured R\'enyi Divergence}\label{Sec:Local_Variational_Formula}

Let us now take a closer look at the locality-constrained measurements sets introduced in Sec.\ \ref{Sec:Local_Measurement_Sets}. Here, we are able to derive an exact variational characterization for the two sets LO\((A:B)\) and LOCC\(_1(A\to B)\). These results generalize \cite[Lemma 4]{Berta2} for the measured relative entropy. We obtain these characterizations by embedding the states in a larger Hilbert space. We further provide an argument for why this is required and why the variational bound is not tight for \(\alpha \in (0,\infty)\) in general. In App.\ \ref{App:Variational_Bound_Tight}, we then give explicit counterexamples where the gap between the measured divergence and the variational characterization is strict.


\subsubsection{LO Measurements}

We start with the set of local measurements. Before we consider the general case, however, let us initially focus on the special case of projective local measurements. Here, the measurement cone is given by 
\begin{equation}
    C_{\text{P-LO}(A:B)} = \left\{ \sum_{x=1}^{d_A} \sum_{y=1}^{d_B} \lambda_{x,y} P_{A}^x \otimes P_{B}^y \, \middle| \, \lambda_{x,y} \geq 0 \wedge P_A^xP_A^{x'} = \delta_{x,x'} P_A^x \wedge P_B^yP_B^{y'} = \delta_{y,y'} P_A^y \right\} \, .
\end{equation}
Notice that this is the set of all operators that are classical in some basis on \(A \otimes B\). In this case, the variational bound is tight. 
\begin{lemma}\label{Lem:Projective_LO}
    Let \(\rho, \sigma \in \mathcal{S}_{AB}\) and \(\alpha>0\). With definitions as above, we have
    \begin{equation}
        D_\alpha^{\text{P-LO}(A:B)} \left( \rho \middle\| \sigma \right) = V_\alpha^{\text{P-LO}(A:B)} (\rho,\sigma) \, .
    \end{equation}
\end{lemma}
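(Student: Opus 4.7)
The upper bound \(D_\alpha^{\text{P-LO}(A:B)}(\rho\|\sigma) \leq V_\alpha^{\text{P-LO}(A:B)}(\rho,\sigma)\) is already furnished by Theorem \ref{Thm:Variational_Q} together with Corollaries \ref{Thm:Variational_Umegaki} and \ref{Thm:Variational_Max}, so only the reverse inequality needs to be argued. The plan is to show that every feasible \(\omega\) in the variational program is canonically associated with a P-LO PVM \(M\) such that \(\eta_\alpha((\rho,\sigma);\omega) \leq D_\alpha(\mu_\rho^M\|\mu_\sigma^M)\), which by definition of the measured divergence is bounded above by \(D_\alpha^{\text{P-LO}(A:B)}(\rho\|\sigma)\).

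The key structural fact is that any strictly positive \(\omega \in C_{\text{P-LO}(A:B)}\) admits, by definition of the cone, a representation \(\omega = \sum_{x,y} \lambda_{x,y}\, P_A^x \otimes P_B^y\) with all \(\lambda_{x,y} > 0\), where \(\{P_A^x \otimes P_B^y\}_{x,y}\) is itself a P-LO PVM --- call it \(M\). Since these projectors are pairwise orthogonal and sum to \(1_{AB}\), functional calculus collapses to \(\omega^t = \sum_{x,y} \lambda_{x,y}^t\, P_A^x \otimes P_B^y\) for every real \(t\), and consequently \(\tr[\rho\, \omega^t] = \sum_{x,y} \lambda_{x,y}^t\, \mu_\rho^M(x,y)\) (and analogously for \(\sigma\)). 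Substituting these identities turns \(\eta_\alpha((\rho,\sigma);\omega)\) into an entirely classical expression in the post-measurement probability vectors \(\mu_\rho^M, \mu_\sigma^M\) and the free positive scalars \(\lambda_{x,y}\).

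From here the problem reduces to the classical analogue of Theorem \ref{Thm:Variational_Q}. For \(\alpha \in [1/2,1)\cup(1,\infty)\), the choice \(\lambda_{x,y} \propto (\mu_\rho^M(x,y)/\mu_\sigma^M(x,y))^\alpha\) is an unconstrained extremum of the resulting scalar objective (by the appropriate (quasi-)convexity in the \(\lambda_{x,y}\)), and substituting it makes both \(\tr[\rho\,\omega^{(\alpha-1)/\alpha}]\) and \(\tr[\sigma\,\omega]\) equal to \(Q_\alpha(\mu_\rho^M\|\mu_\sigma^M)\), yielding \(\eta_\alpha((\rho,\sigma);\omega) \leq D_\alpha(\mu_\rho^M\|\mu_\sigma^M)\). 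The orders \(\alpha\in(0,1/2)\), \(\alpha = 1\) and \(\alpha = \infty\) follow in the same spirit from Eqs.\ \eqref{Eq:Additive_Expression}--\eqref{Eq:Multiplicative_Expression}: the first uses the dual exponent \(\omega^{\alpha/(\alpha-1)}\); the Umegaki case reduces to a Gibbs inequality in the normalised vector \(r(x,y)\propto \lambda_{x,y}\,\mu_\sigma^M(x,y)\); and the max-divergence case is immediate from bounding a positive-coefficient ratio of two linear forms by its largest coefficient ratio. Taking the supremum over \(\omega\) then concludes the proof. The single step that needs actual verification, rather than invocation, is the commutativity-driven identity \(\omega^t = \sum_{x,y} \lambda_{x,y}^t\, P_A^x \otimes P_B^y\); as previewed in Sec.\ \ref{Sec:Discussion_Proof}, this is exactly the rigidity that breaks down once the cone \(C_\mathcal{M}\) is no longer spanned by commuting operators, explaining why tightness of the variational bound should not be expected to persist outside the projective local regime.
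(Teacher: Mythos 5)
Your proposal is correct and follows essentially the same route as the paper: fix a strictly positive \(\omega\) in the cone, use the commutativity of the product projectors to collapse the functional calculus so that the objective becomes a classical expression in the induced distributions \(\mu_\rho^P,\mu_\sigma^P\) and the coefficients \(\lambda_{x,y}\), optimize over the \(\lambda_{x,y}\) analytically to recover \(D_\alpha(\mu_\rho^P\|\mu_\sigma^P)\), and take the supremum. The only cosmetic difference is that you phrase the argument via the scale-invariant objective \(\eta_\alpha\) while the paper uses the additive form \(\nu_\alpha\); since the two are shown equivalent, this is immaterial.
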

\begin{proof}
    We only need to prove that 
    \begin{equation}\label{Eq:Projective_Upper_Bound}
        \sup_{\omega > 0}\bigg\{ \nu_\alpha \big( (\rho,\sigma); \omega \big) \, \bigg| \, \omega \in C_{\text{P-LO}(A:B)} \bigg\} \leq D_\alpha^{\text{P-LO}(A:B)} \left( \rho \middle\| \sigma \right) \, .
    \end{equation}
    Let us illuminate the argument with the case \(\alpha \in (0,1/2)\). First, note that by definition any \( \omega \in C_{\text{P-LO}(A:B)} \) decomposes into
    \begin{equation}
        \omega = \sum_{x = 1}^{d_A} \sum_{y =1}^{d_B} \lambda_{x,y} P_{A}^x \otimes P_{B}^y \, .
    \end{equation}
    Since these are diagonal operators, we can rewrite the objective function as 
    \begin{equation}
        \exp \left( (\alpha-1) \nu_\alpha \big( (\rho, \sigma), \omega \big) \right) = \sum_{x=1}^{d_A} \sum_{y=1}^{d_B} \alpha \lambda_{x,y} \mu_\rho^P(x,y) + (1-\alpha) \lambda_{x,y}^{\frac{\alpha}{\alpha-1}} \mu_\sigma^P(x,y) \, .
    \end{equation}
    Here, we identified the measurement statistics of a local rank-1 PVM with elements \(P_{AB}^{(x,y)} = P_A^x \otimes P_B^y\). An analytical minimization of the expression w.r.t.\ the coefficients \(\lambda_{x,y}\) yields the optimizer
    \begin{equation}
        \lambda_{x,y}^\star = \left( \frac{ \mu_\rho^P(x,y)}{ \mu_\sigma^P(x,y)} \right)^{\alpha-1} \geq 0 \, .
    \end{equation}
    With this, we can see that the variational expression is upper bounded by
    \begin{equation}
        \sup_{\stackrel{\omega > 0}{\omega \in C_{\text{P-LO}(A:B)} } } \nu_\alpha \big( (\rho,\sigma); \omega \big) \leq \sup_{P = P_A \otimes P_B} D_\alpha \left( \mu_\rho^P \middle\| \mu_\sigma^P \right) = D_\alpha^{\text{P-LO}(A:B)} \left( \rho \middle\| \sigma \right) \, .
    \end{equation}

    An analogous argument can then be used to show the claim for the remaining orders \(\alpha\).
\end{proof} 

With this in place, we turn to the general case. On first look, the set \( C_\text{LO} \) consists of separable positive semi-definite operators of the form
\begin{equation}\label{Eq:Local_Cone}
     \omega = \sum_{x \in \mathcal{X}} \sum_{y \in \mathcal{Y}} \lambda_{x,y} M_A^x \otimes M_B^y \, ,
\end{equation}
where \(\lambda_{x,y} \geq 0\) and \( \{M_A^x\}_{x \in \mathcal{X}}\) and \(\{M_B^y\}_{y \in \mathcal{Y}}\) are POVMs over finite alphabets \(\mathcal{X}\) and \(\mathcal{Y}\), respectively. That is, the cone \( C_\text{LO} \) is clearly a subset of the cone of separable operators. It turns out, however, that \( C_\text{LO} \) is equal to the complete cone of separable operators. 

\begin{lemma}\label{Lem:Local_Cone}
     \(C_\text{LO}\) as defined in Eq.\ \eqref{Def:Measurement_Cone} is equal to the cone of separable operators.
\end{lemma}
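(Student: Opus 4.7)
The plan is to prove the two containments separately. The inclusion $C_{\text{LO}} \subseteq \text{SEP}$ is immediate from the defining form in Eq.\ \eqref{Eq:Local_Cone}: every element is a conical combination of products of positive operators $M_A^x \otimes M_B^y$, which is precisely the form of a separable positive operator. All the substance lies in the reverse inclusion.

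For the converse, I would start from an arbitrary separable decomposition $\omega = \sum_{i=1}^{k} p_i A_i \otimes B_i$ with $p_i \geq 0$ and $A_i, B_i \geq 0$, which exists by definition of the separable cone. After discarding zero summands, I rescale as $A_i = a_i \tilde{A}_i$ and $B_i = b_i \tilde{B}_i$, where $a_i = \tr[A_i] > 0$ and $b_i = \tr[B_i] > 0$ so that $\tilde{A}_i$ and $\tilde{B}_i$ are density operators. Since a density operator has operator norm at most one, $\tilde{A}_i \leq 1_A$, so the rescaled operators $\tilde{A}_i/k$ obey $\sum_{i=1}^{k} \tilde{A}_i/k \leq 1_A$. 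This allows me to complete $\{\tilde{A}_i/k\}_{i=1}^{k}$ to a valid POVM $\{M_A^x\}_{x=0}^{k}$ on $A$ by setting $M_A^i := \tilde{A}_i/k$ for $i \geq 1$ and $M_A^0 := 1_A - \sum_{i=1}^{k} \tilde{A}_i/k \geq 0$. Constructing $\{M_B^y\}_{y=0}^{k}$ analogously on $B$, the original decomposition rearranges as
\[
    \omega = \sum_{i=1}^{k} (k^2 p_i a_i b_i)\, M_A^i \otimes M_B^i = \sum_{x,y=0}^{k} \lambda_{x,y}\, M_A^x \otimes M_B^y \, ,
\]
with nonnegative coefficients $\lambda_{x,y}$ (all cross terms and boundary-index terms simply set to zero). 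This matches Eq.\ \eqref{Eq:Local_Cone} exactly and thus places $\omega$ in $C_{\text{LO}}$.

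The one conceptual point worth highlighting is that the definition of $C_{\text{LO}}$ forces a \emph{single} POVM on each side, so one cannot just read off each summand of a separable decomposition as its own product of rank-one POVMs; all the local pieces must be bundled into a common pair of POVMs. The rescaling by $k$ and the complementary elements $M_A^0$, $M_B^0$ are precisely the bookkeeping needed for this bundling, and I expect no obstacle beyond that.
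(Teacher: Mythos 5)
Your proof is correct and follows essentially the same strategy as the paper's: normalize each local factor of a separable decomposition to a density operator, scale down by the number of terms so the pieces sum to at most the identity, complete to a POVM with complementary element(s), and recover $\omega$ as a conical combination of the resulting product POVM elements. The only cosmetic difference is that you pad with a single global complement $M_A^0$ whereas the paper uses a per-term complement for each summand, yielding a $2|\mathcal{Z}|$-outcome POVM; both bookkeeping choices work.
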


\begin{proof}

To see this consider an arbitrary separable operator which we can write without loss of generality as \( \omega = \sum_{z \in \mathcal{Z}} X^z \otimes Y^z \) for some finite set of positive semi-definite operators \(\{X^z\}_z\) and \(\{Y^z\}_z\), respectively.  

Now, observe that we can always associate a two-outcome measurement to any positive semi-definite operator \(X\) using the elements \( M_X := X/\tr[X] \) and \(1_\dutchcal{H}-M_X\). If we do this for each individual \(X^z\) and re-scale all POVM elements by \(|\mathcal{Z}|\), we obtain a valid POVM on \(A\) with \(2 |\mathcal{Z}| \) outcomes since 
\begin{equation}
    \sum_{z \in \mathcal{Z} } \frac{X^z}{|\mathcal{Z}| \tr[X^z]} + \frac{1}{|\mathcal{Z}|}\left( 1_A - \frac{X^z}{\tr[X^z]} \right) = \sum_{z \in \mathcal{Z} } \frac{1_A}{|\mathcal{Z}|} = 1_A \, .
\end{equation}
Similarly, we define a valid POVM on \(B\) with \(2 |\mathcal{Z}| \) outcomes corresponding to the set of operators \(Y^z\). 

It is then easy to see that the separable operator \(\omega\) lies in the conic hull of the product POVM. For this, we take only the POVM elements \(X^z/( |\mathcal{Z}| \tr[X^z] )\otimes Y^z/( |\mathcal{Z}| \tr[Y^z]) \) and scale them by the positive coefficients \( \lambda_z =  |\mathcal{Z}|^2 \tr[X^z] \tr[X^z]\). Then, we have
\begin{equation}
    \sum_{z \in \mathcal{Z}} |\mathcal{Z}|^2 \tr[X^z] \tr[Y^z] \left( \frac{X^z}{ |\mathcal{Z}| \tr[X^z] } \otimes \frac{ Y^z}{ |\mathcal{Z}| \tr[Y^z] } \right) = \sum_{z \in \mathcal{Z}} X^z \otimes Y^z = \omega \, .
\end{equation}
\end{proof}

As a result, the cone \(C_\text{LO}\) is a strict superset to LO. This suggests that the inequality in the variational characterization of Eq.\ \eqref{Eq:Compact_Variational} is strict in the general case (see App.\ \ref{App:Variational_Bound_Tight} for examples). We can, however, adapt the proof idea of \cite[Lemma 4]{Berta2} to obtain an exact characterization. The main idea is to use Naimark's dilation theorem to map the problem of optimizing over general POVMs to one of optimizing over PVMs on a larger Hilbert space. For this, let us introduce the spaces \(A'\) isomorphic to \(A \otimes A\) and \(B'\) isomorphic to \(B \otimes B\). We then have the following result for the set LO that generalizes a corresponding result from \cite[Equation 47]{Berta2} for the measured relative entropy. 
\begin{proposition}\label{Prop:Exact_LO_Variational}
    Let \( \rho_{AB}, \sigma_{AB} \in \mathcal{S}_{AB}\) and \(\alpha>0\). Further, let \(A' \otimes B'\) as defined above and \(\rho_{A'B'}, \sigma_{A'B'}\) be local embeddings of \(\rho_{AB}\) and \(\sigma_{AB}\) in this space, respectively. We then have with notation as defined above that
    \begin{equation}
        D_\alpha^{\text{LO}(A:B)} \left( \rho_{AB} \middle\| \sigma_{AB} \right) = V_\alpha^{\text{P-LO}(A':B')} (\rho_{A'B'},\sigma_{A'B'})
    \end{equation}

    Furthermore, the optimal local measurements are (rank-1) POVMs with at most \(d_A^2\) and \(d_B^2\) outcomes on \(A\) and \(B\), respectively.
    
\end{proposition}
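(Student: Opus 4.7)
My plan is to reduce the LO$(A:B)$ problem to a P-LO$(A':B')$ problem on the dilated bipartite system, where Lemma \ref{Lem:Projective_LO} already provides the variational formula. Concretely, I aim to establish
\begin{equation}
    D_\alpha^{\text{LO}(A:B)}\left(\rho_{AB}\middle\|\sigma_{AB}\right) = D_\alpha^{\text{P-LO}(A':B')}\left(\rho_{A'B'}\middle\|\sigma_{A'B'}\right),
\end{equation}
where the local embedding takes the form $\rho_{A'B'} = \rho_{AB}\otimes\ketbra{\phi_A}{\phi_A}_{A''}\otimes\ketbra{\phi_B}{\phi_B}_{B''}$ for fixed unit vectors $\ket{\phi_A}\in A''$ and $\ket{\phi_B}\in B''$ on ancillary systems $A''\simeq A$ and $B''\simeq B$, so that $A'\simeq A\otimes A''$ and $B'\simeq B\otimes B''$. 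Chaining with Lemma \ref{Lem:Projective_LO} then yields the asserted equality to $V_\alpha^{\text{P-LO}(A':B')}$.

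\textbf{Upper bound ($\leq$).} For fixed $M_B$, the map $M_A\mapsto D_\alpha(\mu_{\rho_{AB}}^{M_A\otimes M_B}\|\mu_{\sigma_{AB}}^{M_A\otimes M_B})$ is quasi-convex on the convex compact set of POVMs on $A$ (the induced measures are affine in $M_A$ and the classical R\'enyi divergence is quasi-convex, cf.\ Sec.\ \ref{Sec:Optimal_Measurements}), so by Bauer's maximum principle its supremum is attained at an extremal POVM. By the classical characterization of extremal POVMs due to Davies, these have rank-one elements and at most $d_A^2$ outcomes; the same holds on $B$. For such a POVM $\{M_A^x\}$, Naimark's dilation theorem yields a rank-one PVM $\{P_{A'}^x\}$ on $A'=A\otimes A''$ together with the fixed ancilla vector $\ket{\phi_A}$ such that $M_A^x = \bra{\phi_A}_{A''}P_{A'}^x\ket{\phi_A}_{A''}$; the dimensions match because $\dim A' = d_A^2$ accommodates any rank-one PVM with at most $d_A^2$ outcomes. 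Proceeding analogously on $B$, the product PVM $\{P_{A'}^x\otimes P_{B'}^y\}$ is a P-LO$(A':B')$ measurement whose measure on $\rho_{A'B'}$ coincides with $\mu_{\rho_{AB}}^{M_A\otimes M_B}$ (and similarly for $\sigma$), establishing the $\leq$ direction.

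\textbf{Lower bound ($\geq$).} Conversely, given any P-LO$(A':B')$ rank-one PVM $\{P_{A'}^x\otimes P_{B'}^y\}$, define the pull-back $M_A^x := \bra{\phi_A}_{A''}P_{A'}^x\ket{\phi_A}_{A''}$ and $M_B^y := \bra{\phi_B}_{B''}P_{B'}^y\ket{\phi_B}_{B''}$. These compressions inherit positivity from the projectors, and $\sum_x M_A^x = \bra{\phi_A}1_{A'}\ket{\phi_A} = 1_A$ (similarly on $B$), so $\{M_A^x\otimes M_B^y\}$ is an LO$(A:B)$ POVM. The identity $\tr[(\rho_{AB}\otimes\ketbra{\phi_A\phi_B}{\phi_A\phi_B})(P_{A'}^x\otimes P_{B'}^y)] = \tr[\rho_{AB}(M_A^x\otimes M_B^y)]$ shows that the induced measures agree, yielding the reverse inequality. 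The outcome bound follows immediately because a rank-one PVM on $A'$ has at most $\dim A' = d_A^2$ elements (and analogously $d_B^2$ on $B$).

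\textbf{Main obstacle.} The subtle point is ensuring that the Naimark dilation lands precisely in $A\otimes A$, and not in a strictly larger ancilla---otherwise the equality with $V_\alpha^{\text{P-LO}(A':B')}$ on the specific system $A'B'$ would fail and the outcome bound would be lost. This is exactly what the preliminary reduction to at most $d_A^2$ (resp.\ $d_B^2$) rank-one outcomes buys, and it is the step that requires the most care: one must invoke quasi-convexity with respect to each marginal POVM separately and verify that the classical $D_\alpha$, including the boundary cases $\alpha\in\{1,\infty\}$, remains quasi-convex so that Bauer's principle applies uniformly in $\alpha>0$.
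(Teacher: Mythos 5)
Your proposal is correct and follows essentially the same route as the paper: restrict to extremal rank-one POVMs with at most $d_A^2$ (resp.\ $d_B^2$) outcomes via quasi-convexity, Naimark-dilate to rank-one PVMs on $A'$ and $B'$, and invoke Lemma \ref{Lem:Projective_LO}. The only cosmetic difference is that you establish the reverse inequality by explicitly compressing a PVM on $A'B'$ back to an LO POVM on $AB$, whereas the paper phrases the same step as local isometric invariance of the locally-measured divergence.
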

\begin{proof}

We adapt the proof idea of \cite[Lemma 4]{Berta2}. 

Note that the set LO\((A:B)\) is convex on the \(A\) system and thus we can conclude that the optimal measurement on \(A\) must be extremal (cf.\ Sec.\ \ref{Sec:Optimal_Measurements}). These extremal POVMs have at most \(d_A^2\) rank-1 elements by \cite[Theorem 2.21]{Holevo}. Moreover, Naimark's dilation theorem states that there exists a rank-1 projective measurement on \(A'\) that produces the same measurement statistics. Furthermore, since the set LO\((A:B)\) is convex on the \(B\) system as well, the same conclusions hold for the measurement on the \(B\) system. 

The local isometric invariance of the locally-measured R\'enyi divergence gives
\begin{equation}
    D_{\alpha}^{\text{LO}(A:B)}\left(\rho_{AB} \middle\| \sigma_{AB} \right) = D_{\alpha}^{\text{LO}(A':B')} \left(\rho_{A'B'} \middle\| \sigma_{A'B'} \right) 
\end{equation}
and by the above observation we can restrict w.l.o.g.\ to projective measurements in the latter optimization. The claim then follows by an application of Lemma \ref{Lem:Projective_LO}. 
\end{proof}


\subsubsection{LOCC Measurements}

We obtain a similar result for the class LOCC\(_1(A\to B)\). Note that Lemma \ref{Lem:Local_Cone} implies that the cone \(C_{\text{LOCC}_1}\) is also equal to the separable cone by a superset argument. (In fact, it implies this for the cone \(C_\text{LOCC}\) as well.) Recalling the sufficient conditions in Sec.\ \ref{Sec:Discussion_Proof}, the cone thus does not share the same structure as the measurement set. As for the set LO, this gives an intuition why the variational characterization in Eq.\ \eqref{Eq:Compact_Variational} is strict in the general case as we show in App.\ \ref{App:Variational_Bound_Tight}.

Nevertheless, if we restrict again the measurements to projective ones we obtain an exact characterization as in Lemma \ref{Lem:Projective_LO} in terms of an optimization over the cone
\begin{equation}
    C_{\text{P-LOCC}_1(A\to B)} = \left\{ \sum_{x=1}^{d_A} P_{A}^x \otimes \omega_B^x \, \middle| \, \omega_B^x \geq 0 \wedge P_A^x P_A^{x'} = \delta_{x,x'} P_A^x \right\} \, .
\end{equation}
Note that this is the set of operators that are classical-quantum in some basis on \(A\). This claim can be shown similar to Lemma \ref{Lem:Projective_LO} by introducing the eigendecomposition \( \omega_B^x = \sum_{y=1}^{d_B} \lambda_{y|x} P^{y|x} \) and then carrying out the optimization over \(\lambda_{y|x}\) analytically. 

Applying the Naimark's dilation argument from Proposition \ref{Prop:Exact_LO_Variational} then similarly yields an exact characterization for the class LOCC\(_1\) as well. The proof boils down to the observation that the optimal measurement is comprised of a (rank-1) POVM on \(A\) with at most \(d_A^2\) outcomes followed by a conditional projective measurement on \(B\). Therefore, we have with local embeddings of \(\rho_{A'B}\) and \(\sigma_{A'B}\) of \(\rho_{AB}\) and \(\sigma_{AB}\) in \(A' \otimes B\), respectively, that
\begin{equation}
    D_\alpha^{\text{LOCC}_1(A\to B)} \left( \rho_{AB} \middle\| \sigma_{AB} \right) = V_\alpha^{\text{P-LOCC}_1(A'\to B)} (\rho_{A'B},\sigma_{A'B}) \, .
\end{equation} 

For the general class LOCC, we do not know how to obtain an exact characterization since no good mathematical characterization of the set is available.


\subsubsection{SEP and PPT Measurements}

For the sets SEP\((A:B)\) and PPT\((A:B)\), note that \(C_\text{SEP}\) and \(C_\text{PPT}\) are simple and just consist of separable positive semi-definite operators and positive semi-definite operators with positive partial transpose, respectively. That is, the cones have the same mathematical structure as the respective measurement set. However, these sets are not closed under power functions and thus do not satisfy the second sufficient condition given in Sec.\ \ref{Sec:Discussion_Proof} (see App.\ \ref{App:Variational_Bound_Tight} for examples).


\subsection{Measured Max-Divergence}\label{Sec:Measured_Max_Divergence}

We conclude our investigation into variational characterizations of the measured R\'enyi divergence by taking a closer look at the order \(\alpha=\infty\), i.e.\ the max-divergence. Here, we show that the obtained variational characterization is in fact exact for the classes SEP and PPT. Moreover, the obtained variational upper bound can be re-written as a conic program in standard form under some mild assumptions on the measurement set. This enables us to derive a dual variational characterization that is crucial for our application to restricted hypothesis testing (cf.\ Sec.\ \ref{Sec:DataHiding} for details).


\subsubsection{SEP and PPT Measurements}

Recalling the sufficient conditions for equality, we outlined in Sec.\ \ref{Sec:Discussion_Proof}, we can see that both of these hold at \(\alpha=\infty\) for the sets SEP\((A:B)\) and PPT\((A:B)\). For this, note that the relevant power for \(\alpha\to\infty\) is the identity for which the closure property trivially holds. We are then able to show that the variational bound is tight.

\begin{proposition}\label{Lem:Exact_Variational_Max}
    For \(\rho, \sigma \in \mathcal{S}_{AB} \) and \(\mathcal{M} = \text{SEP} \) or PPT, we have
    \begin{equation}
        D_{\max}^{\mathcal{M}} \left(\rho \middle\| \sigma \right) = \sup_{\stackrel{\omega > 0}{\omega \in C_\mathcal{M}}} \log \tr[\rho \omega] + 1 - \tr[\sigma \omega] = \sup_{\stackrel{\omega>0}{\omega \in C_\mathcal{M}}} \log(\frac{\tr[\rho \omega]}{\tr[\sigma \omega]}) \, .
    \end{equation}
\end{proposition}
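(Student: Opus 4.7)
The ``$\leq$'' inequality and the equivalence of the two variational expressions have both been established in Corollary \ref{Thm:Variational_Max}, so the task reduces to proving the matching lower bound
\begin{equation*}
    D_{\max}^{\mathcal{M}}(\rho\|\sigma) \;\geq\; \log\!\frac{\tr[\rho\omega]}{\tr[\sigma\omega]} \qquad \text{for every } \omega>0 \text{ in } C_{\mathcal{M}}.
\end{equation*}
The plan is to construct, for each such $\omega$, an explicit POVM $M \in \mathcal{M}$ one of whose outcome-ratios already attains or exceeds the right-hand side; in conjunction with the definition of the classical max-divergence this gives the claim.

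For $\mathcal{M} = \text{PPT}(A:B)$ a direct two-outcome construction works. Given $\omega \in C_{\text{PPT}}$ with $\omega > 0$, I set $c := \max\{\|\omega\|_\infty, \|\omega^\Gamma\|_\infty\}$ and form $M^0 := \omega/c$ and $M^1 := 1_{AB} - \omega/c$. By the choice of $c$ both elements are positive, and they both have positive partial transpose: $(M^0)^\Gamma = \omega^\Gamma/c \geq 0$ by assumption on $\omega$, while $(M^1)^\Gamma = 1_{AB} - \omega^\Gamma/c \geq 0$ because $\omega^\Gamma/c \leq 1_{AB}$. Hence $M \in \text{PPT}(A:B)$, and the first-outcome ratio $\mu_\rho^M(0)/\mu_\sigma^M(0) = \tr[\rho\omega]/\tr[\sigma\omega]$ delivers the bound via $D_{\max}^{\text{PPT}}(\rho\|\sigma) \geq D_\infty(\mu_\rho^M\|\mu_\sigma^M)$.

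The case $\mathcal{M} = \text{SEP}(A:B)$ is the main obstacle, because the complement $1_{AB} - \omega/c$ of a separable POVM element is generally \emph{not} separable, so the two-outcome trick breaks down. I would instead exploit that $C_{\text{SEP}}$ coincides with $C_{\text{LO}}$ (Lemma \ref{Lem:Local_Cone}): any feasible $\omega$ admits a decomposition $\omega = \sum_{i,j} c_{ij}\, M_A^i \otimes M_B^j$ with $c_{ij} \geq 0$ and $\{M_A^i\}$, $\{M_B^j\}$ POVMs on $A$ and $B$. The product measurement $P := \{M_A^i \otimes M_B^j\}_{i,j}$ then lies in $\text{LO} \subseteq \text{SEP}$, and
\begin{equation*}
    \frac{\tr[\rho\omega]}{\tr[\sigma\omega]} \;=\; \frac{\sum_{i,j} c_{ij}\, \mu_\rho^P(i,j)}{\sum_{i,j} c_{ij}\, \mu_\sigma^P(i,j)} \;\leq\; \max_{i,j}\,\frac{\mu_\rho^P(i,j)}{\mu_\sigma^P(i,j)},
\end{equation*}
by the standard observation that a ratio of positively-weighted sums is bounded by the largest term-wise ratio. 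Taking logarithms gives $D_{\max}^{\text{SEP}}(\rho\|\sigma) \geq D_\infty(\mu_\rho^P\|\mu_\sigma^P) \geq \log(\tr[\rho\omega]/\tr[\sigma\omega])$. Strict positivity $\omega>0$ together with $\sigma \in \mathcal{S}$ guarantees $\tr[\sigma\omega]>0$; any $(i,j)$ with $\mu_\sigma^P(i,j)=0$ is either harmless or forces the right-hand side to be $+\infty$, so the bound is never vacuous.
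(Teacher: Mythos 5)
Your proof is correct, and for the SEP case it takes a genuinely different (and more careful) route than the paper. For PPT you do essentially what the paper does\,---\,build a two-outcome POVM directly from \(\omega\)\,---\,except that you normalize by \(c=\max\{\norm{\omega}_\infty,\norm{\omega^\Gamma}_\infty\}\) where the paper uses \(\tr[\omega]\); both choices make \(\omega/c\) and \(1_{AB}-\omega/c\) valid PPT effects, so this is only cosmetic. The real divergence is in the SEP case: the paper applies the same two-outcome construction there and simply asserts \(\{M_\omega,1_{AB}-M_\omega\}\in\text{SEP}\), which requires \(1_{AB}-M_\omega\) to be separable\,---\,exactly the point you flag, and rightly so, since the complement of a separable effect need not be separable (think of the projector onto the span of an unextendible product basis, whose complement is proportional to a bound entangled state), whereas in the PPT case \((1_{AB}-M_\omega)^\Gamma=1_{AB}-M_\omega^\Gamma\geq 0\) is automatic. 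Your workaround\,---\,decomposing \(\omega\in C_\text{SEP}=C_\text{LO}\) via Lemma \ref{Lem:Local_Cone} into a conic combination of elements of a product POVM and applying the mediant inequality\,---\,closes this gap cleanly, and as a bonus establishes the lower bound already at the level of LO measurements, giving the stronger statement \(D_{\max}^{\text{LO}}=D_{\max}^{\text{SEP}}=V_\infty^{\text{SEP}}\). The only step needing a word of care is the degenerate bookkeeping in the mediant inequality, and your remark handles it: \(\omega>0\) forces \(\tr[\sigma\omega]>0\), so some term has positive denominator, and any index with \(c_{ij}\mu_\rho^P(i,j)>0\) but \(\mu_\sigma^P(i,j)=0\) makes \(D_{\max}\left(\mu_\rho^P\middle\|\mu_\sigma^P\right)=+\infty\), so the bound holds trivially there.
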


\begin{proof}
    Let \(\mathcal{M} = \text{SEP}\) or PPT. The proof then essentially boils down to the fact that \(C_\mathcal{M}\) has the same mathematical structure as the measurement set. 
    
    Namely, we can define for any \(\omega \in C_\mathcal{M}\) a 2-outcome POVM as \(\{M_\omega, 1_\dutchcal{H}-M_\omega\} \in \mathcal{M} \) via the element \( M_\omega = \omega/\tr[\omega] \). Thus, for any \(\omega \in C_\mathcal{M}\) we have
    \begin{align}
        \log(\frac{\tr[\rho \omega]}{\tr[\sigma \omega]} ) &= \log(\frac{\tr[\rho M_\omega]}{\tr[\sigma M_\omega]} ) \leq D_{\max}\left(\mu_\rho^{M_\omega} \middle\| \mu_\sigma^{M_\omega} \right) \leq D_{\max}^\mathcal{M} \left( \rho \middle\| \sigma \right) \, .
    \end{align}
\end{proof}


\subsubsection{Dual Variational Expression}

If the cone \(C_\mathcal{M}\) is a proper cone\footnote{A cone is called proper if it is closed, convex, pointed and has nonempty interior.} and the set \(\mathcal{M}\) contains more than just the trivial measurement \(M=1_\dutchcal{H}\), we can rewrite the variational expression for the measured max-divergence into a conic program in standard form (cf.\ Watrous' lecture notes \cite[Lecture 1]{Watrous1}). Proceeding from there, we obtain the dual program. 
\begin{proposition}\label{Prop:Dual_Cone_Program}
For \(\rho, \sigma \in \mathcal{S}\) and nontrivial \(\mathcal{M} \subseteq \text{ALL}\) such that \(C_\mathcal{M}\) is a proper cone, we have
\begin{align}
     D_{\max}^{\mathcal{M}} \left( \rho \middle\| \sigma \right) \leq \sup_{\stackrel{\omega>0}{\omega \in C_\mathcal{M}}} \log(\frac{\tr[\rho \omega]}{\tr[\sigma \omega]}) = \inf_{\stackrel{\lambda > 0 }{\lambda \sigma - \rho \in C_{\mathcal{M}}^{\dagger}}} \log \lambda \, ,
\end{align}
with the dual cone \(C_\mathcal{M}^\dagger\) given by
\begin{equation}
    C_{\mathcal{M}}^{\dagger} := \big\{ \gamma \in \mathcal{H} \big| \tr[\gamma \omega] \geq 0 \; \forall \omega \in C_\mathcal{M} \big\} \, .
\end{equation}
\end{proposition}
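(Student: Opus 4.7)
The first inequality is immediate from Corollary~\ref{Thm:Variational_Max}; what remains is the equality of the primal supremum with the dual infimum, which I would establish via conic Lagrange duality as formulated in \cite[Lecture 1]{Watrous1}.

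First, I would exploit the scale invariance of the objective \(\log(\tr[\rho\omega]/\tr[\sigma\omega])\): for any \(\omega > 0\) in \(C_\mathcal{M}\) one has \(\tr[\sigma\omega] > 0\), and the substitution \(\omega \mapsto \omega/\tr[\sigma\omega]\) preserves both cone membership and the objective. Thus
\begin{equation}
\sup_{\stackrel{\omega > 0}{\omega \in C_\mathcal{M}}} \log\!\frac{\tr[\rho\omega]}{\tr[\sigma\omega]} \;=\; \log \sup_{\stackrel{\omega \in C_\mathcal{M}}{\tr[\sigma\omega] = 1}} \tr[\rho\omega] \, ,
\end{equation}
where I would drop the strict-positivity constraint inside the supremum using continuity of the objective together with the fact that the interior of \(C_\mathcal{M}\) is dense in \(C_\mathcal{M}\).

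Second, the inner supremum is a conic program in standard form with proper cone \(K = C_\mathcal{M}\), objective operator \(A = \rho\), linear map \(\Phi: \omega \mapsto \tr[\sigma\omega]\), and right-hand side \(B = 1\). Since the adjoint is \(\Phi^*(\lambda) = \lambda\sigma\), Watrous' recipe immediately yields the dual
\begin{equation}
\inf_{\lambda \in \mathbb{R}} \bigg\{ \lambda \, \bigg| \, \lambda\sigma - \rho \in C_\mathcal{M}^\dagger \bigg\} \, ,
\end{equation}
which matches the claimed right-hand side after verifying positivity of the optimal \(\lambda\).

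Third, I would verify Slater's condition to rule out a duality gap. Because \(C_\mathcal{M}\) is proper it has nonempty interior, and because \(C_\mathcal{M} \subseteq \mathcal{P}\) (every element is a conical combination of POVM elements, hence positive semidefinite) any interior point \(\omega_0\) must be strictly positive: an operator with nontrivial kernel lies on the boundary of \(\mathcal{P}\), and hence on the boundary of \(C_\mathcal{M}\). Then \(\tr[\sigma\omega_0] > 0\), and \(\omega_0/\tr[\sigma\omega_0]\) is a strictly feasible interior primal point, which suffices for strong duality without gap. Lastly, dual feasibility evaluated at any strictly positive \(\omega \in C_\mathcal{M}\) forces \(\lambda \geq \tr[\rho\omega]/\tr[\sigma\omega] > 0\), so the outer \(\log\lambda\) is well defined and the infimum is strictly positive. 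The main technical hurdle I anticipate is precisely this Slater step, specifically the observation that the interior of \(C_\mathcal{M}\) consists of positive-definite operators; once that is in hand, the rest is a bookkeeping exercise against the Watrous template.
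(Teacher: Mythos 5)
Your proposal follows essentially the same route as the paper: normalize via scale invariance to get the standard-form conic program $\sup\{\tr[\rho\omega] \mid \omega\in C_\mathcal{M},\ \tr[\sigma\omega]=1\}$, read off the dual from Watrous' template using $\phi(\omega)=\tr[\sigma\omega]$ and $\phi^\dagger(\lambda)=\lambda\sigma$, and close the duality gap with Slater's condition. Two of your refinements are actually cleaner than the paper's treatment: you justify dropping the constraint $\omega>0$ by density of the interior, and your Slater point is an arbitrary interior point of $C_\mathcal{M}$, which you correctly show must be positive definite because $\mathrm{int}(C_\mathcal{M})\subseteq\mathrm{int}(\mathcal{P})$; the paper instead asserts $1_\mathcal{H}\in\mathrm{relint}(C_\mathcal{M})$ without argument, which is a slightly stronger claim.

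The one place where your argument is incomplete is the case in which the dual feasible set is empty. The clause of Slater's theorem you are invoking (strict primal feasibility implies strong duality with dual attainment) is stated in the cited reference under the hypothesis that the dual is feasible; here the dual can genuinely be infeasible (e.g.\ when $\sigma$ does not have full support and there is a direction $\omega\in C_\mathcal{M}$ with $\tr[\sigma\omega]=0<\tr[\rho\omega]$), in which case the dual value is $+\infty$ by convention and equality requires showing the primal is also unbounded. The paper handles this explicitly: if no $\lambda$ satisfies $\lambda\sigma-\rho\in C_\mathcal{M}^\dagger$, then for every $\lambda>0$ there is an $\omega\in C_\mathcal{M}$ with $\tr[\rho\omega]/\tr[\sigma\omega]>\lambda$, so the primal supremum is $+\infty$ as well. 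You should either add this two-line case distinction or cite a version of Slater's condition phrased so that primal strict feasibility alone rules out a duality gap (with the $+\infty=+\infty$ convention). With that addition your proof is complete.
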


\begin{proof}
    By Corollary \ref{Thm:Variational_Max}, we have
    \begin{equation}
        \exp( D_\text{max}^\mathcal{M}(\rho \| \sigma) ) \leq \sup_{\stackrel{\omega>0}{\omega \in C_\mathcal{M}}} \bigg\{ \tr[\rho \omega] \, \bigg| \, \tr[\sigma \omega] = 1 \bigg\} \, ,
    \end{equation}
    where we introduced w.l.o.g.\ the constraint \( \tr[\sigma \omega] = 1 \) by using the scaling invariance of the objective function together with \( \tr[\sigma \omega] > 0 \) for \(\omega > 0\).

    Note that per our assumption \(C_\mathcal{M}\) is a proper cone. Moreover, we can define a linear map \(\phi : \mathcal{H} \to \mathbbm{R} \) via \(\phi(\gamma) = \tr[\sigma \gamma]\) for \(\gamma\in\mathcal{H}\). Its adjoint is then given by \( \phi^\dagger(\lambda) = \lambda \sigma \) for \(\lambda\in\mathbbm{R}\). With this, the primal problem is given in standard form (cf.\ \cite[Problem 1.1]{Watrous1}) and the dual problem is 
   \begin{equation}
        \inf_{\lambda \in \mathbbm{R} } \bigg\{ \lambda \, \bigg| \, \lambda \sigma - \rho \in C_\mathcal{M}^\dagger \bigg\} \, .
    \end{equation}

    Moreover, since by picking the primal feasible \(\omega^\star=1_\dutchcal{H}\) the primal is lower-bounded by 1, we can restrict the optimization in the dual w.l.o.g.\ to \(\lambda > 0\) due to weak duality.

    Lastly, we can show that strong duality holds using the version of Slater's theorem stated in \cite[Theorem 1.3]{Watrous1}. For this, assume first that the dual feasible set is empty,\footnote{By definition the dual optimal value then is \(+\infty\).} i.e.\ for each \(\lambda>0\) there exists at least one \(\omega \in C_\mathcal{M}\) such that \( \tr[(\lambda \rho - \sigma) \omega] < 0\) by the definition of the dual cone. This implies that for all \(\lambda>0\), we can find a \(\omega\in C_\mathcal{M}\) such that
    \begin{equation}
        \lambda < \frac{\tr[\rho \omega]}{\tr[\sigma \omega]} \, .
    \end{equation}
    Therefore, in this case the primal problem is unbounded as well. If the dual problem is feasible, Slaters theorem applies with \(1_\dutchcal{H} \in \text{relint}(C_\mathcal{M})\).
\end{proof}

Observe that in the case of unrestricted measurements, we have \(C_\text{ALL}=\mathcal{P}\) which is known to be self dual. The given dual program then yields the definition of the quantum max-divergence which shows again that the quantum max-divergence is indeed achievable by a measurement.

The cones \(C_\text{SEP}\) and \(C_\text{PPT}\) are known to be proper cones (cf.\ Watrous' lecture notes \cite[Lecture 14 \& 18]{Watrous2}). The dual cone of the separable operators is the cone of block-positive operators (cf.\ e.g.\ \cite{Skowronek}). Moreover, the dual cone of PPT operators admits an explicit characterization as we show in the following corollary. This characterization of the dual PPT cone is known in the literature (cf.\ e.g.\ \cite{Horodecki}), but we provide a proof here for completeness.
\begin{corollary}\label{Lem:Dual_PPT_Max}
  For \(\rho, \sigma \in \mathcal{S}_{AB}\), we have
\begin{align}
     D_{\max}^\text{PPT} \left(\rho \middle\| \sigma \right) = \log \sup_{\stackrel{\omega>0}{\omega^\Gamma \geq 0}}  \bigg\{  \tr[\rho \omega] \, \bigg| \,  \tr[\sigma \omega] = 1 \bigg\} = \log \inf_{\stackrel{\lambda >0}{X,Y \in \mathcal{P}}} \bigg\{  \lambda \, \bigg| \, \lambda \sigma - \rho =  X + Y^\Gamma \bigg\} \, .
\end{align}  
\end{corollary}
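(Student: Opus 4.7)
The plan is to combine Propositions \ref{Lem:Exact_Variational_Max} and \ref{Prop:Dual_Cone_Program} after identifying the cone $C_\text{PPT}$ and computing its dual. As noted in Sec.\ \ref{Sec:Local_Variational_Formula}, $C_\text{PPT} = \{\omega \in \mathcal{P}_{AB} : \omega^\Gamma \geq 0\}$. Specializing Proposition \ref{Lem:Exact_Variational_Max} to $\mathcal{M} = \text{PPT}$ immediately gives the first equality: the objective $\log(\tr[\rho \omega]/\tr[\sigma \omega])$ is invariant under $\omega \mapsto t \omega$ for $t>0$, and $\tr[\sigma \omega] > 0$ whenever $\omega > 0$, so I can impose the normalization $\tr[\sigma \omega] = 1$ without loss of generality.

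For the dual form, I would first verify that $C_\text{PPT}$ is a proper cone (closed and convex as the intersection of the two closed convex cones $\mathcal{P}$ and $\mathcal{P}^\Gamma$, pointed since $C_\text{PPT} \subseteq \mathcal{P}$, and with nonempty interior containing $1_{AB}$) so that Proposition \ref{Prop:Dual_Cone_Program} applies. The corollary then reduces to the claim that
\begin{equation}
    C_\text{PPT}^\dagger = \{X + Y^\Gamma : X, Y \in \mathcal{P}\}.
\end{equation}
The inclusion $\supseteq$ is immediate: using the trace identity $\tr[Y^\Gamma \omega] = \tr[Y \omega^\Gamma]$, any $\omega \geq 0$ with $\omega^\Gamma \geq 0$ gives $\tr[(X + Y^\Gamma) \omega] = \tr[X \omega] + \tr[Y \omega^\Gamma] \geq 0$. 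For the reverse inclusion, I plan to apply the standard identity $(C_1 \cap C_2)^\dagger = \overline{C_1^\dagger + C_2^\dagger}$ with $C_1 = \mathcal{P}$ and $C_2 = \mathcal{P}^\Gamma$, noting that both cones are self-dual (for $\mathcal{P}^\Gamma$ this follows from the same trace identity via the substitution $\omega = \omega'^\Gamma$).

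The main technical point I expect is verifying that $\mathcal{P} + \mathcal{P}^\Gamma$ is already closed, so that no closure is needed. For this I would invoke the standard criterion from convex analysis that the sum of two closed convex cones in finite dimension is closed whenever $C_1 \cap (-C_2) = \{0\}$: any $\omega \in \mathcal{P} \cap (-\mathcal{P}^\Gamma)$ satisfies $\omega \geq 0$ and $\omega^\Gamma \leq 0$, whence $\tr[\omega] \geq 0$ and $\tr[\omega] = \tr[\omega^\Gamma] \leq 0$, forcing $\tr[\omega] = 0$ and thus $\omega = 0$. Substituting the identified dual cone into Proposition \ref{Prop:Dual_Cone_Program} and rewriting the constraint $\lambda \sigma - \rho \in \mathcal{P} + \mathcal{P}^\Gamma$ as $\lambda \sigma - \rho = X + Y^\Gamma$ with $X, Y \in \mathcal{P}$ then yields the stated dual program.
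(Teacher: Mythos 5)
Your proposal is correct, but it takes a different route from the paper's. The paper proves the dual form by writing down the Lagrangian of the PPT-constrained primal directly, with multipliers \(X,Y\in\mathcal{P}\) attached to the constraints \(\omega\geq 0\) and \(\omega^\Gamma\geq 0\); the stationarity condition \(\rho-\lambda\sigma+X+Y^\Gamma=0\) then produces the decomposed form of the dual immediately, and strong duality is checked via Slater's condition at \(\omega=1_{\dutchcal{H}}\), with weak duality used to restrict to \(\lambda>0\). You instead specialize the already-proved general conic duality of Proposition \ref{Prop:Dual_Cone_Program} to \(\mathcal{M}=\text{PPT}\) and reduce everything to the single PPT-specific fact \(C_\text{PPT}^\dagger=\mathcal{P}+\mathcal{P}^\Gamma\), which you establish via self-duality of \(\mathcal{P}\) and \(\mathcal{P}^\Gamma\), the identity \((C_1\cap C_2)^\dagger=\overline{C_1^\dagger+C_2^\dagger}\), and a closedness argument for the sum. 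Your route has the virtue of reusing the general machinery rather than re-deriving duality, and it surfaces a point the Lagrangian derivation never confronts: that \(\mathcal{P}+\mathcal{P}^\Gamma\) is closed, so the closure in the intersection-sum identity can be dropped and the constraint genuinely takes the form \(\lambda\sigma-\rho=X+Y^\Gamma\). Your verification of the closedness criterion \(\mathcal{P}\cap(-\mathcal{P}^\Gamma)=\{0\}\) is correct. The paper's Lagrangian computation is shorter and gets the decomposition for free from the choice of multipliers, but it must still invoke the Slater argument separately; your version inherits strong duality from Proposition \ref{Prop:Dual_Cone_Program} once \(C_\text{PPT}\) is confirmed to be a proper cone, which you do. Note that the dual-cone characterization you prove is exactly the one the paper attributes to the literature in the sentence preceding the corollary, so the two proofs are complementary rather than in conflict.
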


\begin{proof}
By Propositon \ref{Lem:Exact_Variational_Max}, we have that 
\begin{equation}
    \exp( D_\text{max}^\text{PPT}(\rho \| \sigma) ) = \sup_{\omega>0} \bigg\{ \tr[\rho \omega] \, \bigg| \, \tr[\sigma \omega] = 1 \wedge \omega^\Gamma \geq 0 \bigg\}  \, ,
\end{equation}
where we introduced w.l.o.g.\ the additional constraint \(\tr[\sigma \omega] = 1\) using the scaling invariance of the problem. Based on this expression, we proceed to find its dual using the Lagrangian formalism (cf.\ e.g.\ \cite[Chapter 5]{Boyd}). The Lagrangian \( \mathcal{L} \) of the problem is given by
\begin{align}
    \mathcal{L}(\omega, X, Y, \lambda) &= \tr[\rho \omega] + \tr[X \omega] + \tr[Y \omega^\Gamma] + \lambda \left( 1 - \tr[\sigma \omega] \right) = \lambda + \tr[ \left(\rho - \lambda \sigma + X + Y^\Gamma \right) \omega ] 
\end{align}
with the dual variables \(X,Y \in \mathcal{P}\) and \(\lambda \in \mathbbm{R}\). In order to obtain the dual program, this has to be maximized over all Hermitian operators \(\omega\) without any constraints. The result is \(+ \infty\) unless \( \rho - \lambda \sigma + X + Y^\Gamma = 0 \). In such a case, the maximum is simply given by \(\lambda\). Thus, we can conclude that the dual program is given by
\begin{align}
    \inf_{\stackrel{\lambda \in \mathbbm{R}}{X,Y \in \mathcal{P}}} \bigg\{ \lambda \, \bigg| \, \lambda \sigma - \rho = X + Y^\Gamma \bigg\} \, .
\end{align}

Analogous to the proof of Proposition \ref{Prop:Dual_Cone_Program}, we can restrict the optimization to \(\lambda>0\) due to weak duality and show that strong duality holds using the strictly primal feasible point \(\omega=1_\dutchcal{H}\). 
\end{proof}

Lastly, we remark that Proposition \ref{Prop:Dual_Cone_Program} shows that the measured max-divergence is upper-bounded by the cone-restricted max divergences \(D_{\max}^{C_\mathcal{M}}\) of George and Chitambar \cite{George}. Combining this with Proposition \ref{Lem:Exact_Variational_Max} shows that in the cases of SEP and PPT measurements the measured max-divergence is in fact equal to the cone-restricted one.

As a final remark, note that the quantum fidelity also possesses a dual variational characterization (cf.\ e.g.\ \cite[Lecture 8]{Watrous2}). The standard derivation of this, however, does rely on the closure under inverses of the positive semi-definite cone, which does not hold for the separable or PPT cone.



\section{Restricted Hypothesis Testing}\label{Sec:DataHiding} 

As our application of the locally-measured R\'enyi divergences, we use them to study the hypothesis testing problem under restricted measurements. For this, we develop in the following section an operational interpretation of the regularized locally-measured R\'enyi divergence in the strong converse regime. In Sec.\ \ref{Sec:Isotropic_States} and App.\ \ref{App:Werner_States}, we then show that the regularized locally-measured R\'enyi divergences become single-letter and can be evaluated analytically on examples of data hiding states. As argued in the introduction, these states are natural to consider since they are characterized by a reduced distinguishability under LOCC measurements. Moreover, we can use these states to provide explicit counterexamples for when the variational bounds of Sec.\ \ref{Sec:Local_Variational_Formula} are not tight (see App.\ \ref{App:Variational_Bound_Tight}). In the following, we always consider a bipartite system \(A \otimes B\) with equal local dimension \(d_A=d_B=d\).


\subsection{Locally-Measured Hypothesis Testing}\label{Sec:Hypothesis_Testing}

We study in the following the simple hypothesis testing problem under restricted measurements. Here, the task is to decide whether the null hypothesis \( \rho^{\otimes n}\) or its alternative \(\sigma^{\otimes n}\) is true based on the outcome of a quantum measurement on \(\dutchcal{H}^{\otimes n}\). In the hypothesis testing problem, it is sufficient to treat only binary measurements, so-called tests \(\{T_n(0),T_n(1)\}\), where the outcomes \(0\) and \(1\) indicate the acceptance of the null and alternative hypothesis, respectively. Since the test is uniquely determined by the element \(T_n = T_n(0)\), we will use with some abuse of notation \(T_n\) to denote the POVM element and its associated test. In the restricted measurement scenario, we are not allowed to perform all possible quantum tests but only the ones from a subset \(\mathcal{M}_n\). 

To each test \(T_n \in \mathcal{M}_n\), we can associate two types of error probabilities, namely
\begin{align}
    \alpha_n(T_n) &:=  1- \tr[\rho^{\otimes n} T_n] & \text{and} && \beta_n(T_n) &:= \tr[\sigma^{\otimes n} T_n] \, .
\end{align}
The error probability of the first kind \(\alpha_n(T_n)\) gives the likelihood to wrongly reject the null hypothesis, whereas the error probability of the second kind \(\beta_n(T_n)\) characterizes the probability of erroneously accepting the null hypothesis. In general there exists a trade-off between these error probabilities and different scenarios are studied in the literature.


\subsubsection{Stein's Lemma}\label{Sec:Steins_Lemma}

In the Stein scenario, we enforce a constant constraint on the error probability of the first kind, i.e.\ we require \(\alpha_n(T_n) \leq \varepsilon\) with \(\varepsilon\in(0,1)\). The task is then to minimize the error of the second kind among all tests \(T_n\in\mathcal{M}_n\) that satisfy the constraint, i.e.\ we are interested in
\begin{equation}
    \beta^{\mathcal{M}_n}_n(\varepsilon) := \inf_{T_n \in \mathcal{M}_n} \bigg\{ \beta_n(T_n) \; \bigg| \; \alpha_n(T_n) \leq \varepsilon \bigg\} \, .
\end{equation}
A Stein's lemma then makes a statement about the asymptotic behavior of this error probability as \(n\to\infty\), i.e.\ it characterizes the rate exponent
\begin{equation}
    \zeta_\text{Stein}^\mathcal{M}(\rho,\sigma;\varepsilon) := \lim_{n\to\infty} -\frac{1}{n}\log \beta^{\mathcal{M}_n}_n(\varepsilon)
\end{equation}
if the limit exists. The quantum Stein's lemma \cite{Hiai, Ogawa} states that under no restrictions on the measurements we have for all \(\varepsilon\in(0,1)\) that 
\begin{equation}
     \zeta_\text{Stein}^\text{ALL}(\rho,\sigma;\varepsilon) = D^{\mathbf{ALL}} \left(\boldsymbol{\rho}\middle\|\boldsymbol{\sigma}\right) = D(\rho,\sigma) \, .
\end{equation}
Moreover, the weak version of Stein's lemma for restricted measurements has been proven by Brandao et.\ al.\ as a special case of \cite[Theorem 16]{Brandao}. Namely, their result implies
\begin{equation}
    \lim_{\varepsilon\to0} \zeta_\text{Stein}^\mathcal{M}(\rho,\sigma;\varepsilon) = D^{\boldsymbol{\mathcal{M}}} \left(\boldsymbol{\rho}\middle\|\boldsymbol{\sigma}\right) \, .
\end{equation}
Here, we want to investigate the strong version of the Stein's lemma for restricted measurements. Observe that in the proof of \cite[Theorem 16]{Brandao}, it was shown
\begin{equation}
    \text{$\liminf_{n\to\infty} -\frac{1}{n}\log \beta^{\mathcal{M}_n}_n(\varepsilon) \geq D^{\boldsymbol{\mathcal{M}}} \left(\boldsymbol{\rho}\middle\|\boldsymbol{\sigma}\right)$ for all \(\varepsilon\in(0,1)\).}
\end{equation}
This can be regarded as the achievability part of Stein's lemma and for the optimality part, we make use of the measured R\'enyi divergences for \(\alpha>1\) to prove the following lemma.

\begin{lemma}\label{Lem:Stein_Converse}
    With definitions as above and \(\varepsilon\in(0,1)\), we have
    \begin{equation}
        \limsup_{n\to\infty} -\frac{1}{n}\log \beta^{\mathcal{M}_n}_n(\varepsilon) \leq \inf_{\alpha>1} D_\alpha^{\boldsymbol{\mathcal{M}}} \left(\boldsymbol{\rho}\middle\|\boldsymbol{\sigma}\right) \, .
    \end{equation}
\end{lemma}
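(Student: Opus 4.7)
The plan is to run a Nagaoka-style strong converse argument at the level of the restricted measurement set. For any competitor test $T_n \in \mathcal{M}_n$ satisfying $\alpha_n(T_n) \leq \varepsilon$, the two-outcome POVM $\{T_n,\, 1_{\dutchcal{H}^{\otimes n}} - T_n\}$ is itself a member of $\mathcal{M}_n$ (it is the test under consideration), so Definition \ref{Def:Measured_Renyi_Divergence} immediately gives
\begin{equation}
    D_\alpha^{\mathcal{M}_n}\!\left(\rho^{\otimes n}\middle\|\sigma^{\otimes n}\right) \;\geq\; D_\alpha\!\left(\mu_{\rho^{\otimes n}}^{T_n}\middle\|\mu_{\sigma^{\otimes n}}^{T_n}\right),
\end{equation}
where the two induced binary distributions on $\{0,1\}$ are $(1-\alpha_n(T_n),\,\alpha_n(T_n))$ and $(\beta_n(T_n),\,1-\beta_n(T_n))$.

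Next, for $\alpha > 1$ I would lower bound the right-hand side by retaining only the first term of the classical $Q_\alpha$:
\begin{equation}
    Q_\alpha\!\left(\mu_{\rho^{\otimes n}}^{T_n}\middle\|\mu_{\sigma^{\otimes n}}^{T_n}\right) \;=\; (1-\alpha_n)^\alpha \beta_n^{\,1-\alpha} + \alpha_n^\alpha (1-\beta_n)^{1-\alpha} \;\geq\; (1-\alpha_n)^\alpha \beta_n^{\,1-\alpha}.
\end{equation}
Taking $\frac{1}{\alpha-1}\log(\cdot)$ and rearranging (noting $1-\alpha < 0$) yields
\begin{equation}
    -\log\beta_n(T_n) \;\leq\; D_\alpha^{\mathcal{M}_n}\!\left(\rho^{\otimes n}\middle\|\sigma^{\otimes n}\right) \;-\; \frac{\alpha}{\alpha-1}\log\bigl(1-\alpha_n(T_n)\bigr),
\end{equation}
and using the constraint $\alpha_n(T_n)\leq\varepsilon<1$ the correction term is bounded by the $n$-independent constant $-\tfrac{\alpha}{\alpha-1}\log(1-\varepsilon)$ (note that $1-\varepsilon>0$ is crucial, which is why the result excludes $\varepsilon=1$).

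Finally, I would take the infimum over all admissible $T_n \in \mathcal{M}_n$ to replace $-\log\beta_n(T_n)$ by $-\log\beta_n^{\mathcal{M}_n}(\varepsilon)$, divide by $n$, and pass to the limit $n\to\infty$. The correction term vanishes because it is $O(1/n)$, and Lemma \ref{Lem:Regularization} identifies $\lim_n \tfrac{1}{n}D_\alpha^{\mathcal{M}_n}(\rho^{\otimes n}\|\sigma^{\otimes n}) = D_\alpha^{\boldsymbol{\mathcal{M}}}(\boldsymbol{\rho}\|\boldsymbol{\sigma})$. This produces the bound with a fixed $\alpha>1$, after which taking the infimum over $\alpha>1$ gives the claim. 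There is no serious obstacle here; the only points that need care are (i) verifying that binary reduction keeps us inside $\mathcal{M}_n$, which is automatic in the hypothesis-testing setup, and (ii) handling edge cases where $\beta_n(T_n)=0$ (then $\mu_\rho\not\ll\mu_\sigma$, forcing $D_\alpha=+\infty$ and the inequality is vacuous).
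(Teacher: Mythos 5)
Your proposal is correct and follows essentially the same route as the paper: the paper also applies the classical data-processing step to the binary test $\{T_n, 1-T_n\}$, drops the second term of $Q_\alpha$, rearranges to get $-\tfrac{1}{n}\log\beta_n(T_n) \leq \tfrac{1}{n}D_\alpha^{\mathcal{M}_n}(\rho^{\otimes n}\|\sigma^{\otimes n}) + \tfrac{1}{n}\tfrac{\alpha}{\alpha-1}\log\tfrac{1}{1-\varepsilon}$, and then invokes Lemma \ref{Lem:Regularization} before taking the infimum over $\alpha>1$. The only difference is cosmetic: the paper cites this as a ``standard argument'' following Nagaoka, whereas you spell out the two inequality steps explicitly and additionally note the $\beta_n=0$ edge case.
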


\begin{proof}
    
Let \(\alpha>1\) and \(\varepsilon\in(0,1)\) and consider a test \(T_n \in \mathcal{M}_n\). We then have by a standard argument (cf.\ e.g.\ \cite{Nagaoka}) that
\begin{align}
    &D_\alpha^{\mathcal{M}_n} \left(\rho^{\otimes n}\middle\|\sigma^{\otimes n}\right)\\
    &\geq \frac{1}{\alpha-1} \log\left( \tr[\rho^{\otimes n}T_n]^\alpha \tr[\sigma^{\otimes n}T_n]^{1-\alpha} + \left(1-\tr[\rho^{\otimes n}T_n]\right)^\alpha \left(1-\tr[\sigma^{\otimes n} T_n]\right)^{1-\alpha} \right) \\
    &\geq \frac{1}{\alpha-1} \log\left( \tr[\rho^{\otimes n}T_n]^\alpha \tr[\sigma^{\otimes n}T_n]^{1-\alpha} \right) \\
    &= \frac{\alpha}{\alpha-1} \log(1-\alpha_n(T_n)) - \log \beta_n(T_n) \, ,
\end{align}
which after rewriting shows the following relationship between the two error probabilities
\begin{align}\label{Eq:Error_Relationship}
    -\frac{1}{n}\log \beta_n(T_n) \leq \frac{1}{n} D_\alpha^{\mathcal{M}_n} \left(\rho^{\otimes n}\middle\|\sigma^{\otimes n}\right) + \frac{1}{n} \frac{\alpha}{\alpha-1} \log(\frac{1}{1-\alpha_n(T_n)}) \, . 
\end{align}

Recall that in the Stein's scenario, we consider tests \(T_n\) that satisfy the constraint \(\alpha_n(T_n) \leq \varepsilon\). For such tests, the above bound turns into
\begin{align}
    -\frac{1}{n}\log \beta_n(T_n) \leq \frac{1}{n} D_\alpha^{\mathcal{M}_n} \left(\rho^{\otimes n}\middle\|\sigma^{\otimes n}\right) + \frac{1}{n} \frac{\alpha}{\alpha-1} \log(\frac{1}{1-\varepsilon}) \, . 
\end{align}
Next, taking the supremum over such tests on the left-hand side, we get
\begin{align}
    -\frac{1}{n}\log \beta^{\mathcal{M}_n}_n(\varepsilon) \leq \frac{1}{n} D_\alpha^{\mathcal{M}_n} \left(\rho^{\otimes n}\middle\|\sigma^{\otimes n}\right) + \frac{1}{n} \frac{\alpha}{\alpha-1} \log(\frac{1}{1-\varepsilon})
\end{align}
and then taking the limit superior over \(n\) on both sides, we get by Lemma \ref{Lem:Regularization} that
\begin{align}
    \limsup_{n\to\infty} -\frac{1}{n}\log \beta^{\mathcal{M}_n}_n(\varepsilon) \leq D_\alpha^{\boldsymbol{\mathcal{M}}} \left(\boldsymbol{\rho}\middle\|\boldsymbol{\sigma}\right) \, .
\end{align}
Finally, taking the infimum over \(\alpha > 1\) on the right-hand side completes the proof. 
\end{proof}

In Sec.\ \ref{Sec:Isotropic_States} and App.\ \ref{App:Werner_States}, we show that for examples of data hiding states the locally-measured R\'enyi divergences are additive. The regularized terms then become single letter, which in turn enables us to show that the optimality part becomes tight, i.e.\ for these examples we have 
\begin{equation}
    \zeta_\text{Stein}^\mathcal{M}(\rho,\sigma;\varepsilon) = D^\mathcal{M} \left(\rho\middle\|\sigma\right)
\end{equation}
and we can explicitly compute the Stein's exponent for the restricted hypothesis testing problem (see Sec.\ \ref{Sec:Hypothesis_Testing_Isotropic} and App.\ \ref{Sec:Hypothesis_Testing_Werner} for these results).


\subsubsection{Strong Converse Exponent}\label{Sec:Strong Converse}

In the strong converse scenario, we aim to minimize the decay rate of the success probability of the first kind given an exponential constraint on the error probability of the second kind, i.e.\ we are interested in the rate exponent
\begin{equation}
    \zeta_\text{SC}^\mathcal{M}(\rho,\sigma; r) := \inf_{\stackrel{ \{T_n\}_n }{ T_n \in \mathcal{M}_n }} \left\{ \limsup_{n\to\infty} - \frac{1}{n}\log(1-\alpha_n(T_n))  \; \middle| \; \liminf_{n\to\infty} -\frac{1}{n} \log \beta_n(T_n) \geq r \right\} \, .
\end{equation}

The following lemma gives a universal lower bound, i.e.\ an optimality result, on the strong converse exponent.

\begin{lemma}
    Let \(r\geq 0\). With definitions as above, we have
    \begin{align}
        \liminf_{n\to\infty} -\frac{1}{n} \log(1-\alpha_n(T_n)) \geq \sup_{\alpha>1} \frac{\alpha-1}{\alpha}\left[ r-D_\alpha^{\boldsymbol{\mathcal{M}}} \left(\boldsymbol{\rho}\middle\|\boldsymbol{\sigma}\right)\right]
    \end{align} 
    for all sequences of tests \(\{T_n\}_n\) with \(T_n \in \mathcal{M}_n\) that satisfy the rate constraint
    \begin{equation}
        \limsup_{n \to \infty} \frac{1}{n} \log \beta_n(T_n) \leq -r \, .
    \end{equation}
\end{lemma}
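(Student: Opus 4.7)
The plan is to reuse the error-probability inequality already derived in the proof of Lemma \ref{Lem:Stein_Converse}, but now to isolate $-\frac{1}{n}\log(1-\alpha_n(T_n))$ rather than $-\frac{1}{n}\log\beta_n(T_n)$. Concretely, Eq.\ \eqref{Eq:Error_Relationship} gives, for any $\alpha>1$ and any test $T_n\in\mathcal{M}_n$,
\begin{equation*}
    -\frac{1}{n}\log\beta_n(T_n) \;\leq\; \frac{1}{n}D_\alpha^{\mathcal{M}_n}\left(\rho^{\otimes n}\middle\|\sigma^{\otimes n}\right) + \frac{1}{n}\frac{\alpha}{\alpha-1}\log\!\left(\frac{1}{1-\alpha_n(T_n)}\right).
\end{equation*}
Since $\frac{\alpha-1}{\alpha}>0$, I would rearrange this to
\begin{equation*}
    -\frac{1}{n}\log(1-\alpha_n(T_n)) \;\geq\; \frac{\alpha-1}{\alpha}\left[-\frac{1}{n}\log\beta_n(T_n) \;-\; \frac{1}{n}D_\alpha^{\mathcal{M}_n}\left(\rho^{\otimes n}\middle\|\sigma^{\otimes n}\right)\right],
\end{equation*}
which already has the target form on the left.

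Next I would pass to the asymptotic regime by taking $\liminf_{n\to\infty}$ on both sides. The rate hypothesis $\limsup_n \tfrac{1}{n}\log\beta_n(T_n)\leq -r$ is equivalent to $\liminf_n -\tfrac{1}{n}\log\beta_n(T_n)\geq r$. By Lemma \ref{Lem:Regularization}, the sequence $\tfrac{1}{n}D_\alpha^{\mathcal{M}_n}(\rho^{\otimes n}\|\sigma^{\otimes n})$ has a genuine limit equal to $D_\alpha^{\boldsymbol{\mathcal{M}}}(\boldsymbol{\rho}\|\boldsymbol{\sigma})$, so the standard fact $\liminf_n(a_n+c_n)=\liminf_n a_n + c$ when $c_n\to c$ applies. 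Hence
\begin{equation*}
    \liminf_{n\to\infty}-\frac{1}{n}\log(1-\alpha_n(T_n)) \;\geq\; \frac{\alpha-1}{\alpha}\left[r - D_\alpha^{\boldsymbol{\mathcal{M}}}\left(\boldsymbol{\rho}\middle\|\boldsymbol{\sigma}\right)\right].
\end{equation*}
Since this bound holds for every $\alpha>1$, the claim follows by taking the supremum over $\alpha>1$ on the right-hand side.

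There is no real obstacle here: the argument is a clean rearrangement of the Nagaoka-style inequality already used in Lemma \ref{Lem:Stein_Converse}, combined with the additive regularization guaranteed by Lemma \ref{Lem:Regularization}. The only delicate point worth double-checking is that the right-hand side remains meaningful even when $D_\alpha^{\boldsymbol{\mathcal{M}}}(\boldsymbol{\rho}\|\boldsymbol{\sigma})>r$, in which case the bound is trivial (non-positive), and when the bound is vacuous because $\alpha_n(T_n)\to 1$, in which case $-\tfrac{1}{n}\log(1-\alpha_n(T_n))$ may diverge but the inequality still holds. Positivity of the prefactor $\tfrac{\alpha-1}{\alpha}$ for $\alpha>1$ is what makes the direction of the inequality come out correctly, mirroring the role played by the same factor in the Stein converse argument.
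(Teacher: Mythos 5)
Your proposal is correct and follows essentially the same route as the paper: both start from Eq.\ \eqref{Eq:Error_Relationship}, rearrange using the positivity of $\tfrac{\alpha-1}{\alpha}$, invoke the rate constraint and the existence of the regularized limit from Lemma \ref{Lem:Regularization}, and finish by taking the supremum over $\alpha>1$. The only difference is cosmetic\,---\,the paper bounds $\limsup_n \tfrac{1}{n}\log(1-\alpha_n(T_n))$ via an explicit $\delta$-argument, while you work directly with the $\liminf$ of its negation and limit arithmetic, which is equivalent.
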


\begin{proof}
Let \(r \geq 0\), \(\alpha>1\) and consider a test \(T_n \in \mathcal{M}_n\). 
We can rewrite Eq.\ \eqref{Eq:Error_Relationship} into
\begin{align}
    \frac{1}{n} \log(1-\alpha_n(T_n)) \leq \frac{\alpha-1}{\alpha}\left[ \frac{1}{n} D_\alpha^{\mathcal{M}_n} \left(\rho^{\otimes n}\middle\|\sigma^{\otimes n}\right) + \frac{1}{n}\log \beta_n(T_n)  \right] \, .
\end{align}
Per our assumption, we consider sequences of tests \(T_n\) that satisfy the constraint 
\begin{equation}
    \limsup_{n \to \infty} \frac{1}{n}\log \beta_n(T_n) \leq -r \, .
\end{equation}
That is, for each \(\delta >0\) there exists a \(n_\delta\) such that for all \(n \geq n_\delta\), we get the bound
\begin{align}
    \frac{1}{n} \log \big(1-\alpha_n(T_n) \big) \leq \frac{\alpha-1}{\alpha}\left[ \frac{1}{n} D_\alpha^{\mathcal{M}_n} \left(\rho^{\otimes n}\middle\|\sigma^{\otimes n}\right) -r + \delta \right]  \, .
\end{align}
To conclude, we take the limit superior over \(n\) to obtain
\begin{align}
    \limsup_{n\to\infty} \frac{1}{n} \log(1-\alpha_n(T_n)) \leq \frac{\alpha-1}{\alpha}\left[ D_\alpha^{\boldsymbol{\mathcal{M}}} \left(\boldsymbol{\rho}\middle\|\boldsymbol{\sigma}\right) - r + \delta \right]
\end{align} 
and then taking the limit \(\delta \to 0\) and the infimum over \(\alpha>1\) on the right-hand side results in
\begin{align}
    \limsup_{n\to\infty} \frac{1}{n} \log(1-\alpha_n(T_n)) \leq - \sup_{\alpha>1} \frac{\alpha-1}{\alpha}\left[ r-D_\alpha^{\boldsymbol{\mathcal{M}}} \left(\boldsymbol{\rho}\middle\|\boldsymbol{\sigma}\right)\right] \, .
\end{align} 

\end{proof}

For the achievability direction, we prove the following lemma which gives upper bounds on the strong converse exponent. 
\begin{lemma}\label{Lem:Strong_Converse}
    Let \(r\geq 0\) and \(k\in\mathbbm{N}\). With definitions as above, given a measurement \(M_k \in \mathcal{M}_k\), there exists a sequence of test \( \{ T_n \}_n \) with \(T_n \in \mathcal{M}_n\) that achieves
    \begin{equation}
        \limsup_{n \to \infty} \frac{1}{n} \log \beta_n(T_n)  \leq -r
    \end{equation}
    and
    \begin{align}
        \limsup_{n\to\infty} -\frac{1}{n}\log(1-\alpha_n(T_n)) \leq \sup_{\alpha>1}  \frac{\alpha-1}{\alpha} \bigg[ r - \frac{1}{k}  D_\alpha \left( \mu_{\rho^{\otimes k}}^{M_k} \middle\| \mu_{\sigma^{\otimes k}}^{M_k} \right) \bigg] \, .
\end{align} 
\end{lemma}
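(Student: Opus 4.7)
The plan is to reduce the quantum problem to a classical i.i.d.\ hypothesis testing problem via blockwise measurement with $M_k$, and then to invoke the classical strong converse achievability. Concretely, let $P := \mu_{\rho^{\otimes k}}^{M_k}$ and $Q := \mu_{\sigma^{\otimes k}}^{M_k}$ denote the classical outcome distributions. For every $n$, write $n = mk + j$ with $m = \lfloor n/k \rfloor$ and $0 \leq j < k$, and consider the measurement on $\dutchcal{H}^{\otimes n}$ that applies $M_k^{\otimes m}$ to the first $mk$ subsystems and the trivial single-outcome POVM $\{1\}$ to the remaining $j$ subsystems. By the regularity assumption on the family $\boldsymbol{\mathcal{M}}$, this product measurement lies in $\mathcal{M}_n$, and its outcomes under $\rho^{\otimes n}$ and $\sigma^{\otimes n}$ are i.i.d.\ samples from $P^{\times m}$ and $Q^{\times m}$, respectively.

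Next, I invoke the classical strong converse achievability for i.i.d.\ hypothesis testing (a known classical-literature result, recoverable as the commutative special case of, e.g., \cite{Mosonyi}): for any rate $r' > 0$, there exist classical binary tests $\tau_m$ with $-\frac{1}{m}\log Q^{\times m}(\tau_m) \geq r'$ and
\begin{equation}
\limsup_{m \to \infty} -\frac{1}{m}\log P^{\times m}(\tau_m) \,\leq\, \sup_{\alpha > 1} \frac{\alpha-1}{\alpha}\bigl[\, r' - D_\alpha(P \| Q) \,\bigr]\, .
\end{equation}
Setting $r' = kr$, I take $T_n \in \mathcal{M}_n$ to be the coarse-graining of the above product measurement along $\tau_m$, so that $\beta_n(T_n) = Q^{\times m}(\tau_m)$ and $1 - \alpha_n(T_n) = P^{\times m}(\tau_m)$. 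The rate conversion $-\frac{1}{n}\log = \frac{m}{n}\cdot\bigl[-\frac{1}{m}\log\bigr]$ together with $m/n \to 1/k$ then immediately yields $\liminf -\frac{1}{n}\log \beta_n(T_n) \geq r$ as well as
\begin{equation}
\limsup_{n\to\infty} -\frac{1}{n}\log\bigl(1 - \alpha_n(T_n)\bigr) \,\leq\, \frac{1}{k}\sup_{\alpha > 1} \frac{\alpha-1}{\alpha}\bigl[\, kr - D_\alpha(P \| Q) \,\bigr] \,=\, \sup_{\alpha>1}\frac{\alpha-1}{\alpha}\bigl[\, r - \tfrac{1}{k} D_\alpha(P \| Q) \,\bigr]\, ,
\end{equation}
which is exactly the claim.

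The main obstacle is to ensure that the constructed $T_n$ actually lies in $\mathcal{M}_n$; this requires $\mathcal{M}_n$ to be closed under classical coarse-graining of its POVM elements, which indeed holds for LOCC, SEP, and PPT since sums of their POVM elements stay in the respective class. A secondary technical point is the invocation of the classical strong converse achievability itself, which in a fully self-contained proof would require a direct large-deviation analysis of thresholded log-likelihood ratio tests to recover the $\tfrac{\alpha-1}{\alpha}$ prefactor in the exponent.
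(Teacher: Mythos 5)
Your proposal is correct and follows essentially the same route as the paper's proof: measure blockwise with $M_k^{\otimes m}$ (padding the remainder with the identity), invoke the classical i.i.d.\ strong converse achievability (the paper uses Han--Kobayashi) to get acceptance regions achieving the $\frac{\alpha-1}{\alpha}[kr - D_\alpha(P\|Q)]$ exponent, coarse-grain into a binary quantum test $T_n \in \mathcal{M}_n$, and convert rates via $m/n \to 1/k$. The closure-under-coarse-graining point you flag is handled in the paper by defining $T_n$ as $(M_k^\dagger)^{\otimes m}(\mathcal{A}_{k,m}) \otimes 1_l$, i.e.\ a sum of product POVM elements, which stays in each of the locality-constrained classes.
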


\begin{proof}
    Consider a measurement \(M_k\in\mathcal{M}_k\) over an alphabet \(\mathcal{Z}\). The classical strong converse result of Han-Kobayashi \cite{Han} implies the existence of a sequence of acceptance regions \( \{ \mathcal{A}_{k,m} \}_m\) with \(\mathcal{A}_{k,m} \subseteq \mathcal{Z}^{\otimes m}\) for the hypothesis testing problem between the i.i.d.\ distributions \( \left( \mu_{\rho^{\otimes k}}^{M_k} \right)^{\otimes m}\) and \( \left( \mu_{\sigma^{\otimes k}}^{M_k} \right)^{\otimes m} \) such that
    \begin{equation}
        \limsup_{m \to \infty} \frac{1}{m} \log \left[ \left( \mu_{\sigma^{\otimes k}}^{M_k} \right)^{\otimes m}(\mathcal{A}_{k,m}) \right] \leq - kr
    \end{equation}
    and 
    \begin{align}
        \liminf_{m\to\infty} \frac{1}{m} \log \left[ \left( \mu_{\rho^{\otimes k}}^{M_k} \right)^{\otimes m}(\mathcal{A}_{k,m}) \right] &= \inf_{\alpha>1} \frac{\alpha-1}{\alpha} \bigg[ D_\alpha \left( \mu_{\rho^{\otimes k}}^{M_k} \middle\| \mu_{\sigma^{\otimes k}}^{M_k} \right) - kr \bigg] \, .
    \end{align}

    Hence, given an arbitrary \(\delta>0\) there exists \(m_\delta\) such that for \(m \geq m_\delta\), we have for the error probability of the second kind
     \begin{equation}
        \frac{1}{m} \log \left[ \left( \mu_{\sigma^{\otimes k}}^{M_k} \right)^{\otimes m}(\mathcal{A}_{k,m}) \right] \leq - kr + \delta
    \end{equation}
    and for the success probability of the first kind
    \begin{equation}
        \frac{1}{m} \log \left[ \left( \mu_{\rho^{\otimes k}}^{M_k} \right)^{\otimes m}(\mathcal{A}_{k,m}) \right] \geq \inf_{\alpha>1} \frac{\alpha-1}{\alpha} \bigg[ D_\alpha \left( \mu_{\rho^{\otimes k}}^{M_k} \middle\| \mu_{\sigma^{\otimes k}}^{M_k} \right) - kr \bigg] - \delta \, .
    \end{equation}

    Let us define the quantum test \(T_n := (M_k^\dagger)^{\otimes m}(\mathcal{A}_{k,m}) \otimes 1_l \in \mathcal{M}_n\) for \( n = km +l\) with \(l = \{0,1, ..., k-1\}\). With this sequence of tests \(\{T_n\}\), we get for \(n \geq k m_\delta\) the bound 
    \begin{equation}
       \frac{1}{n} \log \tr[ T_n \sigma^{\otimes n} ] = \frac{1}{n} \log \left[ \left( \mu_{\sigma^{\otimes k}}^{M_k} \right)^{\otimes m}(\mathcal{A}_{k,m}) \right] \leq - \frac{km}{km+l }r + \frac{m}{km+l} \delta \, ,
    \end{equation}
    which in turn implies
    \begin{equation}
        \limsup_{n \to \infty} \frac{1}{n} \log \beta_n(T_n)  \leq -r + \frac{\delta}{k} \, .
    \end{equation}

    Moreover, we also have for \(n\geq k m_\delta\) the bound
    \begin{align}
         - \frac{1}{n} \log(1-\alpha_n(T_n))
         &= - \frac{1}{n} \log \tr[ T_n \rho^{\otimes n}] \\
         &= - \frac{1}{km+l} \log \left[ \left( \mu_{\rho^{\otimes k}}^{M_k} \right)^{\otimes m}(\mathcal{A}_{k,m}) \right] \\
        &\leq \frac{m}{km+l} \left( \sup_{\alpha>1}  \frac{\alpha-1}{\alpha} \bigg[ kr - D_\alpha \left( \mu_{\rho^{\otimes k}}^{M_k} \middle\| \mu_{\sigma^{\otimes k}}^{M_k} \right) \bigg] \right) + \frac{m}{km+l} \delta \, ,
    \end{align}
    which implies
    \begin{align}
        \limsup_{n\to\infty} -\frac{1}{n} \log( 1-\alpha_n(T_n) ) &\leq \frac{1}{k} \left( \sup_{\alpha>1}  \frac{\alpha-1}{\alpha} \bigg[ kr - D_\alpha \left( \mu_{\rho^{\otimes k}}^{M_k} \middle\| \mu_{\sigma^{\otimes k}}^{M_k} \right) \bigg] \right) + \frac{1}{k} \delta \\
        &= \sup_{\alpha>1}  \frac{\alpha-1}{\alpha} \bigg[ r - \frac{1}{k}  D_\alpha \left( \mu_{\rho^{\otimes k}}^{M_k} \middle\| \mu_{\sigma^{\otimes k}}^{M_k} \right) \bigg] + \frac{\delta}{k}
    \end{align}

    Taking the limit \(\delta\to0\), we obtain our claim.
\end{proof}

In Sec.\ \ref{Sec:Isotropic_States} and App.\ \ref{App:Werner_States}, we show that for examples of data hiding states the optimal sequence of measurements is given by the i.i.d.-version of the single copy optimal measurement, which is independent of \(\alpha\). Lemma \ref{Lem:Strong_Converse} then enables us to give a tight bound on the strong converse exponent and compute it explicitly for the these states (see Sec.\ \ref{Sec:Hypothesis_Testing_Isotropic} and App.\ \ref{Sec:Hypothesis_Testing_Werner} for these results). For these special cases, we then have the relationship
\begin{equation}
    \zeta_\text{SC}^\mathcal{M}(\rho,\sigma; r) = \sup_{\alpha>1} \frac{\alpha-1}{\alpha}\left[ r-D_\alpha^\mathcal{M} \left(\rho\middle\|\sigma\right)\right] \, .
\end{equation}


\subsection{Isotropic States}\label{Sec:Isotropic_States}

The first family of states for which we study the hypothesis testing problem under restricted measurments are the isotropic states \cite{Horodecki2}. Their defining property is that they are invariant under unitaries of the form \(U \otimes \bar{U}\), where the bar denotes complex conjugation. A general isotropic state can be characterized completely by a single parameter \(p\in[0,1]\) as 
\begin{equation}\label{Def:Isotropic_State}
    \dutchcal{i}(p) := p \Phi + (1-p) \Phi^\perp \, ,
\end{equation}
where \(\Phi\) denotes the maximally entangled state and \(\Phi^\perp\) is its orthogonal complement. Using the canonical basis \(\{\ket{i}\}_{i=1}^d\) on \(A\) and \(B\), respectively, these extremal isotropic states are given w.l.o.g.\ by
\begin{align}
    \Phi &= \frac{1}{d} \sum_{i,j =1}^d \ketbra{i}{j}_A \otimes \ketbra{i}{j}_B & \text{and} && \Phi^\perp &= \frac{1_{AB} - \Phi}{d^2 -1} \, .
\end{align} 
It is well-known that \(\dutchcal{i}(p)\) is separable and has PPT for \(p\in[0,1/d]\) and else it is entangled \cite{Horodecki2}. This class of states was previously studied in the context of LOCC distinguishability e.g.\ in \cite{Li, Christandl, Cheng}.


\subsubsection{Local Distinguishability}\label{Sec:Iso_States_LOCC}

We start with the most simple scenario of \(\rho=\Phi\) and \(\sigma=\Phi^\perp\). Notice that these states are orthogonal, so following the discussion in \ref{Sec:Boundedness} we can perform the test \( T_\Phi := \left\{ \Phi, 1_{AB} - \Phi \right\} \) to obtain  
\begin{align}
    \mu_{\Phi}^{T_\Phi} &= \{1,0\} & \text{and} && \mu_{\Phi^\perp}^{T_\Phi} &= \{0,1 \} \, ,
\end{align}
and therefore \( D^\text{ALL}_\alpha \left( \Phi \middle\| \Phi^\perp \right) = +\infty \) for all orders \(\alpha > 0\). 

However, it is important to note that the test \(T_\Phi\) employs the POVM element \(\Phi\) which has a negative partial transpose and as such is not a member of any of the locally-measured classes introduced in Sec.\ \ref{Sec:Preliminaries}. It turns out that if we restrict the available measurements to any of these classes they all perform equally poorly.
\begin{proposition}\label{Prop:Max_versus_Orthogonal}
    Let \( \mathcal{M} = \{ \text{LO}, \text{LOCC}_1, \text{LOCC}, \text{SEP}, \text{PPT} \} \) and \(\alpha > 0 \). With definitions as above, we have 
    \begin{equation}
        D_\alpha^\mathcal{M} \left(\Phi \middle\| \Phi^\perp \right) = \log(d+1) \, .
    \end{equation}
\end{proposition}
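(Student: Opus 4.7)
The strategy is to sandwich $D_\alpha^{\mathcal{M}}(\Phi\|\Phi^\perp)$ between matching lower and upper bounds, using the inclusion chain $\mathrm{LO}\subseteq\mathrm{LOCC}_1\subseteq\mathrm{LOCC}\subseteq\mathrm{SEP}\subseteq\mathrm{PPT}$ (Sec.\ \ref{Sec:Local_Measurement_Sets}) together with the monotonicity of $D_\alpha$ in $\alpha$ (Property 2 of Lemma \ref{Lem:General_Properties}). Concretely, it suffices to prove
\begin{align}
    D_\alpha^{\mathrm{LO}(A:B)}\left(\Phi\middle\|\Phi^\perp\right)\;\ge\;\log(d+1) \qquad\text{and}\qquad D_{\max}^{\mathrm{PPT}(A:B)}\left(\Phi\middle\|\Phi^\perp\right)\;\le\;\log(d+1),
\end{align}
since then every intermediate $D_\alpha^{\mathcal M}$ is squeezed between these via $D_\alpha^{\mathrm{LO}}\le D_\alpha^{\mathcal M}\le D_\alpha^{\mathrm{PPT}}\le D_{\max}^{\mathrm{PPT}}$.

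\paragraph{Lower bound (local measurement).}
First I would exhibit a concrete PVM in $\mathrm{P\text{-}LO}(A:B)\subseteq\mathrm{LO}(A:B)$, namely the local computational-basis measurement $M^{(i,j)}=\ketbra{i}{i}_A\otimes\ketbra{j}{j}_B$. Direct computation from $\Phi=\frac1d\sum_{i,j}\ketbra{ii}{jj}$ and $\Phi^\perp=(1-\Phi)/(d^2-1)$ gives
\begin{align}
    \mu_\Phi^M(i,j)\;=\;\frac{\delta_{ij}}{d}, \qquad \mu_{\Phi^\perp}^M(i,j)\;=\;\begin{cases}\frac{1}{d(d+1)} & i=j \\[0.2em] \frac{1}{d^2-1} & i\neq j\end{cases}.
\end{align}
Since $\mu_\Phi^M$ is supported only on the diagonal outcomes, the only contributions to $D_\alpha(\mu_\Phi^M\|\mu_{\Phi^\perp}^M)$ come from $i=j$. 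A one-line computation then shows $Q_\alpha(\mu_\Phi^M\|\mu_{\Phi^\perp}^M)=d\cdot(1/d)^\alpha(1/(d(d+1)))^{1-\alpha}=(d+1)^{\alpha-1}$ for $\alpha\in(0,1)\cup(1,\infty)$, which yields $D_\alpha(\mu_\Phi^M\|\mu_{\Phi^\perp}^M)=\log(d+1)$; the limits $\alpha\to 1$ and $\alpha\to\infty$ give the same value.

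\paragraph{Upper bound (dual PPT program).}
For the upper bound I invoke the dual variational characterization of the PPT-measured max-divergence from Corollary \ref{Lem:Dual_PPT_Max}. It suffices to exhibit a dual-feasible triple $(\lambda, X, Y)$ with $\lambda=d+1$. A natural candidate, motivated by the $U\otimes\bar U$-symmetry of the problem (twirling reduces $X,Y^\Gamma$ to isotropic/Werner form), is
\begin{align}
    \lambda\;=\;d+1, \qquad X\;=\;0, \qquad Y\;=\;\frac{2 P_-}{d-1},
\end{align}
where $P_-=(1-F)/2$ is the projector onto the antisymmetric subspace of $A\otimes B$. Since $F^\Gamma=d\,\Phi$, I get $Y^\Gamma=(1-d\Phi)/(d-1)$, and a direct check yields $(d+1)\Phi^\perp-\Phi=(1-d\Phi)/(d-1)=X+Y^\Gamma$ with $X,Y\ge 0$. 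Thus the dual infimum is at most $d+1$, giving $D_{\max}^{\mathrm{PPT}}(\Phi\|\Phi^\perp)\le\log(d+1)$.

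\paragraph{Conclusion and main obstacle.}
Combining the two bounds with the chain of inequalities closes the proof for every $\alpha>0$ and every $\mathcal M$ in the list. The only nontrivial step is locating the dual feasible point; the rest is bookkeeping. One can guide the search systematically by twirling $X$ and $Y^\Gamma$ under $U\otimes\bar U$, which reduces the SDP to a two-parameter linear program in the isotropic decomposition, whose minimum is attained at precisely the point above. No additional machinery beyond Lemma \ref{Lem:General_Properties} and Corollary \ref{Lem:Dual_PPT_Max} is required.
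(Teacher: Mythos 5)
Your proof is correct and follows essentially the same architecture as the paper's: the identical local computational-basis measurement for the lower bound, and a reduction of the upper bound to $D_{\max}^{\text{PPT}}$ via monotonicity in $\alpha$. The one difference is in how the PPT bound $\log(d+1)$ is certified: the paper solves the \emph{primal} program of Proposition \ref{Lem:Exact_Variational_Max} exactly by twirling to isotropic operators and reducing to a two-variable linear program, whereas you exhibit an explicit \emph{dual}-feasible point for Corollary \ref{Lem:Dual_PPT_Max}; your choice $Y = 2P_-/(d-1)$ is precisely the $n=1$ specialization of the dual certificate the paper itself uses later for the $n$-copy additivity statement (Proposition \ref{Prop:Additivty_Max_versus_Orthogonal}), so both routes are available in the paper and your version is, if anything, slightly more economical for the single-copy case.
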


\begin{proof} 
    Note that the case \(\alpha=1\) was originally proven in \cite[Proposition 4]{Li}. The general idea of the proof is to lower-bound \(D^\text{LO}_\alpha\) and then find a matching upper-bound on \(D^\text{PPT}_\alpha\) via the variational formula. This strategy follows the meta idea that was used in \cite{Cheng,Christandl, Li} for similar computations.
    
    Let us start with the lower bound on \(D^\text{LO}_\alpha\). For this, we pick the local basis measurement with POVM elements \(L^{(i,j)} := \ketbra{i}{i}_A \otimes \ketbra{j}{j}_B\). Clearly, we have \(L \in \text{LO}\). The induced probability distributions are given by
    \begin{align}
        \mu_\Phi^{L}(i,j) &= \frac{\delta_{i,j}}{d} & \text{and} && \mu_{\Phi^\perp}^{L}(i,j) &= \frac{1}{d^2-1}\left( 1-\frac{\delta_{i,j}}{d}\right) \, ,
    \end{align}
    where \(\delta_{i,j}\) is the Kronecker delta. This yields by direct calculation the lower bound of
    \begin{equation}
        D^\text{LO}_\alpha \left(\Phi \middle\| \Phi^\perp \right) \geq D_\alpha \left( \mu_\Phi^L \middle\| \mu_{\Phi^\perp}^L \right) = \log(d+1) \, .
    \end{equation}
    
    For the second part, we employ the variational characterization to compute a bound on \(D^\text{PPT}_\alpha \). Since the bound is independent of \(\alpha\), we can focus on the case \(\alpha = +\infty \) due to the monotonicity in \(\alpha\). Here, we showed in Proposition \ref{Lem:Exact_Variational_Max} that the following holds
    \begin{equation}
        D^\text{PPT}_{\max}\left(\Phi\middle\|\Phi^\perp\right) = \sup_{\omega>0} \left\{ \log\tr[ \Phi \omega] \; \middle| \; \tr[ \Phi^\perp \omega ] = 1 \wedge \omega^\Gamma \geq 0 \right\} \, .
    \end{equation}
    Since we test two isotropic states against each other there is an inherent symmetry in the problem. We can take advantage of this similar to the approaches used in \cite{Cheng,Li,Christandl}. Namely, we have that \(\tr[ \dutchcal{i}(p) \omega ] = \tr[  \dutchcal{i}(p) \mathcal{I}(\omega) ] \) for any \(\omega \in \mathcal{P}\), whereby \(\mathcal{I}\) denotes the isotropic twirling operation. Its action on a linear operator \(X_{AB}\) is given by
    \begin{equation}
        \mathcal{I}\left(X_{AB}\right) = \int_{\mathcal{U}(d)} \left( U \otimes \bar{U} \right) X_{AB} \left( U \otimes \bar{U} \right)^\dagger \dd{\mu_H(U)} \, ,
    \end{equation}
    where the integration is with respect to the Haar measure \(\dd{\mu_H(U)}\) on the unitary group \(\mathcal{U}(d)\). Moreover, we note that the PPT property is preserved under application of the twirling channel.\footnote{This follows from local unitary invariance and convexity of the set of PPT operators.}
    
    Thus, we may restrict the optimization w.l.o.g.\ to isotropic PPT operators. We can decompose these as a conic combination of the projector on the maximally entangled state and its orthogonal complement, i.e.\ we can write
    \begin{equation}
        \mathcal{I}(\omega) = c_1 \Phi + c_2 (1_{AB} - \Phi) 
    \end{equation}
    for coefficients \(c_1,c_2 \geq 0\). Note that these have the PPT property if \(c_1 \leq (d+1)c_2\). 
    
    This allows us to rewrite the optimization problem as
    \begin{equation}
        \text{maximize} \; \; \log c_1 \; \; \text{s.t.} \; \; c_2 = 1 \wedge 0 < c_1 \leq (d+1)c_2 \, .
    \end{equation}
    This clearly has the solution \(\log(d+1)\) with gives us the matching upper bound.
\end{proof}

Our proof strategy generalizes to the scenario of testing \(\rho=\Phi\) against an arbitrary isotropic state \(\sigma=\dutchcal{i}(q)\). Note that since the states commute the optimal measurement in the unrestricted case is again \(T_\Phi\), which yields 
\begin{equation}
    D_\alpha^\text{ALL} \left(\Phi \middle\| i(q) \right) = - \log(q) \, .
\end{equation}
There then exists a finite gap to the locally-measured R\'enyi divergences.
\begin{proposition}\label{Prop:Max_versus_Isotropic}
    Let \( \mathcal{M} = \{ \text{LO}, \text{LOCC}_1, \text{LOCC}, \text{SEP}, \text{PPT} \} \), \(q\in[0,1]\) and \(\alpha > 0 \). With definitions as above, we have 
    \begin{equation}
        D_\alpha^\mathcal{M} \left(\Phi \middle\| i(q) \right) = \log(\frac{d +1}{qd+1}) \, .
    \end{equation}
\end{proposition}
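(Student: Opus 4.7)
The plan is to mirror the two-sided sandwich strategy of Proposition \ref{Prop:Max_versus_Orthogonal}: exhibit a local measurement that achieves the claimed value as a lower bound on $D_\alpha^{\mathrm{LO}}$, then match it from above with a bound on $D_\alpha^{\mathrm{PPT}}$ obtained via the variational formula. Since the chain $D_\alpha^{\mathrm{LO}}\leq D_\alpha^{\mathrm{LOCC}_1}\leq D_\alpha^{\mathrm{LOCC}}\leq D_\alpha^{\mathrm{SEP}}\leq D_\alpha^{\mathrm{PPT}}$ holds, pinning the two ends to the same value settles all five cases at once.

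For the lower bound I would take the local canonical basis measurement $L^{(i,j)}:=\ketbra{i}{i}_A\otimes\ketbra{j}{j}_B$, which lies in LO. A direct computation gives $\mu_\Phi^L(i,j)=\delta_{i,j}/d$, and using $\dutchcal{i}(q)=q\Phi+(1-q)\Phi^\perp$ together with $\mu_{\Phi^\perp}^L(i,i)=1/(d(d+1))$, one obtains $\mu_{\dutchcal{i}(q)}^L(i,i)=(qd+1)/(d(d+1))$. Since $\mu_\Phi^L$ is supported on the diagonal,
\begin{equation}
Q_\alpha\bigl(\mu_\Phi^L\bigl\|\mu_{\dutchcal{i}(q)}^L\bigr)=d\cdot(1/d)^\alpha\Bigl(\tfrac{qd+1}{d(d+1)}\Bigr)^{1-\alpha}=\Bigl(\tfrac{qd+1}{d+1}\Bigr)^{1-\alpha},
\end{equation}
which on taking $\frac{1}{\alpha-1}\log$ yields exactly $\log\bigl(\tfrac{d+1}{qd+1}\bigr)$. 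The limits $\alpha\to 1,\infty$ follow by continuity, so $D_\alpha^{\mathrm{LO}}(\Phi\|\dutchcal{i}(q))\geq\log\bigl(\tfrac{d+1}{qd+1}\bigr)$.

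For the matching upper bound, by monotonicity in $\alpha$ (Lemma \ref{Lem:General_Properties}, Property 2) it suffices to treat $\alpha=\infty$. I would invoke the exact variational characterization of Proposition \ref{Lem:Exact_Variational_Max} together with the dual form of Corollary \ref{Lem:Dual_PPT_Max}, writing
\begin{equation}
D_{\max}^{\mathrm{PPT}}\bigl(\Phi\bigl\|\dutchcal{i}(q)\bigr)=\log\sup_{\omega>0,\,\omega^\Gamma\geq 0}\bigl\{\tr[\Phi\omega]\,\bigl|\,\tr[\dutchcal{i}(q)\omega]=1\bigr\}.
\end{equation}
Since both $\Phi$ and $\dutchcal{i}(q)$ are invariant under the isotropic twirl $\mathcal{I}$, and $\mathcal{I}$ preserves positivity and the PPT property, the optimum is attained on isotropic operators $\omega=c_1\Phi+c_2(1_{AB}-\Phi)$ with $c_1,c_2\geq 0$. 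The PPT condition reduces to $c_1\leq(d+1)c_2$ and one computes $\tr[\Phi\omega]=c_1$ and $\tr[\dutchcal{i}(q)\omega]=qc_1+(1-q)c_2$.

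The problem thereby reduces to the two-variable linear program: maximize $c_1$ subject to $qc_1+(1-q)c_2=1$, $c_1\leq(d+1)c_2$, $c_1,c_2\geq 0$. Substituting $c_2=(1-qc_1)/(1-q)$ into the PPT constraint gives $c_1(1+dq)\leq d+1$, so $c_1^\star=(d+1)/(qd+1)$, and hence $D_{\max}^{\mathrm{PPT}}(\Phi\|\dutchcal{i}(q))\leq\log\bigl(\tfrac{d+1}{qd+1}\bigr)$, completing the sandwich. The boundary case $q=1$ is trivial ($\dutchcal{i}(1)=\Phi$, value $0$) and $q=0$ recovers Proposition \ref{Prop:Max_versus_Orthogonal}. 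The main technical friction is verifying that restriction to isotropic $\omega$ via twirling is lossless after the strict-positivity constraint $\omega>0$; but any optimizer can be replaced by $(1-\varepsilon)\mathcal{I}(\omega)+\varepsilon 1_{AB}/d^2$ and $\varepsilon\to 0$ sent, which preserves feasibility and attains the supremum as a limit.
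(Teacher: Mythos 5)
Your proposal is correct and follows essentially the same route as the paper: the local computational-basis measurement $L$ gives the lower bound $\log\bigl(\tfrac{d+1}{qd+1}\bigr)$ on $D_\alpha^{\mathrm{LO}}$, and the matching upper bound on $D_{\max}^{\mathrm{PPT}}$ comes from the exact variational characterization reduced by isotropic twirling to the same two-variable program with optimum $c_1^\star=(d+1)/(qd+1)$. Your extra remark on why the restriction to twirled operators is lossless (the twirl preserves strict positivity and the PPT property) is a detail the paper leaves implicit, but it changes nothing of substance.
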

\begin{proof}   
    We apply the same proof strategy as for Proposition \ref{Prop:Max_versus_Orthogonal}. First, observe that for \(q=1\) the claim reduces to \(D_\alpha^\mathcal{M} \left(\Phi \middle\| \Phi \right) = 0 \) which trivially holds. Thus, we assume \(q\not=1\) in the following. Next, observe that the local basis measurement \(L\) defined in the proof of Proposition \ref{Prop:Max_versus_Orthogonal} applied on a general isotropic state \(\dutchcal{i}(q)\) yields the probability distribution
    \begin{align}
        \mu_{\dutchcal{i}(q)}^{L}(i,j) = q \mu_\Phi^L(i,j) + (1-q)\mu_{\Phi^\perp}^L(i,j) = q \frac{\delta_{i,j}}{d} + \frac{1-q}{d^2-1}\left(1 - \frac{\delta_{i,j}}{d} \right)  \, .
    \end{align}
    By direct calculation, we then get the lower bound
    \begin{equation}
        D_\alpha \left( \mu_\Phi^L  \middle\| \mu_{\dutchcal{i}(q)}^L \right) = - \log( q + \frac{1 - q}{d+1} ) = \log(\frac{d +1}{qd+1}) \, .
    \end{equation} 

    For the upper bound, we use the twirling technique to rewrite the variational bound into
    \begin{equation}
        \text{maximize} \; \; \log c_1 \; \; \text{s.t.} \; \; q c_1 + (1-q) c_2 = 1 \wedge 0 < c_1 \leq (d+1)c_2 \, .
    \end{equation}
    The constraints can be recast into \(0 < c_1 \leq \frac{d+1}{qd+1}\) and this shows that the solution gives the matching upper bound.
\end{proof}

In Proposition \ref{Prop:Isotropic_Local} of App.\ \ref{App:Variational_Bound_Tight}, we additionally prove a result for the most general case of testing \(\rho=\dutchcal{i}(p)\) against \(\sigma=\dutchcal{i}(q)\). 


\subsubsection{Additivity on I.I.D.\ States}

Up to now, we only discussed comparing single copies of isotropic states. In order to apply our results to the problem of hypothesis testing, however, we need to consider the \(n\)-copy case as well. Here, the dual variational characterization of \(D_{\max}^\text{PPT}(\rho \| \sigma) \) comes in handy and enables us to compute the i.i.d.\ case in an elegant way. The following proposition states our result that the locally-measured R\'enyi divergences are additive on tensor powers of \(\rho=\Phi\) and \(\sigma=\Phi^\perp\).
\begin{proposition}\label{Prop:Additivty_Max_versus_Orthogonal}
Let \( \mathcal{M} = \{ \text{LO}, \text{LOCC}_1, \text{LOCC}, \text{SEP}, \text{PPT} \} \) and \(\alpha > 0 \). With definitions as above, we have 
    \begin{equation}
        D_\alpha^\mathcal{M} \left(\Phi^{\otimes n}\middle\|\left(\Phi^\perp \right)^{\otimes n} \right)  = n \log(d+1) \, .
    \end{equation}
\end{proposition}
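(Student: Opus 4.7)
The plan is to sandwich $D_\alpha^{\mathcal{M}_n}$ between matching bounds that pin its value for every set in the chain $\text{LO} \subseteq \text{LOCC}_1 \subseteq \text{LOCC} \subseteq \text{SEP} \subseteq \text{PPT}$, and to reduce the general-$\alpha$ upper bound to $\alpha=\infty$ using monotonicity of $D_\alpha^{\mathcal{M}}$ in $\alpha$ (Property~2 of Lemma~\ref{Lem:General_Properties}). For the lower bound I would combine super-additivity from Lemma~\ref{Lem:Regularization} with the single-copy value $D_\alpha^{\text{LO}}(\Phi\|\Phi^\perp) = \log(d+1)$ of Proposition~\ref{Prop:Max_versus_Orthogonal} to deduce $D_\alpha^{\text{LO}_n}(\Phi^{\otimes n}\|(\Phi^\perp)^{\otimes n}) \geq n \log(d+1)$; equivalently the product local basis measurement $L^{\otimes n}$ saturates this directly, since its induced distributions tensorise and the classical R\'enyi divergence is additive on independent measures.

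For the matching upper bound it suffices to show $D_{\max}^{\text{PPT}_n}(\Phi^{\otimes n}\|(\Phi^\perp)^{\otimes n}) \leq n \log(d+1)$. The natural tool is the dual semidefinite program of Corollary~\ref{Lem:Dual_PPT_Max}: any triple $(\lambda,X,Y)$ with $\lambda>0$, $X, Y \in \mathcal{P}$ and $\lambda (\Phi^\perp)^{\otimes n} - \Phi^{\otimes n} = X + Y^\Gamma$ certifies the bound $\log \lambda$. I would pick $\lambda = (d+1)^n$, $X = 0$, and
\begin{equation*}
Y = \bigl[(d+1)^n (\Phi^\perp)^{\otimes n} - \Phi^{\otimes n}\bigr]^\Gamma,
\end{equation*}
which satisfies the feasibility equation automatically because $\Gamma^2 = \mathrm{id}$. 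Everything therefore reduces to checking $Y \geq 0$.

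The main obstacle is this positivity check. Since the total partial transpose on $B_1\cdots B_n$ factorizes across the $n$ pairs, $Y = (d+1)^n ((\Phi^\perp)^\Gamma)^{\otimes n} - (\Phi^\Gamma)^{\otimes n}$. Writing the swap $F$ on a single pair as $\Pi_+ - \Pi_-$ with $\Pi_\pm$ the symmetric/antisymmetric subspace projectors, one obtains $\Phi^\Gamma = (\Pi_+ - \Pi_-)/d$ and correspondingly $(\Phi^\perp)^\Gamma = \Pi_+/[d(d+1)] + \Pi_-/[d(d-1)]$. Both summands of $Y$ are then diagonal in the orthogonal family of projectors $\Pi_{\vec S} := \bigotimes_{i=1}^n \Pi_{S_i,i}$ for $\vec S \in \{+,-\}^n$, and expanding the products gives the coefficient of $\Pi_{\vec S}$ in $Y$ as $d^{-n}\bigl[((d+1)/(d-1))^k - (-1)^k\bigr]$ with $k$ the number of $-$'s in $\vec S$. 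This is $0$ for $k=0$ and strictly positive for every $k\geq 1$ (in both parities, since $(d+1)/(d-1) > 1$ for $d \geq 2$). Hence $Y \geq 0$, which closes the dual-feasibility certificate and finishes the proof.
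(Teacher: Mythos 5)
Your proof is correct and follows essentially the same route as the paper: the product local basis measurement $L^{\otimes n}$ for the lower bound, reduction to $\alpha=\infty$ by monotonicity, and the dual PPT program with the identical dual-feasible point $\lambda=(d+1)^n$, $X=0$, $Y=\bigl[(d+1)^n(\Phi^\perp)^{\otimes n}-\Phi^{\otimes n}\bigr]^\Gamma$ (the paper writes this same $Y$ in terms of $1_{AB}-F/d$ and $F/d$). Your explicit diagonalization over the symmetric/antisymmetric projector strings $\Pi_{\vec S}$ is in fact a cleaner verification of $Y\geq 0$ than the paper's somewhat loosely stated eigenvalue formula.
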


\begin{proof}
    For the lower bound, we simply observe that measuring with the product POVM \(L^{\otimes n}\), with \(L\) defined as in the proof of Proposition \ref{Prop:Max_versus_Orthogonal}, directly gives the desired lower bound \(n \log(d+1)\) on \(D_\alpha^{\text{LO}}\).

    For the upper bound, we use the dual characterization of \(D_{\max}^\text{PPT}\). Note that any dual-feasible point gives an upper bound on the primal problem. The point we pick is \( \lambda = (d+1)^n \), \( X = 0 \) and
    \begin{align}
        Y &= \frac{1}{(d-1)^n} \left[ \left(1_{AB} -  \frac{F}{d}\right)^{\otimes n} - (d-1)^n \left( \frac{F}{d} \right)^{\otimes n} \right] \, ,
    \end{align}
    where \(F := \sum_{i,j} \ketbra{i}{j}_A \otimes \ketbra{j}{i}_B \) denotes the swap operator. The feasibility of \(\lambda\) and \(X\) is clear. We have to check that \(Y\) is positive semi-definite and that the side constraint is satisfied.

    For this, note that \(Y\) has eigenvalues 
    \begin{align}
        \frac{1}{(d-1)^n} \left[ \left(1 \mp \frac{1}{d}\right)^n - (d-1)^n \left( \frac{\pm 1}{d} \right)^n \right] \geq \frac{1}{(d-1)^n} \left[ \left(1 - \frac{1}{d}\right)^n - (d-1)^n \left( \frac{1}{d} \right)^n \right] = 0\,,
    \end{align}
    and thus \(Y\) is positive semi-definite. Further, observe that 
    \begin{align}
        Y^\Gamma &= \frac{1}{(d-1)^n} \left[ \left(1_{AB} -  \Phi \right)^{\otimes n} - (d-1)^n \Phi^{\otimes n} \right] \\ 
        &= (d+1)^n \left( \frac{ 1_{AB} -  \Phi}{d^2-1} \right)^{\otimes n} - \Phi^{\otimes n}
        = \lambda \left(\Phi^\perp\right)^{\otimes n} - \Phi^{\otimes n}\,.
    \end{align} 
    Therefore, the point is indeed dual-feasible. With this, we obtain the matching upper bound.
\end{proof}

We can generalize this proof technique to prove an additivity result for the locally-measured max-divergence in the most general case \(\rho=\dutchcal{i}(p)\) and \(\sigma=\dutchcal{i}(q)\).
\begin{proposition}\label{Prop:Additivity_Max_Divergence}
Let \( \mathcal{M} = \{ \text{LO}, \text{LOCC}_1, \text{LOCC}, \text{SEP}, \text{PPT} \} \), \(p \in [0,1]\) and \(q \in [0,1]\). With definitions as above, we have
\begin{align}
    D_{\max}^\mathcal{M} \left( \dutchcal{i}(p) ^{\otimes n} \middle\| \dutchcal{i}(q) ^{\otimes n} \right)  =  n \log \left(\frac{pd+1}{qd+1}\right), \label{eq:isotropub}
\end{align}  
provided one of the following conditions is satisfied:
\begin{enumerate}
    \item \( p \leq \frac{1}{d} \) (separable), \( q \leq \frac{1}{d} \) (separable) and \(q \leq p\),
    \item  \( p \geq \frac{1}{d} \) (entangled), \( q \leq \frac{1}{d} \) (separable) and \(qp\leq \frac{1}{d^2}\).
    \item \( p \geq \frac{1}{d} \) (entangled), \(q \geq \frac{1}{d}\) (entangled) and \( p = q\), i.e.\ they have to be equal.
\end{enumerate} 
\end{proposition}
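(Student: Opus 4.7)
The plan is to sandwich the divergence between a lower bound from a product local basis measurement and a matching upper bound from the dual PPT program in Corollary~\ref{Lem:Dual_PPT_Max}, generalizing the argument of Proposition~\ref{Prop:Additivty_Max_versus_Orthogonal}.

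For the lower bound, I would apply the product local measurement $L^{\otimes n}$ from the proof of Proposition~\ref{Prop:Max_versus_Orthogonal}. The single-copy distribution $\mu^L_{\dutchcal{i}(p)}$ takes value $\tfrac{pd+1}{d(d+1)}$ on the diagonal and $\tfrac{1-p}{d^2-1}$ off it, so additivity of the classical max-divergence on i.i.d.\ products yields
\[
D_{\max}^{\text{LO}}\bigl(\dutchcal{i}(p)^{\otimes n}\bigm\|\dutchcal{i}(q)^{\otimes n}\bigr) \;\geq\; n\cdot\max\Bigl\{\log\tfrac{pd+1}{qd+1},\,\log\tfrac{1-p}{1-q}\Bigr\}.
\]
A short computation using $(pd+1)(1-q)-(qd+1)(1-p)=(d+1)(p-q)$ shows that the first term dominates exactly when $p\geq q$, which is satisfied under each of the three stated hypotheses; this produces the desired lower bound $n\log\tfrac{pd+1}{qd+1}$.

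For the upper bound, I would exhibit a dual-feasible triple for Corollary~\ref{Lem:Dual_PPT_Max} with $\lambda=\bigl(\tfrac{pd+1}{qd+1}\bigr)^{n}$ and $X=0$. The key observation is that $\dutchcal{i}(p)^{\Gamma}$ is a Werner state decomposing as $\dutchcal{i}(p)^{\Gamma}=s_p\Pi_S+a_p\Pi_A$ with $s_p:=\tfrac{pd+1}{d(d+1)}$ and $a_p:=\tfrac{1-pd}{d(d-1)}$. Consequently, labelling each tensor factor symmetric or antisymmetric produces an orthogonal family of projectors $\{\Pi_T\}_{T\subseteq[n]}$ on $(A\otimes B)^{\otimes n}$, with $\bigl(\dutchcal{i}(p)^{\Gamma}\bigr)^{\otimes n}$ acting on $\Pi_T$ by the scalar $s_p^{|T|}a_p^{n-|T|}$. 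Setting
\[
Y\;:=\;\sum_{T\subseteq[n]}\gamma_T\,\Pi_T,\qquad \gamma_T\;:=\;\lambda\,s_q^{|T|}a_q^{n-|T|}-s_p^{|T|}a_p^{n-|T|},
\]
and using that $\Gamma$ is an involution gives $Y^{\Gamma}=\lambda\,\dutchcal{i}(q)^{\otimes n}-\dutchcal{i}(p)^{\otimes n}$, so the dual side constraint is met; the choice of $\lambda$ additionally forces $\gamma_{[n]}=0$, showing this is the tight feasible value.

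The main obstacle is the positivity check $\gamma_T\geq 0$ for every $T$, which is exactly where the three hypotheses enter. Writing $k:=n-|T|$ and substituting $\lambda=(s_p/s_q)^n$, the inequality reduces to $(s_p a_q/s_q)^k\geq a_p^k$. Under hypothesis~1 both $a_p,a_q\geq 0$, and this collapses to $s_p a_q\geq s_q a_p$, which is algebraically equivalent to $p\geq q$. Under hypothesis~2 one has $a_p\leq 0\leq a_q$: odd $k$ makes the bound automatic, while even $k$ requires $s_p a_q\geq |a_p|\,s_q$, simplifying to $pq\leq 1/d^2$. Under hypothesis~3 both $a_p,a_q\leq 0$; here the odd-$k$ and even-$k$ inequalities pull in opposite directions and force $s_p a_q=s_q a_p$, equivalently $p=q$, in which case $\lambda=1$ and the dual identity is trivial. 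Combining the two matching bounds then yields the claimed equality.
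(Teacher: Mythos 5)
Your proposal is correct and follows essentially the same route as the paper: the lower bound via the product local basis measurement $L^{\otimes n}$, and the upper bound via the dual PPT program of Corollary~\ref{Lem:Dual_PPT_Max} with the dual point $\lambda=\bigl(\tfrac{pd+1}{qd+1}\bigr)^n$, $X=0$, $Y^\Gamma=\lambda\,\dutchcal{i}(q)^{\otimes n}-\dutchcal{i}(p)^{\otimes n}$; your eigenvalues $\lambda s_q^{|T|}a_q^{n-|T|}-s_p^{|T|}a_p^{n-|T|}$ are exactly the paper's, just organized through the symmetric/antisymmetric decomposition of $\dutchcal{i}(p)^\Gamma$, and the case analysis recovers the same three conditions. (Only cosmetic caveat: $\dutchcal{i}(p)^\Gamma$ is a Werner-type operator, not a state, when $p>1/d$.)
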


\begin{proof}

For the lower bound, consider the tensor product measurement \(L^{\otimes n}\) as used in Proposition \ref{Prop:Additivty_Max_versus_Orthogonal}.  This gives the bound
\begin{equation}
    D_{\max}^\text{LO} \left( \dutchcal{i}(p)^{\otimes n} \middle\| \dutchcal{i}(q)^{\otimes n} \right) \geq n D_{\max} \left( \mu_{\dutchcal{i}(p)}^L \middle\| \mu_{\dutchcal{i}(q)}^L\right) = n \log \left(\frac{pd+1}{qd+1}\right)
\end{equation}
where the final equality holds under the condition that \( q \leq p\). 

For the upper bound, we use the dual characterization of \(D_{\max}^\text{PPT}\) from Corollary \ref{Lem:Dual_PPT_Max}. We pick the dual point given by
\begin{align}
    \lambda &= \left(\frac{pd+1}{qd+1}\right)^n, & X &= 0, & Y^\Gamma = \lambda \dutchcal{i}(q)^{\otimes n} - \dutchcal{i}(p)^{\otimes n} \, .
\end{align}

We have to show that \(Y\) is positive semi-definite in order to prove that this is indeed dual-feasible. For this, note that we have
\begin{equation}
    Y= \frac{1}{d^n(d^2-1)^n}\left( \lambda \left( (1-q)d 1_\dutchcal{H} +(d^2q -1) F \right)^{\otimes n}- \left( (1-p)d 1_\dutchcal{H} +(d^2p-1) F \right)^{\otimes n}\right)
\end{equation}
The eigenvalues of \(Y\) are given by
\begin{align}
    \frac{\lambda}{d^n} \frac{(1+qd)^k}{(d+1)^k}\frac{(1-qd)^{n-k}}{(d-1)^{n-k}}-\frac{1}{d^n}\frac{(1+pd)^k}{(d+1)^k}\frac{(1-pd)^{n-k}}{(d-1)^{n-k}} \; \; \text{for} \; \; 0\leq k\leq n \, .
\end{align}

The non-negativity requirement of the eigenvalues for \(0\leq k\leq n\) translates into
\begin{equation}
    \left( \frac{pd+1}{qd+1} \right)^{n-k} (1-qd)^{n-k} \geq (1-pd)^{n-k} \, .
\end{equation}
We then have the following cases:
\begin{enumerate}
    \item If \(p\leq \frac{1}{d}\) and \(q \leq \frac{1}{d}\), these conditions are satisfied if \(q\leq p\). 

    \item If \( p \geq \frac{1}{d}\) and \(q \leq \frac{1}{d}\), they reduce to the requirement that \(pq \leq \frac{1}{d^2} \).

    \item If \( p \geq \frac{1}{d}\) and \(q \geq \frac{1}{d}\), they only hold in the case \(p=q\).
\end{enumerate}
Hence, under the conditions stated in the proposition, we have by weak duality that
\begin{align}
     D_{\max}^\text{PPT} \left( \dutchcal{i}(p)^{\otimes n} \middle\| \dutchcal{i}(q)^{\otimes n} \right) \leq n \log \lambda =n \log \left(\frac{pd+1}{qd+1}\right),
\end{align}
which gives the matching upper bound and completes the proof.
\end{proof}

We remark that it is possible to adapt an idea in the proof of \cite[Proposition 7]{Cheng} to extend the twirling technique we use in App.\ \ref{App:Variational_Bound_Tight} to compute the primal programs for general orders \(\alpha\) and general single-letter isotropic states to the \(n\)-copy case. This offers a potential route to investigate the additivity of the locally-measured R\'enyi divergences on isotropic states when the single-letter characterization is \(\alpha\)-dependent. Note, however, that this has the caveat that the variational bounds do not necessarily give tight bounds in the general case\,---\,as we show in App.\ \ref{App:Variational_Bound_Tight}.


\subsubsection{Application to Hypothesis Testing}\label{Sec:Hypothesis_Testing_Isotropic}

Building on our work in Sec.\ \ref{Sec:Hypothesis_Testing}, we can use these results to generalize a result of Cheng et.\ al.\ in \cite[Corollary 11]{Cheng}. There they computed the optimal Stein's exponent for testing the maximally entangled state versus its orthogonal complement
\begin{equation}
    \zeta_\text{Stein}^\mathcal{M}\left(\Phi,\Phi^\perp;\varepsilon\right) = \log(d+1) \, .
\end{equation}
The additivity result of Proposition \ref{Prop:Additivty_Max_versus_Orthogonal} then allows us to re-derive this hypothesis testing result in a simple manner since the regularized measured relative entropy becomes single-letter for $\rho=\Phi$, $\sigma=\Phi^\perp$ and the strong converse part becomes tight. Moreover, combining the results of Propositions \ref{Prop:Max_versus_Isotropic} and \ref{Prop:Additivity_Max_Divergence} allows us to extend this result to 
\begin{equation}
    \zeta_\text{Stein}^\mathcal{M}\left(\Phi,\dutchcal{i}(q);\varepsilon\right) = \log(\frac{d+1}{qd+1}) \; \; \text{if} \; \; q \leq \frac{1}{d^2}\, .
\end{equation}
Along similar lines, we can generalize the result of Cheng et.\ al.\ for \(q=0\) of the strong converse exponent \cite[Corollary 13]{Cheng}. We obtain
\begin{equation}
    \zeta_\text{SC}^\mathcal{M}\left(\Phi,\dutchcal{i}(q);r\right) = r- \log(\frac{d+1}{qd+1}) \; \; \text{if} \; \; q \leq \frac{1}{d^2}\, 
\end{equation}
for all rates \(r \geq \log(\frac{d+1}{qd+1}) \).


\section{Outlook}

We left open the problem of the Chernoff and Hoeffding exponents for the restricted hypothesis testing problem on data-hiding states (cf.\ again the results in \cite[Table 1]{Cheng}). Here, we already have an achievability result in terms of the regularized measured R\'enyi divergences but the converse part remains open. In that respect, Brandao et.\ al.\ offered in \cite[Conjecture 17]{Brandao} a possible expression for the Chernoff exponent under restricted measurements. This is a rather subtle point though since in the unrestricted case, we know that the regularized measured R\'enyi divergence does not give the optimal error exponent. 

Another possible research direction is to investigate further operational uses of the measured R\'enyi divergence in quantum information theory, e.g.\,, in the direction of\cite{Cheng2}. On the technical side, it would be interesting if an exact variational characterization can be obtained for the measurement classes SEP and PPT for general orders \(\alpha\) and especially for the special case \(\alpha=1\). Moreover, one could consider measured versions of other divergences, a possible example being the Hilbert \(\alpha\)-divergences \cite{Buscemi} (see also \cite{Wilde2}).


\section*{Acknowledgments}

We thank Hao-Chung Cheng, Aditya Nema, and Julius Zeiss for discussions and feedback. MB, TR, and SS acknowledge support from the Excellence Cluster - Matter and Light for Quantum Computing (ML4Q). MB and TR acknowledge funding by the European Research Council (ERC Grant Agreement No. 948139).




\appendix

\section{Infinite-Dimensional Extension}\label{App:Infinite-Dimensional_Version}

Here, we generalize Theorem \ref{Thm:Variational_Q} to the infinite-dimensional case. This result can be seen as a generalization of \cite[Lemma 20]{Ferrari}, which discussed the special case \(\alpha=1\) and \(\mathcal{M}=\text{ALL}\), and was helpful for studying continuous variable resource theories.

First, we need to introduce some notation. Consider a separable Hilbert space $\mathbb{H}$, and quantum density operators $\rho,\sigma$ on $\mathbb{H}$. 
A positive operator-valued measure (POVM) \(M\) over a finite set \(\mathcal{Z}\) is a map from \(\mathcal{Z}\) to \(\mathcal{P}\) which satisfies \(\sum_{z \in \mathcal{Z}} M^z = 1_\mathbb{H} \). For a given \(\rho \in \mathcal{P}\), the POVM induces a positive measure \( \mu_\rho^M\) over \(\mathcal{Z}\) according to the Born rule: \(\mu_\rho^M(z) = \tr[\rho M^z] \). With this, the measured R\'enyi divergences can be defined as in the finite-dimensional case. Moreover, let 
\begin{equation}
    C_{\mathcal{M}} := \bigg\{\omega: \omega=\sum_{z \in \mathcal{Z}} \lambda_z M^z,  \lambda_z \geq 0, ~\forall~z \in \mathcal{Z}, M \in \mathcal{M} \bigg\}
\end{equation}
be the cone of positive definite operators corresponding to $\mathcal{M}$. Note that $C_{\mathcal{M}}  $ is a subset of the space of bounded positive definite  operators on $\mathbb{H}$. 

\begin{theorem}\label{Thm:Infinite_Dimensional}
For \(\rho \in \mathcal{S} \), \(\sigma \in \mathcal{S}\) and \(\mathcal{M} \subseteq \text{ALL}\), we have \\
1) for \(\alpha \in \left(0,1/2\right)\) that
\begin{align}
    Q_{\alpha}^{\mathcal{M}} \left(\rho \middle\| \sigma \right) \geq& \inf_{\stackrel{\omega >0}{\omega \in C_\mathcal{M}}} \alpha \tr[\rho \omega ] + (1-\alpha) \tr[\sigma \omega^{\frac{\alpha}{\alpha-1}}] \label{eq:var1} = \inf_{\stackrel{\omega >0}{\omega \in C_\mathcal{M}}} \tr[ \rho \omega ]^{\alpha} \tr[\sigma \omega^{\frac{\alpha}{\alpha-1}}]^{1-\alpha} \, ,
\end{align} 
2) for \(\alpha \in \left[1/2,1\right)\) that
\begin{align}
    Q_{\alpha}^{\mathcal{M}}\left(\rho \middle\| \sigma \right) \geq& \inf_{\stackrel{\omega >0}{\omega \in C_\mathcal{M}}} \alpha \tr[\rho \omega^{\frac{\alpha-1}{\alpha}}] + (1-\alpha) \tr[\sigma \omega] = \inf_{\stackrel{\omega >0}{\omega \in C_\mathcal{M}}} \tr[\rho \omega^{\frac{\alpha-1}{\alpha}}]^{\alpha} \tr[\sigma \omega]^{1-\alpha} \, ,
\end{align}
and 3) for \(\alpha \in (1,\infty)\) that
\begin{align}
    Q_{\alpha}^{\mathcal{M}} \left( \rho \middle\| \sigma \right) \leq& \sup_{\stackrel{\omega >0}{\omega \in C_\mathcal{M}}} \alpha \tr[\rho \omega^{\frac{\alpha-1}{\alpha}}] + (1-\alpha) \tr[\sigma \omega] = \sup_{\stackrel{\omega >0}{\omega \in C_\mathcal{M}}} \tr[ \rho \omega^{\frac{\alpha-1}{\alpha}}]^{\alpha} \tr[\sigma \omega]^{1-\alpha} \, . \label{eq:var2}
\end{align}
If $\mathcal{M}=\text{ALL}$, we have equality in the above.
\end{theorem}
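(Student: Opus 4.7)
The plan is to adapt the three-step structure of the proof of Theorem~\ref{Thm:Variational_Q} (full-support case via operator Jensen; extension to general states; optimization over the scalar $\lambda$) to the infinite-dimensional setting, making modifications that compensate for the absence of a maximally mixed state and the subtleties of functional calculus for unbounded spectra. The crucial structural observation enabling the approach is that the alphabet $\mathcal{Z}$ of every POVM in $\mathcal{M}$ remains finite, so all sums defining the candidate operators $\omega_M$ converge trivially and stay bounded.

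For the first step, I would fix $M \in \mathcal{M}$ and assume that $\rho,\sigma$ have trivial kernel, so that $\mu_\rho^M, \mu_\sigma^M$ are strictly positive on $\mathcal{Z}$. Setting
\[
\omega_M := \sum_{z \in \mathcal{Z}} \left(\tfrac{\mu_\rho^M(z)}{\mu_\sigma^M(z)}\right)^{\alpha-1} M^z
\]
for $\alpha \in (0,1/2)$ (and the analogous operator for the other ranges), one notes that $\omega_M$ is a bounded positive operator belonging to $C_\mathcal{M}$ and that $\tr[\rho\,\omega_M] = Q_\alpha(\mu_\rho^M \| \mu_\sigma^M)$. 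The key step is to apply the operator Jensen inequality with $f(t)=t^{\alpha/(\alpha-1)}$, operator convex on $(0,\infty)$ for $\alpha\in(0,1/2)$, to the family $\{\sqrt{M^z}\}_{z\in\mathcal{Z}}$ satisfying $\sum_z M^z = 1_\mathbb{H}$. The infinite-dimensional version of this inequality, due to Hansen–Pedersen and in the matrix-measure form of Farenick–Zhou \cite{Farenick1,Farenick2}, applies verbatim because the partition of unity is indexed by a finite set, so no convergence issues arise. An identical argument with $f(t)=t^{(\alpha-1)/\alpha}$ handles $\alpha \in [1/2,1)$ and $\alpha \in (1,\infty)$. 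Taking the infimum (resp.\ supremum) over $M \in \mathcal{M}$ yields \eqref{eq:var1}--\eqref{eq:var2} restricted to strictly positive $\rho,\sigma$. The passage between the additive and multiplicative forms is then the same scalar optimization over $\lambda>0$ as in the finite-dimensional proof, using only the cone property of $C_\mathcal{M}$.

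For the second step, to remove the full-support assumption I cannot add a multiple of $\pi$, since no maximally mixed state exists on $\mathbb{H}$. Instead I would use a compression-and-perturbation procedure: let $\{P_n\}$ be an increasing sequence of finite-rank projections with $P_n \uparrow 1_\mathbb{H}$ strongly, and within each finite-dimensional corner $P_n\mathbb{H}$ invoke Lemma~\ref{Lem:Continous_Extension} to produce full-support approximants $\rho_n,\sigma_n$ to which step one applies. Compression is compatible with $\mathcal{M}$ in the sense of Lemma~\ref{Lem:Modified_DPI}, so $Q_\alpha^{\mathcal{M}}$ is monotone along this process. In the limit $n\to\infty$, lower semicontinuity of $Q_\alpha^\mathcal{M}$ for $\alpha<1$ (resp.\ upper semicontinuity for $\alpha>1$), inherited from the classical Rényi divergence via dominated convergence on the finite set $\mathcal{Z}$, together with continuity of the linear objective in $(\rho,\sigma)$, transfers the variational bound to arbitrary $\rho,\sigma \in \mathcal{S}$.

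The main obstacle will be this extension from strictly positive to general states: the semicontinuity arguments must be verified separately for each regime of $\alpha$, and the objective involves potentially unbounded functions such as $\omega^{\alpha/(\alpha-1)}$ whose traces against $\sigma$ can diverge, so care is needed to prevent the limiting inequality from becoming vacuous. For the equality when $\mathcal{M}=\text{ALL}$, I would combine the exact finite-dimensional characterization of \cite[Lemma 3]{Berta1} on the corners $P_n\mathbb{H}$ with the infinite-dimensional upper bound just established; since any bounded positive $\omega$ on $P_n\mathbb{H}$ extends by zero on $(1_\mathbb{H}-P_n)\mathbb{H}$ to an element of $C_\text{ALL}$ on $\mathbb{H}$, the matching lower bound on the variational expression persists in the limit, closing the two-sided identity.
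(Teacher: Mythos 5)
Your first step reproduces the paper's argument for the three inequalities: since every $M\in\mathcal{M}$ has a finite outcome set, the operator Jensen inequality applies verbatim to $\{\sqrt{M^z}\}_z$ and the scalar optimization over $\lambda$ carries over unchanged, which is exactly how the paper disposes of this part. The genuine content of the theorem, however, is the equality for $\mathcal{M}=\text{ALL}$, and here your proposal has a real gap. Your plan is: corner variational optimum $=$ corner measured quantity (by the finite-dimensional result) $\leq$ full measured quantity, and then ``the matching lower bound on the variational expression persists in the limit.'' That last clause is precisely the statement that
\[
\sup_{\substack{\omega>0\\ \omega\in C_{\text{ALL}}}} f_\alpha\big((\rho,\sigma),\omega\big)
\;=\;\lim_{n\to\infty}\ \sup_{\omega \text{ supported on } P_n\mathbb{H}} f_\alpha\big((\rho,\sigma),\omega\big)\,,
\]
and it is not automatic: a general bounded positive $\omega$ need not be well approximated by its corner compressions in any way that obviously preserves $\tr[\rho\,\omega^{(\alpha-1)/\alpha}]$. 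This reduction to finite rank is where the paper does the actual infinite-dimensional work: for $\alpha>1$ it approximates $\rho,\sigma$ in trace norm by $P_\epsilon\rho P_\epsilon$, uses the uniform bound $\omega\leq m\,1_\mathbb{H}$ to control the error, and applies the operator Jensen inequality a second time to pass from $P_\epsilon\,\omega^{(\alpha-1)/\alpha}P_\epsilon$ to $(P_\epsilon\omega P_\epsilon)^{(\alpha-1)/\alpha}$; for $\alpha\in(0,1)$ it instead truncates the spectral sum using non-negativity of every term and optimizes each eigenvalue separately, then absorbs the tail into an $\epsilon$. Your proposal contains no substitute for this step. (A smaller issue: extending $\omega$ by zero off $P_n\mathbb{H}$ violates the constraint $\omega>0$; you would need $\omega+\epsilon(1_\mathbb{H}-P_n)$ and a corresponding error estimate.)

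Your second step also departs from the paper in a way that does not work as stated. Compression by $P_n$ is not, for a general $\mathcal{M}$, a channel compatible with $\mathcal{M}$ in the sense required by Lemma \ref{Lem:Modified_DPI} (one needs $\mathcal{G}^\dagger(M^z)\in\mathcal{M}$ for every $M\in\mathcal{M}$, which fails e.g.\ for locality-constrained sets and a generic finite-rank projection), so the claimed monotonicity of $Q_\alpha^{\mathcal{M}}$ along the compression sequence is unjustified; and the semicontinuity you invoke to transfer the bound in the limit is asserted rather than proved. The degeneracies caused by non-faithful $\rho,\sigma$ are more directly handled per measurement, by perturbing the coefficients $\lambda_z$ defining $\omega_M$ (or equivalently the induced measures $\mu_\rho^M,\mu_\sigma^M$) rather than the states themselves, which sidesteps both the missing maximally mixed state and the compatibility issue.
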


\begin{proof}
Note that any  POVM $M=\{M^z\}_{z \in \mathcal{Z}} \in \mathcal{M}$ has $|\mathcal{Z}|$  finite by definition. Hence, the proof of the inequalities above follows via analogous steps to the finite dimensional case using the operator Jensen's inequality. Moreover, the equalities between the two variational expressions also follow by the same arguments. Hence, we only need to show the reverse inequality when $\mathcal{M}=\text{ALL}$, i.e., \\
1) for \(\alpha \in \left(0,1/2\right)\):
\begin{align}
    Q_{\alpha}^\text{ALL} \left(\rho \middle\| \sigma \right) \leq& \inf_{\stackrel{\omega >0}{\omega \in C_\text{ALL}}} \alpha \tr[\rho \omega ] + (1-\alpha) \tr[\sigma \omega^{\frac{\alpha}{\alpha-1}}], \label{eq:var1rev}
\end{align} 
2) for \(\alpha \in \left[1/2,1\right)\):
\begin{align}
    Q_{\alpha}^\text{ALL}\left(\rho \middle\| \sigma \right) \leq& \inf_{\stackrel{\omega >0}{\omega \in C_\text{ALL}}} \alpha \tr[\rho \omega^{\frac{\alpha-1}{\alpha}}] + (1-\alpha) \tr[\sigma \omega], \label{eq:var2rev}
\end{align}
and 3) for \(\alpha \in (1,\infty)\): 
\begin{align}
    Q_{\alpha}^\text{ALL} \left( \rho \middle\| \sigma \right) \geq& \sup_{\stackrel{\omega >0}{\omega \in C_\text{ALL}}} \alpha \tr[\rho \omega^{\frac{\alpha-1}{\alpha}}] + (1-\alpha) \tr[\sigma \omega]. \label{eq:var3rev}
\end{align}
We first show \eqref{eq:var3rev}. Note that if $\rho \not\ll \sigma$, then there exists a projector $P_i$ such that $\tr[\rho P_i]>0$ and $\tr[\sigma P_i]=0$. Hence, $Q_{\alpha}^\text{ALL} \left( \rho \middle\| \sigma \right)=+\infty$ and there is nothing to prove. Hence, assume that $\rho \ll \sigma$. We will first show that the supremum in \eqref{eq:var3rev} can be restricted to be over finite rank operators in $C_\text{ALL}$ without changing the value of the supremum. Consider any $\omega \in C_\text{ALL}$. Since $\omega$ is a bounded operator, there exists $m>0$ such that $\omega \leq m 1_\mathbb{H}$. Since $\rho$ and $\sigma$ are density operators, for any $\epsilon >0$, there exists a finite rank projector $P_{\epsilon}$ such that $\norm{P_{\epsilon}\rho P_{\epsilon}-\rho}_1 \leq \epsilon$ and $\norm{P_{\epsilon}\sigma P_{\epsilon}-\sigma}_1 \leq \epsilon$. Then, we have
\begin{align}
 \alpha \tr[\rho \omega^{\frac{\alpha-1}{\alpha}}] + (1-\alpha) \tr[\sigma \omega]& \leq \alpha \tr[P_{\epsilon}\rho P_{\epsilon} \omega^{\frac{\alpha-1}{\alpha}}] + (1-\alpha) \tr[P_{\epsilon}\rho P_{\epsilon}\omega] +m \epsilon \notag \\
 & \leq \alpha \tr[\rho \big(P_{\epsilon} \omega P_{\epsilon}\big)^{\frac{\alpha-1}{\alpha}}] + (1-\alpha) \tr[\sigma P_{\epsilon}\omega P_{\epsilon}] +m \epsilon, \notag
\end{align}
where the final inequality follows by operator Jensen's inequality applied to the concave map $x \mapsto x^{(\alpha-1)/\alpha}$ for $\alpha>1$.
Since $\epsilon>0$ is arbitrary, the claim follows. 

Now consider any $\omega \in C_\text{ALL}$ of finite rank, and let $\omega=\sum_{i=1}^r \lambda_i P_i $ be its spectral decomposition. Then, we have 
\begin{align}
\alpha \tr[\rho \omega^{\frac{\alpha-1}{\alpha}}] + (1-\alpha) \tr[\sigma \omega]=\sum_{i=1}^r \alpha  \lambda_i^{\frac{\alpha-1}{\alpha}}\tr[\rho P_i]+(1-\alpha) \lambda_i\tr[\sigma P_i]
\end{align}
Consider the supremum of each term in the summation in the RHS over $\lambda_i$. When $\tr[\sigma P_i]=\tr[\rho P_i]=0$, $\lambda_i$ can be chosen arbitrarily and the supremum is zero. When $\tr[\rho P_i]=0$ and $\tr[\sigma P_i]>0$, the supremum is $0$ since $1-\alpha <0$  (approached as $\lambda_i \downarrow 0$). For $i$ such that $\tr[\sigma P_i]>0$ and $\tr[\rho P_i]>0$,  the supremum is achieved at $\lambda_i=(\tr[\rho P_i]/\tr[\sigma P_i])^{\alpha}$. From this, we obtain
\begin{align}
    \alpha \tr[\rho \omega^{\frac{\alpha-1}{\alpha}}] + (1-\alpha) \tr[\sigma \omega]&=\sum_{i=1}^r \alpha  \lambda_i^{\frac{\alpha-1}{\alpha}}\tr[\rho P_i]+(1-\alpha) \lambda_i\tr[\sigma P_i] \\
    &\leq \sum_{i} (\tr[\rho P_i])^{\alpha}(\tr[\sigma P_i])^{1-\alpha} \\
    & \leq \sup_{M \in \text{ALL}} Q_\alpha \left( \mu_\rho^M \middle\| \mu_\sigma^M \right), 
    \end{align}
where the sum in the penultimate equation is over all indices $i$ such that $\tr[\sigma P_i]>0$ and $\tr[\rho P_i]>0$. Taking supremum over all $\omega \in C_\text{ALL}$ with finite rank then proves the claim for $\alpha \in (1,\infty)$. 

The proof of \eqref{eq:var1rev} and \eqref{eq:var2rev} are similar. Hence, we only show the former. Note that for $\alpha \in (0,1)$ and $\omega>0$, both the terms $\alpha \tr[\rho \omega ]$ and $(1-\alpha) \tr[\sigma \omega^{\frac{\alpha}{\alpha-1}}]$ are non-negative. Fix an arbitrary $\epsilon>0$. Considering the spectral decomposition $\omega=\sum_{i=1}^{\infty} \lambda_i P_i $,  choose a finite rank projector $P_{\epsilon}=\sum_{i} P_i$ of rank $r$ constructed from the projectors in the spectral decomposition of $\omega$ such that $(\tr[\rho (1_\mathbb{H}-P_{\epsilon})\rho]^{\alpha}(\tr[\sigma(1_\mathbb{H}-P_{\epsilon})])^{1-\alpha} \leq \epsilon$. Then, we have 
\begin{align}
    \alpha \tr[\rho \omega^{\frac{\alpha-1}{\alpha}}] + (1-\alpha) \tr[\sigma \omega]&=\sum_{i=1}^{\infty} \alpha  \lambda_i^{\frac{\alpha-1}{\alpha}}\tr[\rho P_i]+(1-\alpha) \lambda_i\tr[\sigma P_i]   \\
    & \geq \sum_{i=1}^{r} \alpha  \lambda_i^{\frac{\alpha-1}{\alpha}}\tr[\rho P_i]+(1-\alpha) \lambda_i\tr[\sigma P_i], 
\end{align}
where the inequality follows since the omitted terms are positive. For a fixed $i$, the infimum over $\lambda_i$ for each term within the sum is achieved at $\lambda_i=(\tr[\rho P_i]/\tr[\sigma P_i])^{\alpha}$. Substituting this yields that 
\begin{align}
    &\alpha \tr[\rho \omega^{\frac{\alpha-1}{\alpha}}] + (1-\alpha) \tr[\sigma \omega] \\
    &\geq \sum_{i=1}^r (\tr[\rho P_i])^{\alpha}(\tr[\sigma P_i])^{1-\alpha} \\
    & \geq \sum_{i=1}^r (\tr[\rho P_i])^{\alpha}(\tr[\sigma P_i])^{1-\alpha}+(\tr[\rho (1_\mathbb{H}-P_{\epsilon})\rho]^{\alpha}(\tr[\sigma(1_\mathbb{H}-P_{\epsilon})])^{1-\alpha}-\epsilon \notag \\
    & \geq \inf_{M \in \text{ALL}} Q_\alpha \left( \mu_\rho^M \middle\| \mu_\sigma^M \right)-\epsilon. \notag 
\end{align}
Taking infimum over $\omega \in C_\text{ALL}$ and noting that $\epsilon>0$ is arbitrary completes the proof of  the claim. 
\end{proof}

Moreover, it would be neat to extend the
concept of measured R\'enyi divergences to general, continuous POVMs described by measure spaces. Many of the steps in the proof of Theorem \ref{Thm:Infinite_Dimensional} still go through and we would like to point to a Jensen inequality for
operator-valued measures \cite{Farenick1}. It seems, however, that an extension of this inequality would be
needed \cite{Farenick2}.


\section{Variational Bounds for Isotropic States}\label{App:Variational_Bound_Tight}

Recalling our discussion in Sec.\ \ref{Sec:Local_Variational_Formula}, we argued that in general the variational bounds do not give tight bounds. In fact, as we show in the following, we can find parameters \(p\) and \(q\) such that testing the isotropic states \(\rho=\dutchcal{i}(p)\) and \(\sigma=\dutchcal{i}(q)\) yields bounds that are not tight (see Sec.\ \ref{Sec:Isotropic_States} for definitions). 

In order to make this argument we have to compute the locally-measured R\'enyi divergence via a different strategy then in Sec.\ \ref{Sec:Iso_States_LOCC}. Let us first consider the case of unrestricted measurements. Since the states commute, the optimal measurement is the measurement in the eigenbasis, i.e.\ the test \(T_\Phi\) as defined above. This gives for \(\alpha\in(0,1)\cup(1,\infty)\) the result
\begin{equation}\label{Eq:Isotropic_All_Alphha}
    D_\alpha^\text{ALL} \left( \dutchcal{i}(p) \big\| \dutchcal{i}(q) \right) = \frac{1}{\alpha-1} \log( p^\alpha q^{1-\alpha} + (1-p)^\alpha (1-q)^{1-\alpha} )
\end{equation}
with the limiting cases 
\begin{equation}\label{Eq:Isotropic_All_1}
    D^\text{ALL} \left( \dutchcal{i}(p) \big\| \dutchcal{i}(q) \right) = p \log(\frac{p}{q}) + (1-p)\log(\frac{1-p}{1-q})
\end{equation}
and 
\begin{equation}
    D^\text{ALL}_{\max}\left( \dutchcal{i}(p) \big\| \dutchcal{i}(q) \right) = \log \max \left\{ \frac{p}{q}, \frac{1-p}{1-q} \right\} \, .
\end{equation}

For the locality-constrained measurement classes, we adapt the proof strategy of \cite[Proposition 4]{Li} to obtain the optimal measurement for the set PPT. The proof is then completed by the observation that the local basis measurement achieves the same R\'enyi divergence. That is, as in the special cases discussed above, the locality-constrained measurement sets all perform equally well on isotropic states.

\begin{proposition}\label{Prop:Isotropic_Local}
    Let \( \mathcal{M} = \{ \text{LO}, \text{LOCC}_1, \text{LOCC}, \text{SEP}, \text{PPT} \} \), \(p\in[0,1]\), \(q\in[0,1]\) and \(\alpha \in (0,1) \cup (1,+\infty) \). With the definitions as above, we have 
    \begin{equation}
        Q_\alpha^\mathcal{M}\left( \dutchcal{i}(p) \big\| \dutchcal{i}(q) \right) = \frac{d}{d+1} \left[ \left( p + \frac{1}{d} \right)^\alpha \left( q +\frac{1}{d} \right)^{1-\alpha} + (1-p)^\alpha (1-q)^{1-\alpha} \right] \, .
    \end{equation}
    Additionally, we have the limiting cases
    \begin{align}
        D^\mathcal{M}\left( \dutchcal{i}(p) \big\| \dutchcal{i}(q) \right) &= \frac{d}{d+1} \left[ \left(p +\frac{1}{d}\right) \log(\frac{1+pd}{1+qd}) + (1-p)\log(\frac{1-p}{1-q}) \right]\\
        D^\mathcal{M}_{\max}\left( \dutchcal{i}(p) \big\| \dutchcal{i}(q) \right) &= \log \max \left\{ \frac{1+pd}{1+qd}, \frac{1-p}{1-q} \right\} \, .
    \end{align}
\end{proposition}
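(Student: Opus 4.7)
The argument is a sandwich between the LO and PPT classes; since $\text{LO} \subseteq \text{LOCC}_1 \subseteq \text{LOCC} \subseteq \text{SEP} \subseteq \text{PPT}$, matching the two endpoints will pin down every intermediate class. For the lower bound on $D_\alpha^{\text{LO}}$ I will plug in the product basis POVM $L^{(i,j)} = \ketbra{i}{i}_A \otimes \ketbra{j}{j}_B \in \text{LO}$ from Proposition~\ref{Prop:Max_versus_Orthogonal}. A direct evaluation using Eq.~\eqref{Def:Isotropic_State} gives $\mu_{\dutchcal{i}(p)}^L(i,j) = (pd+1)/(d(d+1))$ on the $d$ diagonal outcomes and $(1-p)/(d^2-1)$ on the $d(d-1)$ off-diagonal ones (and analogously for $q$). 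Forming the classical $Q_\alpha$ and combining like terms recovers
\begin{equation*}
Q_\alpha^{\star} := \frac{d}{d+1}\left[(p+1/d)^\alpha(q+1/d)^{1-\alpha} + (1-p)^\alpha(1-q)^{1-\alpha}\right],
\end{equation*}
which yields the claimed lower bound via $D_\alpha^{\text{LO}} \geq (\alpha-1)^{-1}\log Q_\alpha^\star$.

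For the matching upper bound on the PPT class I will adapt the isotropic twirl argument of Proposition~\ref{Prop:Max_versus_Orthogonal} to arbitrary $\alpha$. Because $\dutchcal{i}(p)$ and $\dutchcal{i}(q)$ are $(U \otimes \bar U)$-invariant, the twirled POVM $\tilde{M}^z := \int (U \otimes \bar U) M^z (U \otimes \bar U)^\dagger \, d\mu_H(U)$ reproduces the Born statistics of $M$; and it remains PPT because the partial transpose of $(U \otimes \bar U) M^z (U \otimes \bar U)^\dagger$ equals $(U \otimes U) (M^z)^\Gamma (U \otimes U)^\dagger \geq 0$. Hence I may restrict to POVMs whose elements lie in $\mathrm{span}\{\Phi, 1-\Phi\}$, writing $M^z = a_z \Phi + b_z (1-\Phi)$ with $a_z, b_z \geq 0$ (positivity), $\sum_z a_z = \sum_z b_z = 1$ (completeness), and $a_z \leq (d+1) b_z$ (PPT, using $\Phi^\Gamma = F/d$ together with $\mathrm{spec}(F) = \{\pm 1\}$). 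The Born statistics collapse to $\mu_{\dutchcal{i}(p)}^M(z) = p a_z + (1-p) b_z$ and likewise for $q$.

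Reparametrizing by the probability distribution $c_z := b_z$ and the ratio $r_z := a_z/b_z \in [0,d+1]$ (the degenerate case $b_z=0$ forces $a_z=0$ and is discarded), the constraint $\sum_z a_z = 1$ becomes $\mathbb{E}_c[R] = 1$, and the Rényi quantity rewrites as $\mathbb{E}_c[f(R)]$ with $f(r) := (pr+1-p)^\alpha(qr+1-q)^{1-\alpha}$. A brief logarithmic-derivative computation yields
\begin{equation*}
f''(r) = -\alpha(1-\alpha)\, f(r) \left(\tfrac{p}{pr+1-p} - \tfrac{q}{qr+1-q}\right)^2,
\end{equation*}
so $f$ is concave on $[0,d+1]$ for $\alpha \in (0,1)$ and convex for $\alpha \in (1,\infty)$. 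Extremising $\mathbb{E}_c[f(R)]$ (infimum for $\alpha<1$, supremum for $\alpha>1$) over probability measures on $[0,d+1]$ with mean $1$ is therefore attained at the two-point distribution supported on the endpoints $\{0,d+1\}$; the mean constraint forces weight $1/(d+1)$ at $d+1$, and the resulting value $\tfrac{d}{d+1}f(0) + \tfrac{1}{d+1}f(d+1)$ equals $Q_\alpha^{\star}$, matching the LO lower bound. The limiting formulas at $\alpha=1$ and $\alpha=\infty$ then follow either from Property~3 of Lemma~\ref{Lem:General_Properties} applied to the closed-form $Q_\alpha^\star$, or directly by L'Hôpital and a dominant-term analysis. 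The main obstacle is pinning down the sign of $f''$; once the logarithmic-derivative identity reduces the problem to an elementary one-dimensional extremisation under a mean constraint, the remaining computations are routine.
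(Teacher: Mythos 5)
Your proposal is correct and follows essentially the same route as the paper: the LO lower bound via the local computational-basis measurement, and the matching PPT upper bound via isotropic twirling followed by reduction to the extremal two-outcome measurement (your endpoints \(r=0\) and \(r=d+1\) are exactly the paper's extremal operators \(I_2\) and \(I_1\)). The only difference is in the bookkeeping of the final convexity step\,---\,the paper fine-grains POVM elements along the extremal rays and recombines via the homogeneity of \(x^\alpha y^{1-\alpha}\), whereas you recast the problem as a mean-constrained extremisation of the one-variable function \(f(r)\) and apply the chord bound; the two arguments are equivalent.
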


\begin{proof} 
    The claim for the measured relative entropy was originally proven in \cite[Lemma 24]{Christandl} using the proof strategy of \cite[Proposition 4]{Li}. We adapt the proof idea to the general \(\alpha\)-orders. 

    We can compute \(Q_\alpha^\text{PPT}\left( \dutchcal{i}(p) \big\| \dutchcal{i}(q) \right)\) exactly by taking advantage of symmetry. We start with the case \(\alpha\in(1,+\infty)\). Since we have \(\tr[\dutchcal{i}(p) M] = \tr[\dutchcal{i}(p) \mathcal{I}(M)]\) for any \(M\in\mathcal{P}\), we can restrict the optimization over POVMs w.l.o.g.\ to isotropic PPT measurements. Moreover, an arbitrary isotropic PPT operator can be decomposed into a conic combination of the extremal PPT operators
    \begin{align}
        I_1 &= \Phi + \frac{1}{d+1} \Phi^\perp & \text{and} && I_2 = \frac{d}{d+1} \Phi^\perp \, .
    \end{align}
    Due to the joint-convexity of \(Q_\alpha^\text{PPT}\), we can fine-grain each of the POVMs into operators proportional to the extremal ones and then join them back together into a binary measurement by using that 
    \begin{equation}
        (ax)^\alpha (ay)^{1-\alpha} + (bx)^\alpha (by)^{1-\alpha} = \big((a+b)x \big)^\alpha \big((a+b)y \big)^{1-\alpha} \, .
    \end{equation}  
    The optimal measurement is thus seen to be the binary measurement \(I := \{I_1, I_2\}\). The induced probability distribution of this binary measurement on an arbitrary isotropic state is given by
    \begin{equation}
        \mu_{\dutchcal{i}(p)}^I = \left\{ p + \frac{1-p}{d+1}, \frac{d}{d+1} (1-p) \right\}
    \end{equation}
    which then yields 
    \begin{align}
        Q_\alpha \left( \mu_{\dutchcal{i}(p)}^I \middle\| \mu_{\dutchcal{i}(q)}^I\right) &= \left( p + \frac{1-p}{d+1} \right)^\alpha \left( q + \frac{1-q}{d+1} \right)^{1-\alpha} + \frac{d}{d+1} \left(1-p \right)^{\alpha} \left(1-q\right)^{1-\alpha} \\
        &=\frac{d}{d+1} \left[ \left( p + \frac{1}{d} \right)^\alpha \left( q +\frac{1}{d} \right)^{1-\alpha} + (1-p)^\alpha (1-q)^{1-\alpha} \right] \, .
    \end{align}
    The case \(\alpha\in(0,1)\) can be proven analogously by using the joint concavity of \(Q_\alpha^\text{PPT}\). Then, we have by Property 3 of Lemma \ref{Lem:General_Properties} that 
    \begin{equation}
        D^\text{PPT}\left( \dutchcal{i}(p) \big\| \dutchcal{i}(q) \right) = \sup_{\alpha\in(0,1)} D_\alpha \left( \mu_{\dutchcal{i}(p)}^I \middle\| \mu_{\dutchcal{i}(q)}^I\right) = D \left( \mu_{\dutchcal{i}(p)}^I \middle\| \mu_{\dutchcal{i}(q)}^I\right) 
    \end{equation}
    and 
    \begin{equation}
        D^\text{PPT}_{\max}\left( \dutchcal{i}(p) \big\| \dutchcal{i}(q) \right) = \sup_{\alpha\in(1,\infty)} D_\alpha \left( \mu_{\dutchcal{i}(p)}^I \middle\| \mu_{\dutchcal{i}(q)}^I\right) = D_{\max} \left( \mu_{\dutchcal{i}(p)}^I \middle\| \mu_{\dutchcal{i}(q)}^I\right) \, ,
    \end{equation}
    where the final equalities follow by the continuity of the classical quantity.

    The proof is completed by observing that the local basis measurement \(L\) defined as \(L^{(i,j)} := \ketbra{i}{i}_A \otimes \ketbra{j}{j}_B\) achieves the same \(D_\alpha\). The \(d^2\) probabilities \(\mu_{\dutchcal{i}(p)}^L(i,j)\) can be grouped naturally into two classes. If \(i=j\), we compute 
    \begin{align}
        \sum_{k=1}^{d} \left(\frac{p}{d}+\frac{1-p}{d(d+1)}\right)^{\alpha} \left(\frac{q}{d}+\frac{1-q}{d(d+1)}\right)^{1-\alpha} &= \left( p + \frac{1-p}{d+1} \right)^\alpha \left( q + \frac{1-q}{d+1} \right)^{1-\alpha}
    \end{align}
    and else we have
    \begin{align}
        \sum_{k=1}^{d^2-d} \left(\frac{1-p}{d^2-1}\right)^{\alpha} \left(\frac{1-q}{d^2-1}\right)^{1-\alpha} = \frac{d}{d+1} (1-p)^\alpha (1-q)^{1-\alpha}  \, .
    \end{align}
    Combining these results, we get 
    \begin{equation}
        Q_\alpha^\text{LO}\left( \dutchcal{i}(p) \big\| \dutchcal{i}(q) \right) \geq Q_\alpha \left( \mu_{\dutchcal{i}(p)}^L \middle\| \mu_{\dutchcal{i}(q)}^L\right) = Q_\alpha \left( \mu_{\dutchcal{i}(p)}^I \middle\| \mu_{\dutchcal{i}(q)}^I\right) \, .
    \end{equation}
    This concludes the proof. 
\end{proof}

We are now in the position to show that the variational bounds on the locally-measured R\'enyi divergences are not tight for the orders \(\alpha\in(0,+\infty)\) on general isotropic states. We develop this argument in the following four propositions by providing explicit counterexamples for each range of orders \(\alpha\).


\subsection{Example for \texorpdfstring{\(\alpha\in(0,1/2)\)}{alpha in (0,1/2)}}

\begin{proposition}\label{Prop:Gap}
    Let \(\mathcal{M} := \{ \text{SEP}, \text{PPT} \} \), \(q \in [0,1]\) and \(\alpha\in(0,1/2)\). With definitions as above, we have 
    \begin{equation}
        V_\alpha^\mathcal{M}\left(\Phi, \dutchcal{i}(q) \right) = D_\alpha^\text{ALL}\left(\Phi \middle\| \dutchcal{i}(q) \right)
    \end{equation}
    Consequently, for \(q \in [0,1)\) the gap between the measured R\'enyi divergence and the variational bound is strict, i.e.\ 
    \begin{equation}
        D_\alpha^\mathcal{M} \left( \Phi \middle\| \dutchcal{i}(q) \right) < V_\alpha^\mathcal{M}\left(\Phi, \dutchcal{i}(q) \right) \, .
    \end{equation}
\end{proposition}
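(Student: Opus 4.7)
The plan is to sandwich $V_\alpha^\mathcal{M}(\Phi, \dutchcal{i}(q))$ between the value $D_\alpha^\text{ALL}(\Phi \| \dutchcal{i}(q)) = -\log q$ from both sides, using Eq.\ \eqref{Eq:Isotropic_All_Alphha} at $p=1$ to evaluate the target. The content of the first claim is that the separable cone is already rich enough to saturate the unrestricted variational optimum, so the bound $V_\alpha^\mathcal{M}$ cannot see the restriction to SEP/PPT even though $D_\alpha^\mathcal{M}$ does.

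For the upper bound, I would use that $C_\text{SEP} \subseteq C_\text{PPT} \subseteq C_\text{ALL} = \mathcal{P}$ renders the variational supremum monotone in the cone, together with $V_\alpha^\text{ALL} = D_\alpha^\text{ALL}$ from \cite[Lemma 3]{Berta1} (recalled in Sec.\ \ref{Sec:Discussion_Proof}). The nontrivial direction is the matching lower bound on the smaller cone $C_\text{SEP}$. For this I would exhibit the one-parameter isotropic family
\begin{equation}
\omega_\mu := \Phi + \mu(1_{AB} - \Phi), \qquad \mu > 0,
\end{equation}
each member of which is strictly positive and proportional to $\dutchcal{i}(p_\mu)$ with $p_\mu = 1/(1 + \mu(d^2-1))$. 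The isotropic separability threshold $p_\mu \leq 1/d$ places $\omega_\mu$ in $C_\text{SEP}$ once $\mu \geq 1/(d+1)$, which is automatic for large $\mu$. Plugging $\omega_\mu$ into the scaling-invariant multiplicative objective $\eta_\alpha$ of Eq.\ \eqref{Eq:Multiplicative_Expression}, the traces decouple along the $\Phi$ and $\Phi^\perp$ spectral subspaces and evaluate to $\tr[\Phi \omega_\mu] = 1$ and $\tr[\dutchcal{i}(q)\, \omega_\mu^{\alpha/(\alpha-1)}] = q + (1-q)\mu^{\alpha/(\alpha-1)}$.

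The main step is then the limit $\mu \to \infty$: since $\alpha \in (0, 1/2)$ makes the exponent $\alpha/(\alpha-1)$ strictly negative, the $\Phi^\perp$-contribution vanishes and a direct computation gives $\eta_\alpha((\Phi, \dutchcal{i}(q)); \omega_\mu) \to -\log q$, yielding the required lower bound on $V_\alpha^\text{SEP}$. The main conceptual obstacle is realizing that the full-support requirement $\omega > 0$ does not obstruct an asymptotic suppression of the $\Phi^\perp$-weight: the negative exponent inverts large eigenvalues, so sending $\mu \to \infty$ concentrates the relevant traces on $\Phi$ while keeping $\omega_\mu$ comfortably inside the separable (hence also the PPT) isotropic cone. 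Once the first equality is established, the strict-gap corollary is immediate from Proposition \ref{Prop:Max_versus_Isotropic}, which gives $D_\alpha^\mathcal{M}(\Phi \| \dutchcal{i}(q)) = \log\tfrac{d+1}{qd+1}$; the inequality $-\log q > \log\tfrac{d+1}{qd+1}$ reduces algebraically to $qd + 1 > q(d+1)$, i.e.\ to $q < 1$ (with the left-hand side interpreted as $+\infty$ at $q=0$).
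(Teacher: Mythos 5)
Your proof is correct, and it reaches the first equality by a genuinely different (and somewhat leaner) route than the paper. The paper fixes the normalization $\tr[\Phi\,\omega]=1$, then uses the twirling channel $\mathcal{I}$ together with operator convexity of $t\mapsto t^{\alpha/(\alpha-1)}$ and the operator Jensen inequality to argue that the infimum over all of $C_\mathcal{M}$ may be restricted to isotropic operators; this reduces $V_\alpha^\mathcal{M}$ to the scalar program $\inf\{q+(1-q)c_2^{\alpha/(\alpha-1)} : c_2\geq \tfrac{1}{d+1}\}$, whose value is $q$ because the exponent is negative. You instead sandwich: the upper bound $V_\alpha^\mathcal{M}\leq V_\alpha^{\text{ALL}}=D_\alpha^{\text{ALL}}$ comes for free from cone monotonicity of the supremum and the exactness of the unrestricted variational characterization from \cite[Lemma 3]{Berta1}, and for the lower bound you exhibit the explicit separable family $\omega_\mu=\Phi+\mu(1_{AB}-\Phi)$, which is precisely the paper's optimizing direction $c_1=1$, $c_2=\mu\to\infty$. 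Your computation of $\eta_\alpha((\Phi,\dutchcal{i}(q));\omega_\mu)=-\log\bigl(q+(1-q)\mu^{\alpha/(\alpha-1)}\bigr)\to-\log q$ is right, as is the separability check $p_\mu\leq 1/d$ for $\mu\geq 1/(d+1)$. What your route buys is the complete avoidance of the twirling-plus-operator-Jensen step; what the paper's route buys is an exact evaluation of the restricted program that does not lean on the $\mathcal{M}=\text{ALL}$ result and that transfers to the companion propositions in the appendix where the optimizer is interior and the answer depends on the constraint. The strict-gap conclusion via Proposition \ref{Prop:Max_versus_Isotropic} and the reduction of $-\log q>\log\tfrac{d+1}{qd+1}$ to $q<1$ matches the paper exactly.
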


\begin{proof}
    Since the objective function \(\eta_\alpha\) is scaling invariant in \(\omega\), we have the relation
    \begin{equation}
        \exp( - V_\alpha^\mathcal{M}\left(\Phi, \dutchcal{i}(q) \right) ) = \inf_{\omega > 0} \left\{ \tr[ \dutchcal{i}(q) \omega^\frac{\alpha}{\alpha-1}] \; \middle| \; \tr[\Phi \omega ] = 1 \wedge \omega \in C_\mathcal{M} \right\} \, .
    \end{equation}

    We show now that we can restrict w.l.o.g.\ to an optimization over isotropic operators \(\omega \in C_\mathcal{M}\). 
    For this, note that for any feasible \(\omega \in C_\mathcal{M}\), \(\mathcal{I}(\omega)\) is a feasible point as well since \(\tr[\Phi \omega ] = \tr[\Phi \mathcal{I}(\omega) ]\) and \(\mathcal{I}(\omega) \in C_\mathcal{M}\).\footnote{This follows from local unitary invariance and convexity of the respective cones.} Moreover, the objective value corresponding to \(\mathcal{I}(\omega)\) is always smaller due to 
    \begin{equation}
        \tr[ \dutchcal{i}(q) \omega^\frac{\alpha}{\alpha-1}] = \tr[ \dutchcal{i}(q) \mathcal{I}\left(\omega^\frac{\alpha}{\alpha-1}\right)] \geq \tr[ \dutchcal{i}(q) \mathcal{I}(\omega)^\frac{\alpha}{\alpha-1}] \, ,
    \end{equation}
    where the inequality step follows by operator convexity of \( f(t) := t^{\frac{\alpha}{\alpha-1}}\) for \(\alpha\in(0,1/2)\) \cite[Section V]{Bhatia} together with the Jensen inequality \cite{Farenick1}. 

    Since isotropic operators are separable if and only if they have PPT, the program we have to evaluate in each case is given by
    \begin{equation}
        \text{minimize} \; \; q + (1-q)c_2^{\frac{\alpha}{\alpha-1}} \; \; \text{s.t.} \; \; c_2 \geq \frac{1}{d+1}
    \end{equation}
    
    Notice that the objective function is monotone decreasing on \(c_2 \in (0,\infty)\) and thus its minimal value is given by \(q\). We then conclude with Eq.\ \eqref{Eq:Isotropic_All_Alphha} that \(V_\alpha^\mathcal{M}\left(\Phi, \dutchcal{i}(q) \right) = D_\alpha^\text{ALL}\left(\Phi, \dutchcal{i}(q) \right)\).

    Lastly, we have with Proposition \ref{Prop:Isotropic_Local} for \(q\in[0,1)\) that
    \begin{equation}
        D_\alpha^\mathcal{M} \left( \Phi \middle\| \dutchcal{i}(q) \right) = \log(\frac{d+1}{qd+1}) < - \log q = V_\alpha^\mathcal{M}\left(\Phi, \dutchcal{i}(q) \right).
    \end{equation}
\end{proof}


\subsection{Example for \texorpdfstring{\(\alpha\in [1/2,1)\)}{alpha in [1/2,1)} }

\begin{proposition}
    Let \(\mathcal{M} := \{ \text{SEP}, \text{PPT} \} \), \(p \in [0,1]\) and \(\alpha\in[1/2,1)\). With definitions as above, we have 
    \begin{equation}
        V_\alpha^\mathcal{M}\left(\dutchcal{i}(p), \Phi \right) = D_\alpha^\text{ALL}\left(\dutchcal{i}(p) \middle\| \Phi \right)
    \end{equation}
    Therefore, for \(p \in [0,1)\), the gap between the measured R\'enyi divergence and the variational bound is strict, i.e.\
    \begin{equation}
        D_\alpha^\mathcal{M} \left( \dutchcal{i}(p) \middle\| \Phi\right) < V_\alpha^\mathcal{M}\left(\dutchcal{i}(p), \Phi \right) \, .
    \end{equation}
\end{proposition}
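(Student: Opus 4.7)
The plan is to mirror the strategy of Proposition \ref{Prop:Gap}, with the roles of \(\Phi\) and \(\dutchcal{i}(\cdot)\) swapped and the operator-convexity step adapted to the exponent \((\alpha-1)/\alpha \in [-1,0)\) relevant for \(\alpha \in [1/2,1)\); in this range \(f(t) = t^{(\alpha-1)/\alpha}\) is operator convex \cite[Section V]{Bhatia}. Using the scaling invariance of \(\eta_\alpha\) in \(\omega\) (normalizing \(\tr[\Phi \omega] = 1\)) together with the fact that \(\alpha/(\alpha-1) < 0\), the supremum defining \(V_\alpha^\mathcal{M}(\dutchcal{i}(p), \Phi)\) becomes an infimum:
\begin{equation}
    \exp\left( \frac{\alpha-1}{\alpha}\, V_\alpha^\mathcal{M}\left(\dutchcal{i}(p), \Phi\right)\right) = \inf_{\omega > 0} \left\{ \tr[\dutchcal{i}(p)\, \omega^{(\alpha-1)/\alpha}] \; \middle| \; \tr[\Phi \omega] = 1 \wedge \omega \in C_\mathcal{M} \right\} \, .
\end{equation}

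Next, I would restrict without loss of generality to isotropic feasible \(\omega\) via the isotropic twirling channel \(\mathcal{I}\) from Proposition \ref{Prop:Max_versus_Orthogonal}. For any feasible \(\omega\), \(\mathcal{I}(\omega)\) is again feasible, since \(C_\mathcal{M}\) is convex and closed under local unitaries for \(\mathcal{M} \in \{\text{SEP}, \text{PPT}\}\), and the constraint is preserved because \(\Phi\) is isotropic. Operator convexity of \(f\) combined with the operator Jensen inequality \cite{Farenick1} yields \(\mathcal{I}(\omega)^{(\alpha-1)/\alpha} \leq \mathcal{I}(\omega^{(\alpha-1)/\alpha})\), and since \(\dutchcal{i}(p)\) is itself isotropic one obtains \(\tr[\dutchcal{i}(p)\, \mathcal{I}(\omega)^{(\alpha-1)/\alpha}] \leq \tr[\dutchcal{i}(p)\, \omega^{(\alpha-1)/\alpha}]\), so twirling only decreases the objective.

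Parametrizing the remaining isotropic \(\omega\) as \(\omega = c_1 \Phi + c_2 (1_{AB} - \Phi)\) with \(c_1, c_2 > 0\), the normalization forces \(c_1 = 1\), while on isotropic operators the SEP and PPT cones coincide and reduce to \(c_1 \leq (d+1) c_2\), i.e.\ \(c_2 \geq 1/(d+1)\). Using \(\omega^{(\alpha-1)/\alpha} = \Phi + c_2^{(\alpha-1)/\alpha}(1_{AB} - \Phi)\) and \(\tr[\Phi \dutchcal{i}(p)] = p\), the objective collapses to \(p + (1-p)\, c_2^{(\alpha-1)/\alpha}\), which is strictly decreasing in \(c_2\) on \((0,\infty)\) because \((\alpha-1)/\alpha < 0\); hence its infimum, attained in the limit \(c_2 \to \infty\), equals \(p\). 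This gives \(V_\alpha^\mathcal{M}(\dutchcal{i}(p), \Phi) = \frac{\alpha}{\alpha-1}\log p\), which coincides with \(D_\alpha^\text{ALL}(\dutchcal{i}(p) \| \Phi)\) obtained from Eq.\ \eqref{Eq:Isotropic_All_Alphha} at \(q = 1\).

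The strict gap for \(p \in [0,1)\) then follows from Proposition \ref{Prop:Isotropic_Local} at \(q = 1\), which gives \(D_\alpha^\mathcal{M}(\dutchcal{i}(p) \| \Phi) = \frac{\alpha}{\alpha-1}\log \frac{pd+1}{d+1}\); the elementary inequality \(\frac{pd+1}{d+1} > p\) (valid precisely for \(p < 1\)) combined with the negative prefactor \(\alpha/(\alpha-1)\) reverses the log ordering, yielding \(D_\alpha^\mathcal{M}(\dutchcal{i}(p) \| \Phi) < V_\alpha^\mathcal{M}(\dutchcal{i}(p), \Phi)\) (with \(V_\alpha^\mathcal{M} = +\infty\) and \(D_\alpha^\mathcal{M}\) finite at \(p = 0\)). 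The main subtlety is the sign bookkeeping when translating the supremum into an infimum on the exponential scale and confirming operator convexity of \(t^{(\alpha-1)/\alpha}\) throughout \(\alpha \in [1/2, 1)\) and not only at the endpoint; once these are in place, the remainder is a repetition of the twirling-and-parametrize pattern already used in Proposition \ref{Prop:Gap}.
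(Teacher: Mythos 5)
Your proposal is correct and follows essentially the same route as the paper's proof: scaling invariance to normalize \(\tr[\Phi\omega]=1\), isotropic twirling plus the operator Jensen inequality for the operator-convex map \(t\mapsto t^{(\alpha-1)/\alpha}\) to reduce to isotropic \(\omega\), the scalar program with infimum \(p\) attained as \(c_2\to\infty\), and Proposition \ref{Prop:Isotropic_Local} for the strict gap. Your sign bookkeeping is in fact slightly more careful than the paper's (which writes \(\exp(-V_\alpha^\mathcal{M})\) rather than \(\exp(\tfrac{\alpha-1}{\alpha}V_\alpha^\mathcal{M})\) and carries a stray exponent \(c_2^{\alpha/(\alpha-1)}\) in the reduced program), but these discrepancies do not affect the substance of either argument.
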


\begin{proof}
    By scaling invariance, the variational bound is given by 
    \begin{equation}
        \exp( - V_\alpha^\mathcal{M}\left(\dutchcal{i}(p), \Phi \right) ) = \inf_{\omega > 0} \left\{ \tr[ \dutchcal{i}(p) \omega^\frac{\alpha-1}{\alpha}] \; \middle| \; \tr[\Phi \omega ] = 1 \wedge \omega \in C_\mathcal{M} \right\} \, .
    \end{equation}

    By an analogous argument as in the proof for Proposition \ref{Prop:Gap}, we can restrict w.l.o.g.\ to isotropic operators \(\mathcal{I}(\omega)\) since for any \(\omega \in C_\mathcal{M}\), we have
    \begin{equation}
        \tr[ \dutchcal{i}(p) \omega^\frac{\alpha}{\alpha-1}] = \tr[ \dutchcal{i}(p) \mathcal{I}\left(\omega^\frac{\alpha-1}{\alpha}\right)] \geq \tr[ \dutchcal{i}(p) \mathcal{I}(\omega)^\frac{\alpha-1}{\alpha}] 
    \end{equation}
    due to the operator convexity of the function \(f(t) := t^{\frac{\alpha-1}{\alpha}}\) \cite[Section V]{Bhatia}. The resulting program is
    \begin{equation}
        \text{minimize} \; \; p+ (1-p)c_2^{\frac{\alpha}{\alpha-1}} \; \; \text{s.t.} \; \; c_2 \geq \frac{1}{d+1} \, .
    \end{equation}
    Its minimal value is \(p\) and comparing with Eq.\ \eqref{Eq:Isotropic_All_Alphha} we thus have \( V_\alpha^\mathcal{M}\left(\dutchcal{i}(p), \Phi \right) = D_\alpha^\text{ALL}\left(\dutchcal{i}(p), \Phi \right) \). 

    Lastly, by Proposition \ref{Prop:Isotropic_Local} we have for \(p<1\) that
    \begin{equation}
        D_\alpha^\mathcal{M}\left(\dutchcal{i}(p), \Phi \right) = \frac{\alpha}{\alpha-1} \log(\frac{pd+1}{d+1}) < \frac{\alpha}{\alpha-1} \log p = D_\alpha^\text{ALL}\left(\dutchcal{i}(p), \Phi \right) \, .
    \end{equation}
\end{proof}


\subsection{Example for \texorpdfstring{\(\alpha=1\)}{alpha equal 1}}

\begin{proposition}
    Let \(\mathcal{M} := \{ \text{SEP}, \text{PPT} \} \), \(p \in [0,1]\) and \(q \in [0,1]\). With definitions as above, we have 
    \begin{equation}
        V_1^\mathcal{M}\left(\dutchcal{i}(p), \dutchcal{i}(q) \right) = D_1^\text{ALL}\left(\dutchcal{i}(p) \middle\| \dutchcal{i}(q) \right)
    \end{equation}
    if the coefficients satisfy the constraint
    \begin{equation}\label{Eq:Constraint_1}
        q \geq \frac{p}{d+1-pd} \, .
    \end{equation}
    The gap between the measured R\'enyi divergence and the variational bound is then strict for \(p \in [0,1)\) and \(q\) as in Eq.\ \eqref{Eq:Constraint_1}, i.e.\ we have
    \begin{equation}
        D_1^\mathcal{M} \left( \dutchcal{i}(p) \middle\| \dutchcal{i}(q) \right) < V_1^\mathcal{M}\left(\dutchcal{i}(p), \dutchcal{i}(q) \right) \, .
    \end{equation}
\end{proposition}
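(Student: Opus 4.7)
The plan is to follow the template of the two preceding propositions: exploit isotropic symmetry to reduce the variational bound to a two-parameter concave maximization solvable in closed form, then identify the regime where the unconstrained optimizer lies in the PPT/SEP cone. Starting from Corollary \ref{Thm:Variational_Umegaki} in the scale-invariant form,
\begin{equation*}
V_1^\mathcal{M}(\dutchcal{i}(p),\dutchcal{i}(q)) = \sup_{\omega > 0,\,\omega \in C_\mathcal{M}}\Bigl\{ \tr[\dutchcal{i}(p)\log\omega] - \log\tr[\dutchcal{i}(q)\omega] \Bigr\},
\end{equation*}
I would first argue that the supremum is unchanged when restricted to isotropic $\omega$. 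Indeed, for any feasible $\omega$ the twirl $\mathcal{I}(\omega)$ lies in $C_\mathcal{M}$ since both cones are convex and invariant under local unitary conjugation, while $\tr[\dutchcal{i}(q)\mathcal{I}(\omega)] = \tr[\dutchcal{i}(q)\omega]$ by isotropy of $\dutchcal{i}(q)$. The map $\mathcal{I}$ is unital and completely positive, so the operator Jensen inequality applied to the operator-concave function $\log$ gives $\mathcal{I}(\log\omega) \leq \log\mathcal{I}(\omega)$, whence $\tr[\dutchcal{i}(p)\log\omega] \leq \tr[\dutchcal{i}(p)\log\mathcal{I}(\omega)]$.

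Parametrizing $\mathcal{I}(\omega) = c_1\Phi + c_2(1_{AB}-\Phi)$ with $c_1, c_2 > 0$ (the PPT condition on isotropic operators reads $c_1 \leq (d+1)c_2$, and SEP coincides with PPT in this class), the reduced problem becomes
\begin{equation*}
\sup_{c_1, c_2 > 0}\Bigl\{\,p\log c_1 + (1-p)\log c_2 - \log\bigl(qc_1 + (1-q)c_2\bigr)\ \Big|\ c_1 \leq (d+1)c_2\,\Bigr\}.
\end{equation*}
A standard Lagrange-multiplier calculation (equivalently, the variational characterization of the binary KL divergence) yields the unconstrained optimizer $c_1^\star = p/q,\ c_2^\star = (1-p)/(1-q)$ with optimal value $p\log(p/q) + (1-p)\log\bigl((1-p)/(1-q)\bigr) = D_1^\text{ALL}(\dutchcal{i}(p)\|\dutchcal{i}(q))$. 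The PPT side-constraint $c_1^\star \leq (d+1)c_2^\star$ rearranges to $p(1-q) \leq (d+1)q(1-p)$, i.e.\ $q \geq p/(d+1-pd)$, which is exactly the stated condition. Under this condition the unconstrained optimum is feasible, so $V_1^\mathcal{M} = D_1^\text{ALL}$.

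For the strict-inequality statement, Proposition \ref{Prop:Isotropic_Local} identifies $D_1^\mathcal{M}(\dutchcal{i}(p)\|\dutchcal{i}(q))$ as the classical KL divergence between the two-outcome distributions $\mu_{\dutchcal{i}(p)}^I$ and $\mu_{\dutchcal{i}(q)}^I$. Writing $I_1 = \Phi + \tfrac{1}{d+1}\Phi^\perp$ makes transparent that these are obtained from the ``sharp'' distributions $\{p,1-p\}$ and $\{q,1-q\}$ realized by the optimal unrestricted test $T_\Phi$ by applying a fixed nontrivial stochastic channel. The strict data-processing inequality for the KL divergence then gives $D_1^\mathcal{M} < D_1^\text{ALL} = V_1^\mathcal{M}$ whenever the two inputs are distinguishable, i.e.\ for all $p \in [0,1)$ with $p \neq q$ satisfying the stated constraint; strict inequality can alternatively be verified by direct substitution into the closed-form expressions. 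The main subtlety I expect is justifying the operator Jensen step cleanly while keeping the optimization on $\omega > 0$ so that $\log\omega$ is well-defined, and tracking boundary cases (for example, $p = 0$ corresponds to $c_1^\star \to 0$, where feasibility is preserved only in the limit); the rest is a two-variable KL-divergence computation directly parallel to the $\alpha \neq 1$ cases treated in the preceding propositions.
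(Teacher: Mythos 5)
Your proposal is correct and follows essentially the same route as the paper: twirl to isotropic operators via the operator Jensen inequality for the operator-concave $\log$, reduce to the two-parameter program whose unconstrained optimizer $c_1^\star = p/q$ reproduces the binary KL divergence, and check that the PPT constraint $c_1 \leq (d+1)c_2$ is exactly the stated condition on $(p,q)$. The only cosmetic difference is in the strictness step, where the paper uses joint convexity of the binary relative entropy to get the quantitative bound $D_1^\mathcal{M} \leq \tfrac{d}{d+1}V_1^\mathcal{M}$ while you invoke a strict data-processing inequality for the fixed stochastic map induced by $I_1$; these amount to the same convexity computation, and your explicit caveat $p \neq q$ (and attention to boundary cases) is, if anything, slightly more careful than the paper's phrasing.
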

\begin{proof} 
    The variational bound is given by
    \begin{equation}
         V_1^\mathcal{M}\left(\dutchcal{i}(p), \dutchcal{i}(q) \right) = \sup_{\omega>0} \bigg\{ \tr[\dutchcal{i}(p) \log \omega ] \; \bigg| \; \tr[ \dutchcal{i}(q) \omega] = 1 \wedge \omega \in C_\mathcal{M} \bigg\} \, .
    \end{equation}

    As in the proofs above, we can restrict to an optimization over isotropic operators since
    \begin{equation}
        \tr[ \dutchcal{i}(p) \log \omega] = \tr[ \dutchcal{i}(p) \mathcal{I}\left( \log \omega \right)] \leq \tr[ \dutchcal{i}(p) \log \mathcal{I}(\omega)]
    \end{equation}
    by the operator concavity of \(f(t):=\log t\) \cite[Section V]{Bhatia}. We then have to solve the following program 
    \begin{equation}
        \text{maximize} \; \; p \log c_1 + (1-p) \log c_2 \; \; \text{s.t.} \; \; q c_1 + (1-q) c_2 = 1 \wedge 0 < c_1 \leq (d+1)c_2 \, .
    \end{equation}
    Its easy to see that if \(q=1\), the optimal value is given by \(+\infty\). Assuming \(q\not=1\), we can rewrite the constraints into
    \begin{align}
        c_2 &= \frac{1 - c_1 q}{1-q} && \text{and} & 0 < c_1 \leq \frac{d+1}{qd+1} \, .
    \end{align}
    The objective function is then 
    \begin{equation}
        p \log c_1 + (1-p) \log (1-c_1 q) - (1-p) \log (1-q) \, .
    \end{equation}
    By the derivative test, we obtain the optimal point of the unconstrained problem as \(c_1^\star = \frac{p}{q}\).
    This is also the optimal point of the constrained problem if 
    \begin{align}
        &\frac{p}{q} \leq \frac{d+1}{qd+1} && \text{ which is satisfied if} & q \geq \frac{p}{d+1 -pd} \, .
    \end{align}
    Comparing with Eq.\ \eqref{Eq:Isotropic_All_1}, we obtain the claimed equality \( V_1^\mathcal{M}\left(\dutchcal{i}(p), \dutchcal{i}(q) \right) = D_1^\text{ALL}\left(\dutchcal{i}(p) \middle\| \dutchcal{i}(q) \right)\). 

    Finally, note that with \(D(x\|y) := x(\log x -\log y) \) for \(x,y\in\mathbbm{R}\), we have 
    \begin{equation}
        D\left( p + \frac{1-p}{d+1} \middle\| q + \frac{1-q}{d+1} \right) = D \left( \frac{1}{d+1} + p \frac{d}{d+1} \middle\| \frac{1}{d+1} + q \frac{d}{d+1} \right) \leq \frac{d}{d+1} D(p\|q)
    \end{equation}
    by the joint convexity of \(D(x\|y)\) in the pair \((x,y)\) and \(D(1\|1) = 0\). This shows that 
    \begin{equation}
        D_1^\mathcal{M} \left( \dutchcal{i}(p) \middle\| \dutchcal{i}(q) \right) \leq \frac{d}{d+1} V_1^\mathcal{M}\left(\dutchcal{i}(p), \dutchcal{i}(q) \right) < V_1^\mathcal{M}\left(\dutchcal{i}(p), \dutchcal{i}(q) \right) \, .
    \end{equation}
\end{proof}


\subsection{Example for \texorpdfstring{\(\alpha\in(1,+\infty)\)}{alpha in (1,infinity)}}

\begin{proposition}
    Let \(\mathcal{M} := \{ \text{SEP}, \text{PPT} \} \), \(p \in [0,1]\), \(q \in [0,1]\) and \(\alpha\in(1,+\infty)\). With definitions as above, we have 
    \begin{equation}
        V_\alpha^\mathcal{M}\left(\dutchcal{i}(p), \dutchcal{i}(q) \right) = D_\alpha^\text{ALL}\left(\dutchcal{i}(p), \dutchcal{i}(q) \right)
    \end{equation}
    if the coefficients satisfy
    \begin{equation}
        q \geq \frac{p }{ \left(d+1\right)^\frac{1}{\alpha} - p \left( \left(d+1\right)^\frac{1}{\alpha} - 1 \right)}
    \end{equation}

    So for \(p \in (0,1)\), we have a finite gap 
    \begin{equation}
        D_\alpha^\mathcal{M} \left( \dutchcal{i}(p) \middle\| \dutchcal{i}(q) \right) < V_\alpha^\mathcal{M}\left(\dutchcal{i}(p), \dutchcal{i}(q) \right) \, .
    \end{equation}
\end{proposition}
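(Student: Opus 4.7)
The proof follows the template of the three preceding propositions in this appendix. The plan is first to exploit the scaling invariance of $\eta_\alpha$ in $\omega$ to normalize $\tr[\dutchcal{i}(q)\omega]=1$, recasting
\[
\exp\bigl((\alpha-1)V_\alpha^{\mathcal{M}}(\dutchcal{i}(p),\dutchcal{i}(q))\bigr)=\sup_{\omega>0}\bigg\{\tr\bigl[\dutchcal{i}(p)\omega^{\frac{\alpha-1}{\alpha}}\bigr]^\alpha \;\bigg|\; \tr[\dutchcal{i}(q)\omega]=1,\; \omega\in C_\mathcal{M}\bigg\}.
\]
Next, I would reduce to isotropic $\omega$ via the twirl $\mathcal{I}$: both cones $C_{\text{SEP}}$ and $C_{\text{PPT}}$ are $\mathcal{I}$-invariant (by local-unitary invariance and convexity), and $\dutchcal{i}(p),\dutchcal{i}(q)$ are $\mathcal{I}$-fixed, while for $\alpha>1$ the exponent $(\alpha-1)/\alpha\in(0,1)$ makes $t^{(\alpha-1)/\alpha}$ operator concave (cf.\ \cite[Section V]{Bhatia}); hence operator Jensen yields $\mathcal{I}(\omega^{(\alpha-1)/\alpha})\leq\mathcal{I}(\omega)^{(\alpha-1)/\alpha}$, so that tracing against $\dutchcal{i}(p)\geq 0$ shows that replacing $\omega$ by $\mathcal{I}(\omega)$ only enlarges the objective.

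Parametrizing isotropic PPT operators as $\omega=c_1\Phi+c_2(1_{AB}-\Phi)$ with $c_1,c_2>0$ and $c_1\leq(d+1)c_2$, and using $\omega^r=c_1^r\Phi+c_2^r(1_{AB}-\Phi)$ (functional calculus on orthogonal projectors summing to $1_{AB}$), the problem collapses to the scalar program
\[
\max_{c_1,c_2>0}\ p\,c_1^{(\alpha-1)/\alpha}+(1-p)\,c_2^{(\alpha-1)/\alpha}\quad\text{s.t.}\quad qc_1+(1-q)c_2=1,\; c_1\leq(d+1)c_2.
\]
Dropping the PPT side-constraint and applying Lagrange multipliers, the stationarity equations yield $c_i^\star\propto(P(i)/Q(i))^\alpha$ for $P=(p,1-p)$, $Q=(q,1-q)$, so that $c_1^\star/c_2^\star=\bigl(p(1-q)/[q(1-p)]\bigr)^\alpha$, and substituting back reproduces $\exp((\alpha-1)D_\alpha^{\text{ALL}}(\dutchcal{i}(p)\|\dutchcal{i}(q)))=p^\alpha q^{1-\alpha}+(1-p)^\alpha(1-q)^{1-\alpha}$ via Eq.\ \eqref{Eq:Isotropic_All_Alphha}. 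The PPT feasibility $c_1^\star\leq(d+1)c_2^\star$ becomes $\bigl(p(1-q)/[q(1-p)]\bigr)^\alpha\leq d+1$, and straightforward algebraic rearrangement isolates $q$ to give exactly the stated hypothesis $q\geq p/\bigl((d+1)^{1/\alpha}-p((d+1)^{1/\alpha}-1)\bigr)$.

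Finally, the strict gap $D_\alpha^{\mathcal{M}}<V_\alpha^{\mathcal{M}}$ for $p\in(0,1)$ with $q$ in the stated range follows by direct comparison of closed-form expressions: Proposition \ref{Prop:Isotropic_Local} shows that $Q_\alpha^{\mathcal{M}}$ equals the classical $Q_\alpha$ between the binary distributions $(p',1-p')$ and $(q',1-q')$ with $p'=(pd+1)/(d+1)$, $q'=(qd+1)/(d+1)$, which is the classical coarse-graining of the eigenbasis distributions $(p,1-p),(q,1-q)$ that relabels outcome~$2$ as outcome~$1$ with probability $1/(d+1)$; classical DPI is strict on this non-sufficient coarse-graining whenever $p\in(0,1)$ and $p\neq q$, giving $Q_\alpha^{\mathcal{M}}<Q_\alpha^{\text{ALL}}$ and hence the desired strict inequality on $D_\alpha$. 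The main obstacle is the bookkeeping in the rearrangement converting the clean condition $c_1^\star/c_2^\star\leq d+1$ into the asymmetric-looking bound on $q$ stated in the proposition; the rest is a direct parallel to the $\alpha=1$ case already worked out above.
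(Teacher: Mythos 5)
Your proposal is correct and follows essentially the same route as the paper: normalize via scaling invariance, twirl to isotropic \(\omega\) using operator concavity of \(t^{(\alpha-1)/\alpha}\), solve the resulting two-variable program (your Lagrange-multiplier stationarity condition \(c_i^\star\propto (P(i)/Q(i))^\alpha\) is the paper's derivative test in different notation), and verify that the unconstrained optimizer satisfies \(c_1^\star\le(d+1)c_2^\star\), which rearranges to exactly the stated bound on \(q\). The only divergence is the final strict-gap step, where you invoke strictness of the classical DPI for the garbling \((p,1-p)\mapsto(p',1-p')\) while the paper writes out the equivalent strict joint-convexity estimate for \(x^\alpha y^{1-\alpha}\) directly; both arguments (yours explicitly, the paper's implicitly) need the additional hypotheses \(p\neq q\) and \(q<1\) that the proposition statement omits.
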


\begin{proof}

The variational bound is given by
\begin{equation}
    \exp( - V_\alpha^\mathcal{M}\left(\dutchcal{i}(p), \dutchcal{i}(q) \right) ) = \sup_{\omega>0} \left\{ \tr[\dutchcal{i}(p) \omega^{\frac{\alpha-1}{\alpha}} ] \; \middle| \; \tr[ \dutchcal{i}(q) \omega] = 1 \wedge \omega \in C_\mathcal{M} \right\} 
\end{equation}

Using the operator concavity of \(f(t):=t^{\frac{\alpha-1}{\alpha}}\) for \(\alpha\in(1,\infty)\) \cite[Section V]{Bhatia}, we can conclude as above that the we can restrict w.l.o.g.\ to an optimization over isotropic operators. We then have to solve the following program
\begin{equation}
    \text{maximize} \; \; p c_1^\frac{\alpha-1}{\alpha} + (1-p) c_2^\frac{\alpha-1}{\alpha} \; \; \text{s.t.} \; \; q c_1 + (1-q) c_2 = 1 \wedge 0 < c_1 \leq (d+1)c_2 \, .
\end{equation}
If \(q=1\) it is easy to verify that the optimal value is \(+\infty\). Otherwise, we can recast the constraints into 
\begin{align}
    c_2 &= \frac{1 - c_1 q}{1-q} && \text{and} & 0 < c_1 \leq \frac{d+1}{qd+1}
\end{align}
and the objective function then reads
\begin{equation}
    p c_1^\frac{\alpha-1}{\alpha} + (1-p) (1-q)^\frac{1-\alpha}{\alpha} (1-c_1 q)^\frac{\alpha-1}{\alpha} \, .
\end{equation}

By the derivative test, the optimal point of the unconstrained problem is given by
\begin{equation}
    c_1^\star = \frac{1}{q + \left(\frac{q}{p} \right)^\alpha (1-p)^\alpha (1-q)^{1-\alpha}}
\end{equation}
which is the optimal point of the constrained problem if the coefficients satisfy
\begin{align}
    \left(\frac{q}{p} \right)^\alpha (1-p)^\alpha (1-q)^{1-\alpha} \geq \frac{1-q}{d+1} \, ,
\end{align}
which we can reformulate as 
\begin{equation}
    q \geq \frac{p }{ \left(d+1\right)^\frac{1}{\alpha} - p \left( \left(d+1\right)^\frac{1}{\alpha} - 1 \right)} \, .
\end{equation}

Under this constraint, we get the claimed equality \( V_\alpha^\mathcal{M}\left(\dutchcal{i}(p), \dutchcal{i}(q) \right) = D_\alpha^\text{ALL}\left(\dutchcal{i}(p) \middle\| \dutchcal{i}(q) \right)\) by comparing with Eq.\ \eqref{Eq:Isotropic_All_Alphha}.

Lastly, we have that
    \begin{align}
        \left( p + \frac{1-p}{d+1} \right)^\alpha \left( q + \frac{1-q}{d+1} \right)^{1-\alpha} &= \left( \frac{1}{d+1} + p \frac{d}{d+1} \right)^\alpha \left( \frac{1}{d+1} + q \frac{d}{d+1} \right)^{1-\alpha} \\
        & \leq \frac{1}{d+1} + \frac{d}{d+1} p^\alpha q^{1-\alpha}
    \end{align}
    by the joint convexity of \(x^\alpha y^{1-\alpha}\) in the pair \((x,y)\). The inequality is strict if \(p \not = q\). Finally, using the quasi-convexity of the function \(f(t):=\frac{1}{\alpha-1} \log t\) we get  
    \begin{equation}
        D_\alpha^\mathcal{M} \left( \dutchcal{i}(p) \middle\| \dutchcal{i}(q) \right) < V_\alpha^\mathcal{M}\left(\dutchcal{i}(p), \dutchcal{i}(q) \right) \, .
    \end{equation}
\end{proof}


\section{Werner States}\label{App:Werner_States}

The second family of highly symmetric states we consider are the Werner states \cite{Werner}. These are defined by their invariance under conjugation with unitaries of the form \(U\otimes U\). As for isotropic states, a single parameter \(p\in[0,1]\) suffices to characterize them completely. A general Werner states is given by
\begin{equation}\label{Def:Werner_States}
    \dutchcal{w}(p) := p \Theta + (1-p) \Theta^\perp \, ,
\end{equation}
where \(\Theta\) is the completely symmetric and \(\Theta^\perp\) the completely anti-symmetric state. These states are defined as
\begin{align}
    \Theta &:= \frac{1_{AB} + F}{d(d+1)} & \text{and} && \Theta^\perp &:= \frac{1_{AB} - F}{d(d-1)} \, 
\end{align}
with the swap operator \(F := \sum_{i,j} \ketbra{i}{j}_A \otimes \ketbra{j}{i}_B \). It is well-known that the Werner states are separable and have PPT for \(p \in [1/2,1]\) and otherwise are entangled \cite{Werner}. This class of states was studied previously in the context of local distinguishability e.g.\ in \cite{Matthews, Cheng, Matthews2}.


\subsection{Local Distinguishability}

We start with the case \(\rho=\Theta^\perp\) and \(\sigma=\Theta\). Similar to testing the extremal isotropic states against each other, these are perfectly distinguishable if we allow all measurements, i.e.\ \(D_\alpha^\text{ALL} \left( \Theta^\perp \middle\| \Theta \right) = +\infty \) for all orders of \(\alpha > 0\). Their distinguishability under local measurements, however, is strongly reduced.
\begin{proposition}\label{Prop:Antisymmetric_versus_Symmetric}
    Let \( \mathcal{M} = \{ \text{LO}, \text{LOCC}_1, \text{LOCC}, \text{SEP}, \text{PPT} \} \) and \(\alpha > 0 \). With definitions as above, we have 
    \begin{equation}
        D_\alpha^\mathcal{M} \left(\Theta^\perp \middle\| \Theta \right) = \log(\frac{d+1}{d-1}) \, .
    \end{equation} 
\end{proposition}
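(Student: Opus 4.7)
The plan is to mirror the two-pronged argument from Proposition \ref{Prop:Max_versus_Orthogonal} for the analogous isotropic extremal states: I will establish a lower bound via an explicit \(\text{LO}\) measurement and match it from above by bounding \(D_{\max}^\text{PPT}\) using the primal variational characterization of Proposition \ref{Lem:Exact_Variational_Max}. Together with monotonicity of \(D_\alpha^\mathcal{M}\) in \(\alpha\) (Lemma \ref{Lem:General_Properties}) and the inclusion chain \(\text{LO} \subseteq \text{LOCC}_1 \subseteq \text{LOCC} \subseteq \text{SEP} \subseteq \text{PPT}\), this single pair of matching bounds will cover all five measurement classes and all orders \(\alpha > 0\) simultaneously.

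For the lower bound, I would pick the local basis measurement \(L^{(i,j)} := \ketbra{i}{i}_A \otimes \ketbra{j}{j}_B \in \text{LO}\). Since \(\langle i,j|F|i,j\rangle = \delta_{ij}\), a direct computation yields
\begin{align}
\mu_\Theta^L(i,j) &= \frac{1+\delta_{ij}}{d(d+1)} & \text{and} && \mu_{\Theta^\perp}^L(i,j) &= \frac{1-\delta_{ij}}{d(d-1)} \, .
\end{align}
The \(d\) diagonal terms contribute zero to \(Q_\alpha(\mu_{\Theta^\perp}^L \| \mu_\Theta^L)\), while the \(d(d-1)\) off-diagonal terms each contribute \((d(d-1))^{-\alpha}(d(d+1))^{\alpha-1}\), summing to \(((d-1)/(d+1))^{1-\alpha}\). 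Hence \(D_\alpha(\mu_{\Theta^\perp}^L \| \mu_\Theta^L) = \log((d+1)/(d-1))\) independently of \(\alpha\).

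For the matching upper bound, Proposition \ref{Lem:Exact_Variational_Max} lets me write
\begin{equation}
\exp\big(D_{\max}^\text{PPT}(\Theta^\perp \| \Theta)\big) = \sup_{\omega > 0} \bigg\{ \tr[\Theta^\perp \omega] \;\bigg|\; \tr[\Theta \omega] = 1 \wedge \omega^\Gamma \geq 0 \bigg\} \, .
\end{equation}
I would then exploit the shared \(U \otimes U\) symmetry of \(\Theta\) and \(\Theta^\perp\) through the Werner twirling \(\mathcal{W}(\omega) := \int_{\mathcal{U}(d)} (U \otimes U) \omega (U \otimes U)^\dagger \dd\mu_H(U)\); this leaves the objective \(\tr[\Theta^\perp \omega]\) and the equality constraint \(\tr[\Theta \omega] = 1\) invariant and, because partial transposition intertwines \(U \otimes U\)-conjugation with \(U \otimes \bar{U}\)-conjugation, preserves the PPT property. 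The optimization therefore reduces without loss of generality to Werner operators \(\omega = a \Theta + b \Theta^\perp\) with \(a, b \geq 0\). Using \(F^\Gamma = d \Phi\) to expand \(\omega^\Gamma\), the PPT constraint will collapse to the scalar inequality \(a \geq b\); the normalization will fix \(a = d(d+1)/2\) via \(\tr[\Theta^2] = 2/(d(d+1))\); and maximizing \(\tr[\Theta^\perp \omega] = 2b/(d(d-1))\) over \(0 \leq b \leq a\) will produce the target value \((d+1)/(d-1)\).

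The only technical step I expect to require care is reducing the PPT constraint to \(a \geq b\), which amounts to diagonalizing \(\omega^\Gamma\) on the rank-one support of \(\Phi\) against its orthogonal complement using \(F^\Gamma = d\Phi\). Beyond that, the argument is structurally identical to the isotropic extremal case, with \(\{\Phi, \Phi^\perp\}\) replaced by \(\{\Theta^\perp, \Theta\}\) and isotropic twirling replaced by Werner twirling.
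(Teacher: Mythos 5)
Your proposal is correct and follows essentially the same route as the paper: the local basis measurement for the lower bound, then monotonicity in \(\alpha\) plus the twirled primal PPT program for the matching upper bound on \(D_{\max}^{\text{PPT}}\), with only a cosmetic difference in how the Werner operator is parametrized (your reduction of the PPT constraint to \(a \geq b\) checks out). Incidentally, your value \(\mu_{\Theta^\perp}^L(i,j) = \frac{1-\delta_{ij}}{d(d-1)}\) is the correct one; the paper's displayed denominator \(d(d+1)\) there appears to be a typo.
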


\begin{proof}
We prove the claim with the proof strategy used for Proposition \ref{Prop:Max_versus_Orthogonal}.

The lower bound on \(D_\alpha^\text{LO}\) using the local basis measurement \(L^{(i,j)} = \ketbra{i}{i}_A \otimes \ketbra{j}{j}_B \) yields the probability distributions
\begin{align}
    \mu_\Theta^L &= \frac{1+\delta_{i,j}}{d(d+1)} & \text{and} && \mu_{\Theta^\perp}^L &= \frac{1-\delta_{i,j}}{d(d+1)}
\end{align}
with which we immediately get the desired lower bound. 

For the upper bound, we focus on the case \(\alpha = \infty\) due to the monotonicity in \(\alpha\). Here, we have to evaluate the optimization problem
\begin{align}
     D_{\max}^\text{PPT}\left(\Theta^\perp \middle\| \Theta \right) = \sup_{\omega>0} \left\{ \log \tr[\Theta^\perp \omega] \; \middle| \; \tr[\Theta \omega] = 1 \wedge \omega^\Gamma \geq 0 \right\}  \, .
\end{align} 
As before, we can take advantage of the symmetry inherent to the Wener states. For this, we use that \( \tr[\dutchcal{w}(p) \omega ] = \tr[ \dutchcal{w}(p) \mathcal{T}(\omega)] \) holds for any \(\omega \in \mathcal{P}\), where \(\mathcal{T}\) denotes the twirling channel. Its action on a linear operator \(X_{AB}\) is given by
\begin{equation}
    \mathcal{T}(X_{AB}) = \int_{\mathcal{U}(d)} \left(U \otimes U\right) X_{AB} \left(U \otimes U\right)^\dagger \dd{\mu_H(U)} \, .
\end{equation}

Since the channel \(\mathcal{T}\) preserves the PPT property of \(\omega\), we can thus restrict the optimization w.l.o.g.\ to PPT Werner operators. These are characterized completely by two scalar coefficients by
\begin{equation}
    \mathcal{T}(\omega) = c_1 \frac{1_{AB} + F}{2} + c_2 \frac{1_{AB} - F}{2} \, .
\end{equation}
The PPT condition translates into the constraint \(c_2(d-1) \leq c_1(d+1)\). As a result of the twirling technique, we are left to evaluate the following optimization problem
\begin{equation}
    \text{maximize} \; \; \log c_2 \; \; \text{s.t.} \; c_1 = 1 \wedge 0 < c_2 \leq \frac{d+1}{d-1} c_1 \, .
\end{equation}
Its solution is \(\log(\frac{d+1}{d-1}) \) which completes the proof.
\end{proof}

Notice that when testing the extremal isotropic operators, the local measurement sets were able to perform better for larger system dimensions. This is not the case when testing the extremal Werner states against each other. Here, we can see that as \(d\to\infty\) the distinguishing power, as quantified by the measured R\'enyi divergence, decreases to zero.

We can generalize our proof to the case \(\rho=\Theta^\perp\) and \(\sigma=\dutchcal{w}(q)\).
\begin{proposition}\label{Prop:Anti_versus_Werner}
    Let \( \mathcal{M} = \{ \text{LO}, \text{LOCC}_1, \text{LOCC}, \text{SEP}, \text{PPT} \} \), \(q\in[0,1]\) and \(\alpha > 0 \). With definitions as above, we have 
    \begin{equation}
        D_\alpha^\mathcal{M} \left(\Theta^\perp \middle\| \dutchcal{w}(q) \right) = \log(\frac{d+1}{d+1 - 2q}) \, .
    \end{equation} 
\end{proposition}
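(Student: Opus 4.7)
The plan is to mirror the proof of Proposition \ref{Prop:Antisymmetric_versus_Symmetric}: a matching lower bound from a local basis measurement on the LO side, a matching upper bound from the dual PPT max-divergence program, with monotonicity in $\alpha$ (Property 2 of Lemma \ref{Lem:General_Properties}) bridging the two endpoints $\mathcal{M} = \text{LO}$ and $\mathcal{M} = \text{PPT}$ and transporting the bound to every order.

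For the lower bound, I would plug in the local product POVM $L^{(i,j)} := \ketbra{i}{i}_A \otimes \ketbra{j}{j}_B \in \text{LO}$ and compute the two induced distributions. Since $\Theta^\perp \propto 1_{AB} - F$ vanishes on the diagonal sector $\{\ket{ii}\}$, the measure $\mu_{\Theta^\perp}^L$ is supported only on the $d(d-1)$ off-diagonal pairs with common value $1/(d(d-1))$, while $\mu_{\dutchcal{w}(q)}^L(i,j) = (d+1-2q)/(d(d+1)(d-1))$ on the same support. A direct calculation telescopes the classical $Q_\alpha$ into $\bigl((d+1-2q)/(d+1)\bigr)^{1-\alpha}$, which immediately yields $D_\alpha \geq \log\frac{d+1}{d+1-2q}$ on $(0,1)\cup(1,\infty)$; the endpoints $\alpha=1,\infty$ then follow via Property 3 of Lemma \ref{Lem:General_Properties}.

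For the upper bound, monotonicity in $\alpha$ together with the inclusion $\mathcal{M}\subseteq\text{PPT}$ reduce the task to controlling $D_{\max}^{\text{PPT}}(\Theta^\perp\|\dutchcal{w}(q))$. I would invoke Corollary \ref{Lem:Dual_PPT_Max} and exploit the $U\otimes U$-invariance of the two states: since $\tr[\dutchcal{w}(q)\omega] = \tr[\dutchcal{w}(q)\mathcal{T}(\omega)]$ and $\tr[\Theta^\perp\omega] = \tr[\Theta^\perp\mathcal{T}(\omega)]$, while the PPT cone is preserved under Werner twirling $\mathcal{T}$ by local unitary invariance and convexity, the primal optimization can be restricted to Werner operators $\omega = c_1 \tfrac{1_{AB}+F}{2} + c_2 \tfrac{1_{AB}-F}{2}$ with PPT condition $(d-1)c_2 \leq (d+1)c_1$. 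Using $F^2 = 1_{AB}$ one checks $\tr[\Theta\omega] = c_1$ and $\tr[\Theta^\perp\omega] = c_2$, reducing the program to
\begin{equation*}
\text{maximize } c_2 \quad \text{s.t.} \quad qc_1 + (1-q)c_2 = 1, \quad c_1, c_2 > 0, \quad (d-1)c_2 \leq (d+1)c_1.
\end{equation*}
Eliminating $c_1$ yields $c_2[d+1-2q] \leq d+1$, whose optimum $c_2^\star = (d+1)/(d+1-2q)$ supplies the matching upper bound.

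The main obstacle is essentially bookkeeping: carefully verifying the trace identities under the $1_{AB}\pm F$ decomposition and ensuring that the twirling reduction respects both the equality constraint and the PPT constraint simultaneously. The cases $q=0$ (where $\dutchcal{w}(q) = \Theta^\perp$ and both sides vanish) and $q=1$ (which recovers Proposition \ref{Prop:Antisymmetric_versus_Symmetric}) provide useful sanity checks.
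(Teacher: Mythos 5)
Your proposal is correct and follows essentially the same route as the paper: the local basis measurement $L$ gives the matching lower bound at every order, and the upper bound comes from twirling the (exact) primal PPT program for $D_{\max}^{\text{PPT}}$ down to the scalar problem in $(c_1,c_2)$, with monotonicity in $\alpha$ and the inclusion chain closing the gap. All the computations you sketch (the off-diagonal support of $\mu_{\Theta^\perp}^L$, the value $(d+1-2q)/(d(d^2-1))$, the PPT constraint $(d-1)c_2\leq(d+1)c_1$, and the elimination giving $c_2\leq(d+1)/(d+1-2q)$) check out.
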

\begin{proof} 
    For \(q=0\), the claim holds trivially so we assume in the following \(q\not=0\). Then, we can show the lower bound by applying the same arguments as in the proof of Proposition \ref{Prop:Antisymmetric_versus_Symmetric}. For the upper bound, we simplify the variational upper bound
    \begin{align}
         D_{\max}^\text{PPT}\left(\Theta^\perp \middle\| \dutchcal{w}(q) \right) = \sup_{\omega>0} \left\{ \log \tr[\Theta^\perp \omega] \; \middle| \; \tr[\dutchcal{w}(q) \omega] = 1 \wedge \omega^\Gamma \geq 0 \right\}  
    \end{align} 
    via the twirling technique. This enables us to rewrite this program as follows
    \begin{equation}\label{Eq:Werner_Program}
        \text{maximize} \; \; \log c_2 \; \; \text{s.t.} \; \; c_1 q + (1-q) c_2 = 1 \wedge 0 < c_2 \leq \frac{d+1}{d-1} c_1 \, .
    \end{equation}
    The constraints can be rephrased into
    \begin{equation}
         0 < c_2 \leq \frac{d+1}{d+1 - 2q} \, ,
    \end{equation}
    which shows that Eq.\ \eqref{Eq:Werner_Program} gives the matching upper bound.  
\end{proof}


\subsection{Additivity on I.I.D.\ States}

As we want to solve the hypothesis testing problem, we additionally need to consider the i.i.d.\ behaviour for these states. For this, we may take advantage of the dual characterization of \(D_{\max}^\text{PPT}\). The locally-measured R\'enyi divergences are additive on tensor powers of \(\rho=\Theta^\perp\) and \(\sigma=\Theta\). 
\begin{proposition}\label{Prop:Additivity_Antisymmetric_versus_Symmetric}
    Let \( \mathcal{M} = \{ \text{LO}, \text{LOCC}_1, \text{LOCC}, \text{SEP}, \text{PPT} \} \) and \(\alpha > 0 \). With definitions as above, we have 
        \begin{equation}
            D_\alpha^\mathcal{M} \left( \left(\Theta^\perp\right)^{\otimes n}\middle\|\Theta^{\otimes n} \right)  = n \log(\frac{d+1}{d-1}) \, .
        \end{equation}
\end{proposition}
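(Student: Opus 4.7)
The argument directly mirrors Proposition \ref{Prop:Additivty_Max_versus_Orthogonal}: the lower bound comes from tensoring the single-copy local basis measurement employed in Proposition \ref{Prop:Antisymmetric_versus_Symmetric}, while the matching upper bound on $D_{\max}^{\text{PPT}}$ is supplied by exhibiting an explicit dual-feasible point in the SDP of Corollary \ref{Lem:Dual_PPT_Max}. Monotonicity in $\alpha$ (Property 2 of Lemma \ref{Lem:General_Properties}) together with the inclusion chain $\text{LO} \subseteq \cdots \subseteq \text{PPT}$ then forces the whole family of locally-measured R\'enyi divergences to coincide on this pair of states.

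For the lower bound, consider the product measurement $L^{\otimes n}$ with $L^{(i,j)} = \ketbra{i}{i}_A \otimes \ketbra{j}{j}_B \in \text{LO}_n$. The single-copy calculation from Proposition \ref{Prop:Antisymmetric_versus_Symmetric} yields $D_\alpha(\mu_{\Theta^\perp}^L \| \mu_\Theta^L) = \log\tfrac{d+1}{d-1}$ for every $\alpha > 0$, so additivity of the classical R\'enyi divergence on i.i.d.\ product measures immediately gives $D_\alpha^{\text{LO}_n}((\Theta^\perp)^{\otimes n} \| \Theta^{\otimes n}) \geq n \log\tfrac{d+1}{d-1}$.

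For the upper bound it suffices to control $D_{\max}^{\text{PPT}}$. The ansatz I propose is
\begin{equation*}
\lambda = \left(\tfrac{d+1}{d-1}\right)^n, \qquad X = 0, \qquad Y = \tfrac{1}{(d(d-1))^n}\Bigl[(1_{AB} + d\,\Phi)^{\otimes n} - (1_{AB} - d\,\Phi)^{\otimes n}\Bigr].
\end{equation*}
The side constraint $\lambda \Theta^{\otimes n} - (\Theta^\perp)^{\otimes n} = X + Y^\Gamma$ is engineered by the identity $F^\Gamma = d\,\Phi$, which gives $(1_{AB} \pm d\,\Phi)^\Gamma = 1_{AB} \pm F$ and hence, after absorbing the normalizations of $\Theta$ and $\Theta^\perp$,
\begin{equation*}
Y^\Gamma = \tfrac{(1_{AB} + F)^{\otimes n} - (1_{AB} - F)^{\otimes n}}{(d(d-1))^n} = \left(\tfrac{d+1}{d-1}\right)^n \Theta^{\otimes n} - (\Theta^\perp)^{\otimes n}.
\end{equation*}
Positivity of $Y$ follows by noting that $1_{AB} \pm d\,\Phi$ commute and are simultaneously diagonal in the eigenbasis of $\Phi$, with eigenvalues $(1 \pm d)$ on the one-dimensional support and $1$ on the kernel. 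For a joint eigenvector in which $k \in \{0,\dots,n\}$ of the factors lie in the support of $\Phi$, the corresponding eigenvalue of $Y$ reads $(d(d-1))^{-n}[(1+d)^k - (1-d)^k]$, which vanishes at $k=0$ and is strictly positive for $k \geq 1$ since $(1+d)^k > |1-d|^k \geq (1-d)^k$ whenever $d \geq 2$. Weak duality of the program in Corollary \ref{Lem:Dual_PPT_Max} then gives the matching bound $D_{\max}^{\text{PPT}}((\Theta^\perp)^{\otimes n} \| \Theta^{\otimes n}) \leq \log \lambda = n \log\tfrac{d+1}{d-1}$.

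The main obstacle is purely a matter of guessing the correct $Y$; once the form is written down, both the feasibility constraint and the positivity check are elementary. The parallel to the isotropic case of Proposition \ref{Prop:Additivty_Max_versus_Orthogonal} is transparent: applying $\Gamma$ interchanges the roles of $F$ and $d\,\Phi$, so the Werner dual certificate is essentially the partial transpose of the isotropic one, and all spectral computations descend from the single one-dimensional subspace on which $\Phi$ is supported.
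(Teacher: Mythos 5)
Your proposal is correct and matches the paper's proof essentially line for line: the same product measurement $L^{\otimes n}$ for the lower bound, and the identical dual-feasible certificate $\lambda = \bigl(\tfrac{d+1}{d-1}\bigr)^n$, $X=0$, $Y = \tfrac{1}{d^n(d-1)^n}\bigl[(1_{AB}+d\Phi)^{\otimes n}-(1_{AB}-d\Phi)^{\otimes n}\bigr]$ for the upper bound via Corollary \ref{Lem:Dual_PPT_Max}, with the same eigenvalue check $(1+d)^k-(1-d)^k\geq 0$. Your added remarks (the explicit use of $F^\Gamma = d\Phi$, the monotonicity-in-$\alpha$ argument collapsing the whole family, and the observation that the Werner certificate is the partial transpose of the isotropic one) are all consistent with, and slightly more explicit than, what the paper writes.
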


\begin{proof} We use the same proof strategy as for Proposition \ref{Prop:Additivty_Max_versus_Orthogonal}. 

    For the lower bound, we just note that the product POVM \(L^{\otimes n}\), where \(L\) is defined as in the proof of Proposition \ref{Prop:Antisymmetric_versus_Symmetric}, achieves the given value.

    For the upper bound, we use the dual characterization of \(D_{\max}^\text{PPT}\). For this example, we pick the dual-feasible point 
    \begin{align}
        \lambda &= \left(\frac{d+1}{d-1} \right)^n, & X &= 0, & Y &= \frac{1}{d^n(d-1)^n} \left[ \left(1_{AB} + d \Phi \right)^{\otimes n} - \left( 1_{AB} - d \Phi \right)^{\otimes n} \right]
    \end{align}
    to obtain the matching upper bound. The feasibility of \(\lambda\) and \(X\) is clear. Moreover, observe that \(Y\) has eigenvalues
    \begin{equation}
        \frac{1}{d^n(d-1)^n} \bigg[ \left(1 + d\right)^{k} - \left( 1 - d \right)^{k} \bigg] \geq 0 \; \; \text{for} \; \; 0 \leq k  \leq n
    \end{equation}
    and thus \(Y\) is positive semi-definite. Lastly, we have that 
    \begin{align}
        Y^\Gamma &= \frac{1}{d^n(d-1)^n} \left[ \left(1_{AB} + F \right)^{\otimes n} - \left( 1_{AB} -F \right)^{\otimes n} \right] \\
        &= \left(\frac{d+1}{d-1}\right)^n \left(\frac{1_{AB} + F}{d(d+1)} \right)^{\otimes n} - \left( \frac{1_{AB} -F}{d(d-1)} \right)^{\otimes n} = \lambda \Theta^{\otimes n} - (\Theta^\perp)^{\otimes n} \, .
    \end{align}
    Therefore, this point is indeed dual feasible and the proof is complete.
\end{proof}

We can generalize this proof technique to investigate the additivity properties of the locally-measured max-divergence for the general case \(\rho=\dutchcal{w}(p)\) and \(\sigma=\dutchcal{w}(q)\).

\begin{proposition}\label{Prop:Additivity_Antisymmetric_versus_Werner} Let \( \mathcal{M} = \{ \text{LO}, \text{LOCC}_1, \text{LOCC}, \text{SEP}, \text{PPT} \} \), \(p \in [0,1]\) and \(q \in [0,1]\). With definitions as above, we have
\begin{align}
    D_{\max}^\mathcal{M} \left( \dutchcal{w}(p) ^{\otimes n} \middle\| \dutchcal{w}(q) ^{\otimes n} \right)  =  n \log \left(\frac{d+1-2p}{d+1-2q}\right), \, 
\end{align} 
provided one of the following conditions 
\begin{enumerate}
    \item \(p \geq \frac{1}{2}\) (separable), \(q \geq \frac{1}{2}\) (separable),  and \(p\leq q\)
    \item \(p \leq \frac{1}{2}\) (entangled), \(q\geq \frac{1}{2}\) (separable), and \((2p-1)(2q-1) \leq d(p+q-1)\)
    \item \(p\leq \frac{1}{2}\) (entangled), \(q \leq \frac{1}{2}\) (entangled),  and \(p=q\), i.e.\ they have to be equal.
\end{enumerate}
\end{proposition}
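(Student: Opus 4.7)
The strategy mirrors Proposition~\ref{Prop:Additivity_Max_Divergence}: establish the lower bound via the local basis measurement and match it with a dual-feasible point for the PPT dual program of Corollary~\ref{Lem:Dual_PPT_Max}.

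For the lower bound, I apply the tensor power $L^{\otimes n}$ of the local basis measurement $L^{(i,j)} = \ketbra{i}{i}_A \otimes \ketbra{j}{j}_B$ already used in Proposition~\ref{Prop:Antisymmetric_versus_Werner}. Only two distinct single-copy probability ratios appear: $p/q$ on diagonal outcomes ($i=j$) and $(d+1-2p)/(d+1-2q)$ on off-diagonal ones. A short rearrangement shows the second ratio dominates exactly when $p \leq q$, which holds by assumption in case~(1), via $p \leq 1/2 \leq q$ in case~(2), and trivially in case~(3). Hence $D_{\max}(\mu_{\dutchcal{w}(p)}^L \| \mu_{\dutchcal{w}(q)}^L) = \log\bigl((d+1-2p)/(d+1-2q)\bigr)$, and additivity of the classical max-divergence on product distributions supplies the desired lower bound on $D_{\max}^{\text{LO}}$.

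For the upper bound, I propose the dual point $\lambda = ((d+1-2p)/(d+1-2q))^n$, $X = 0$, and $Y^\Gamma = \lambda\, \dutchcal{w}(q)^{\otimes n} - \dutchcal{w}(p)^{\otimes n}$. The equality constraint and $X \geq 0$ are immediate; everything reduces to verifying $Y \geq 0$. Writing $\dutchcal{w}(p) = \bigl[(d+1-2p)\,1_{AB} - (d+1-2pd)\,F\bigr]/(d(d^2-1))$ and using $F^\Gamma = d\Phi$, the single-copy partial transpose $(\dutchcal{w}(p))^\Gamma$ is diagonal with eigenvalue $(2p-1)/d$ on the one-dimensional range of $\Phi$ and $(d+1-2p)/(d(d^2-1))$ on its orthogonal complement. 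The eigenvalues of $Y$ are indexed by the number $k \in \{0,1,\ldots,n\}$ of tensor slots landing on the range of $\Phi$; after substituting $\lambda$ and cancelling common positive factors, the positivity condition reduces to
\begin{align*}
\bigl[(d+1-2p)(2q-1)\bigr]^k \;\geq\; \bigl[(d+1-2q)(2p-1)\bigr]^k \qquad \text{for all } k \in \{0,1,\ldots,n\}.
\end{align*}

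The remaining step, and the principal obstacle, is a case analysis on the signs of the two bases. In case~(1) both are non-negative, and the identity $(d+1-2p)(2q-1) - (d+1-2q)(2p-1) = 2d(q-p)$ yields the single-power inequality, from which the $k$-th power follows. In case~(2) the right base is non-positive while the left is non-negative, so odd $k$ are automatic; for even $k \geq 2$ the inequality between absolute values rearranges to exactly the stated hypothesis $(2p-1)(2q-1) \leq d(p+q-1)$. In case~(3) both bases are non-positive; odd $k$ force $p \leq q$ and even $k \geq 2$ force $p \geq q$, together compelling $p = q$. Matching the upper and lower bounds then completes the proof.
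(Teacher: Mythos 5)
Your proposal is correct and follows essentially the same route as the paper's proof: the lower bound via the product local basis measurement $L^{\otimes n}$ (with the ratio comparison reducing to $p\leq q$), and the upper bound via the same dual-feasible point $\lambda=((d+1-2p)/(d+1-2q))^n$, $X=0$, $Y^\Gamma=\lambda\,\dutchcal{w}(q)^{\otimes n}-\dutchcal{w}(p)^{\otimes n}$, whose eigenvalue positivity yields exactly the stated case conditions. The sign-based case analysis, including the identity $(d+1-2p)(2q-1)-(d+1-2q)(2p-1)=2d(q-p)$, matches the paper's argument.
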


\begin{proof}
    The proof works similar to that of Proposition \ref{Prop:Additivity_Max_Divergence}. 

    For the lower bound, we consider the tensor product measurement \(L^{\otimes n}\) as used in Proposition \ref{Prop:Additivity_Antisymmetric_versus_Symmetric}. This gives the bound
\begin{equation}
    D_{\max}^\text{LO} \left( \dutchcal{w}(p)^{\otimes n} \middle\| \dutchcal{w}(q)^{\otimes n} \right) \geq n D_{\max} \left( \mu_{\dutchcal{w}(p)}^L \middle\| \mu_{\dutchcal{w}(q)}^L\right) = n \log \left(\frac{d+1-2p}{d+1-2q}\right) \, ,
\end{equation}
where the final equality holds under the condition that \( p \leq q\). 

For the upper bound, we use the dual characterization of \(D_{\max}^\text{PPT}\) from Corollary \ref{Lem:Dual_PPT_Max}. We pick the dual point is given by
\begin{align}
    \lambda &= \left(\frac{d+1-2p}{d+1-2q}\right)^n, & X &= 0, & Y^\Gamma = \lambda \dutchcal{w}(q)^{\otimes n} - \dutchcal{w}(p)^{\otimes n} \, .
\end{align}

Let us investigate when \(Y\) is positive semi-definite. For this, note that we have
\begin{align}
    Y= \lambda \left( \frac{(d+1-2q)   1_\dutchcal{H} + (2qd-(d+1)) d\Phi}{d(d^2-1)}\right)^{\otimes n}- \left( \frac{ (d+1-2p) 1_\dutchcal{H} + (2pd-(d+1)) d\Phi }{d(d^2-1)} \right)^{\otimes n}
\end{align}
The eigenvalues of \(Y\) are of the form
\begin{align}
    \frac{1}{d^n(d^2-1)^{n-k}}\left( \lambda (2q-1)^k(d+1-2q)^{n-k}-(2p-1)^k(d+1-2p)^{n-k}\right) \; \; \text{with} \; \; 0\leq k\leq n \, .
\end{align}
The requirement of non-negativity then translates into
\begin{equation}
    \left(\frac{d+1-2p}{d+1-2q}\right)^k (2q-1)^k \geq (2p-1)^k \; \; \text{for} \; \; 0\leq k\leq n
\end{equation}

We have the following cases:
\begin{enumerate}
    \item If \(p\geq \frac{1}{2}\) and \(q \geq \frac{1}{2}\), these conditions are satisfied if \(p\leq q\). 

    \item If \( p \leq \frac{1}{2}\) and \(q \geq \frac{1}{2}\), they reduce to the requirement that \((2p-1)(2q-1) \leq d(p+q-1)\)

    \item If \( p \leq \frac{1}{2}\) and \(q \leq \frac{1}{2}\), they only hold in the trivial case \(p=q\).
\end{enumerate}
Hence, under the conditions given in the proposition we have by weak duality that
\begin{align}
     D_{\max}^\text{PPT} \left( \dutchcal{w}(p)^{\otimes n} \middle\| \dutchcal{w}(q)^{\otimes n} \right) \leq n \log \lambda =n \log \left(\frac{d+1-2p}{d+1-2q}\right),
\end{align}
which gives the matching upper bound and completes the proof.
\end{proof}


\subsection{Application to Hypothesis Testing}\label{Sec:Hypothesis_Testing_Werner}

Combining the results of Propositions \ref{Prop:Anti_versus_Werner} and \ref{Prop:Additivity_Antisymmetric_versus_Werner} then allows us to extend results for the extremal Werner states obtained in \cite{Matthews2} (see also \cite[Table 1]{Cheng}). Namely, we have by similar arguments as for the isotropic states that 
\begin{equation}
    \zeta_\text{Stein}^\mathcal{M}\left(\Theta^\perp,\dutchcal{w}(q);\varepsilon\right) = \log \left(\frac{d+1}{d+1-2q}\right) \; \; \text{if} \; \; q \geq \frac{d+1}{d+2}\, 
\end{equation}
and 
\begin{equation}
    \zeta_\text{SC}^\mathcal{M}\left(\Theta^\perp,\dutchcal{w}(q);\varepsilon\right) = r - \log \left(\frac{d+1}{d+1-2q}\right) \; \; \text{if} \; \; q \geq \frac{d+1}{d+2}\, 
\end{equation}
for all \(r \geq \log \left(\frac{d+1}{d+1-2q}\right)\). 


\end{document}